\documentclass[11pt,letterpaper]{article}

\usepackage{amsmath,amsthm,amsfonts,amssymb}
\usepackage{thm-restate}
\usepackage{fullpage}
\usepackage[utf8]{inputenc}
\usepackage[dvipsnames]{xcolor}
\usepackage{xspace,enumerate}
\usepackage[shortlabels]{enumitem}
\usepackage[hypertexnames=false,colorlinks=true,urlcolor=Blue,citecolor=Green,linkcolor=BrickRed]{hyperref}
\usepackage[capitalise]{cleveref}
\usepackage[OT4]{fontenc}
\usepackage[ruled]{algorithm}
\usepackage[noend]{algorithmic}
\usepackage{ifpdf}
\usepackage{todonotes}
\usepackage{enumerate}
\usepackage{authblk}
\usepackage{cite}
\usepackage{thmtools}

\usepackage[margin=1in]{geometry}

\def\polylog{\operatorname{polylog}}

\title{Improved Strongly Polynomial Algorithms for Deterministic MDPs,\\2VPI Feasibility, and Discounted All-Pairs Shortest Paths}
\date{}
\author[1]{Adam Karczmarz\thanks{\texttt{a.karczmarz@mimuw.edu.pl}. Supported by the ERC Consolidator Grant 772346 TUgbOAT.}}
\affil[1]{Institute of Informatics, University of Warsaw and IDEAS NCBR, Poland}

\DeclareMathOperator*{\argmax}{argmax}

\newcommand{\Ot}{\ensuremath{\widetilde{O}}}
\newcommand{\eps}{\ensuremath{\epsilon}}
\newcommand{\dist}{\delta}

\theoremstyle{plain}
\newtheorem{theorem}{Theorem}[section]
\newtheorem{lemma}[theorem]{Lemma}

\newtheorem{observation}[theorem]{Observation}
\newtheorem{definition}[theorem]{Definition}

\newtheorem{remark}[theorem]{Remark}

\newtheorem{problem}[theorem]{Problem}

\begin{document}

\maketitle

\begin{abstract}
\small
  We revisit the problem of finding optimal strategies for deterministic
  Markov Decision Processes (DMDPs), and a closely related problem of
  testing feasibility of systems of $m$ linear
  inequalities on $n$ real variables with at most two variables per inequality (2VPI).
  
  We give a randomized trade-off algorithm solving both problems and running
  in \linebreak $\Ot(nmh+(n/h)^3)$ time using $\Ot(n^2/h+m)$ space for any
  parameter $h\in [1,n]$. In particular, using subquadratic space we get
  $\Ot(nm+n^{3/2}m^{3/4})$ running time, which improves by a polynomial factor upon all
  the known upper bounds for non-dense instances with $m=O(n^{2-\eps})$.
  Moreover, using linear space we match the randomized $\Ot(nm+n^3)$ time bound
  of Cohen and Megiddo [SICOMP'94] that required $\tilde{\Theta}(n^2+m)$ space.

  Additionally, we show a new algorithm for the Discounted All-Pairs Shortest Paths problem,
  introduced by Madani~et~al. [TALG'10], that extends the DMDPs
  with optional end vertices.
  For the case of uniform discount factors, we give a deterministic algorithm running
  in $\Ot(n^{3/2}m^{3/4})$ time, which improves significantly upon
  the randomized bound $\Ot(n^2\sqrt{m})$ of Madani~et~al.

\end{abstract}

\setcounter{page}{1}

\newcommand{\xmax}[1]{{x_{#1}^{\max}}}
\newcommand{\xmin}[1]{{x_{#1}^{\min}}}

\section{Introduction}\label{s:intro}

Linear programming is one of the most fundamental problems in optimization and algorithm design.
Polynomial-time algorithms solving linear programming relying on linear algebraic techniques have been known
for decades~\cite{Karmarkar84, khachiyan1979}, and the recently developed solutions run in the
current matrix multiplication time~\cite{Brand20, CohenLS21, JiangSWZ21}.
All the known polynomial algorithms are \emph{weakly polynomial}. This means
the number of arithmetic operations they perform depends on the magnitude of
the numbers used to describe the $m$ constraints and $n$ variables of a linear program (LP).
It remains a challenging open problem whether there exists a \emph{strongly polynomial}~\cite{Megiddo83b} algorithm
whose worst-case number of arithmetic operations can be bounded by a polynomial in $n$ and $m$ exclusively.
The quest for such an algorithm led to strongly polynomial algorithms
for some important restricted classes of LPs, such as shortest paths,
minimum cost flows~\cite{Orlin93, Tardos85}, ``combinatorial'' LPs~\cite{Tardos86, DadushNV20}, generalized maximum flows~\cite{OlverV20, Vegh17},
and (discounted) Markov Decision Processes~\cite{HansenMZ13, Ye11a}.

One of the simplest classes of LPs for which  no strongly polynomial
algorithm is known is the class of LPs such that each constraint involves at most
two variables -- so-called \emph{two variables per inequality (2VPI)} programs.
Optimizing 2VPI programs can be reduced to solving the uncapacitated generalized minimum cost flow problem,
for which a \emph{combinatorial} (i.e., not relying on algebraic techniques)
weakly polynomial algorithm is known~\cite{Wayne02}.\footnote{Wayne's algorithm~\cite{Wayne02} in fact works
even for the capacitated generalized minimum cost flow problem.}
Interestingly, strongly polynomial algorithms are known
for a seemingly only slightly easier problem of testing the \emph{feasibility} of a 2VPI program,
where there is no objective function to optimize
and any solution satisfying the constraints is sought.

In this paper, we revisit the \emph{2VPI feasibility} problem and a closely related problem
of (discounted) \emph{Deterministic Markov Decision Processes} (DMDPs) in the strongly polynomial regime.

\paragraph{2VPI feasibility.}
The 2VPI feasibility problem has a rich history, and first weakly polynomial time
algorithms~\cite{aspvall1981efficient, AspvallS80} have been developed around the same time as
Khachiyan's breakthrough weakly polynomial algorithm for solving general LPs~\cite{khachiyan1979}. Megiddo~\cite{Megiddo83b} initiated the
quest for strongly polynomial~linear programming algorithms and obtained an $\Ot(mn^3)$ time strongly
polynomial algorithm for 2VPI feasibility using his \emph{parametric search} technique~\cite{Megiddo83a}.
Subsequently, Cohen and Megiddo~\cite{CohenM94} found faster strongly polynomial algorithms:
a deterministic one running in $O(mn^2(\log^2{n}+\log{m}))$, and a randomized one running
in $\Ot(n^3+mn)$ time. Hochbaum and Naor~\cite{HochbaumN94} showed a slightly faster
deterministic algorithm running in $O(mn^2\log{m})$ time using a different approach.
These algorithms~\cite{CohenM94, HochbaumN94} remain the current state of the art
and all rely on parametric search in some way.
However, their drawback is that they use $\Theta(n^2+m)$ space
which seems inherent (as will be explained later on).
This stands in sharp contrast with the earlier algorithm of Megiddo~\cite{Megiddo83b} which needs linear space.

Recently, Dadush~et~al.~\cite{dadush2021accelerated} showed a strongly polynomial algorithm
for 2VPI that instead uses a label-correcting
approach reminiscent of the classical Bellman-Ford algorithm. However, this algorithm
runs in $O(m^2n^2)$ time, and thus is slower by at least a linear factor than all the known parametric
search-based methods.

\paragraph{Monotone 2VPI systems and graphs.}

It is well-known (see e.g.,~\cite{dadush2021accelerated, HochbaumMNT93}) that testing 2VPI feasibility can be reduced to the \emph{monotone} case (M2VPI)
where each inequality has at most one positive coefficient and at most one negative coefficient.
Let $(x)_{v\in V}$ be the variables, and let us index the inequalities
with the elements of a set~$E$ of size~$m$.
Then, after normalization, each inequality $e$ with variables $x_u,x_v$ (denoted $uv$) is of the form
\begin{equation}\label{eq:system}
  x_u\leq c(e)+\gamma(e)\cdot x_v \hspace{3mm}\text{for all }uv=e\in E
\end{equation}
where $c:E\to \mathbb{R}$ and $\gamma:E\to \mathbb{R}_{+}$. Clearly, such a system
of inequalities can be conveniently represented using a directed multigraph $G=(V,E)$
with $n$ vertices and $m$ edges.\footnote{Throughout (in particular in~(\ref{eq:system})), we use the notation $uv=e$ or $e=uv$ to pinpoint that
the tail and the head of the edge $e$ are $u$ and $v$ respectively. We stress that we allow many edges $e$ with the same tail and head and thus
$m$ might exceed $n^2$, e.g., $m$ might be exponential in $n$.}

M2VPI systems, if feasible, have a unique 
\emph{pointwise maximal} (\emph{pointwise minimal}) solution $\xmax{}=(\xmax{v})_{v\in V}$ ($\xmin{}=(\xmin{v})_{v\in V}$, respectively).
This means that for all feasible points $x$ we have $x_v\leq \xmax{v}$ ($x_v\geq \xmin{v}$, respectively) for every $v\in V$.
It might be the case that some $\xmax{v}$ equal $\infty$ (or $\xmin{v}$ equal $-\infty$) if the feasibility region is unbounded.

Interestingly, variants of some of the known 2VPI feasibility algorithms \linebreak (see e.g.~\cite{CohenM94, dadush2021accelerated})
also find the pointwise maximal (and symmetrically, pointwise minimal)
solution in the monotone case.

\paragraph{Deterministic Markov Decision Processes.}
Markov Decision Processes (MDPs) are a popular model for sequential decision making
under uncertainty. Given is a set of \emph{states} $V$ and a set of actions $E$
of the form $e=(s(e),p(e),c(e),\gamma(e))$, where $s(e)\in V$, $p(e)$ is a probability distribution
over $V$, $c(e)\in\mathbb{R}$, and $\gamma(e)\in (0,1)$ ($\gamma(e)$ is usually called a \emph{discount factor}). At a state $v\in V$, the \emph{controller} of an MDP
has to pick an action $e\in E$ with $s(e)=v$ (which is assumed to exist). Doing so, the controller gets an immediate reward $c(e)$.
Then, with probability $1-\gamma(e)$, the process stops. Otherwise, the controller gets transferred
to another state $w\in V$ with probability $p(e)(w)$.
The goal is to find a \emph{policy} (i.e., a strategy) for the controller that allows him to maximize the total
reward in this infinite stochastic process.
The optimal strategy for the MDPs is always deterministic and history independent~(see e.g.,~\cite{puterman1990}),
and thus can be described using a function~$\sigma$ mapping each state $v$ to one of the edges $e\in E$ with $s(e)=v$.
By solving an MDP we typically mean finding one such~$\sigma$. It is known that an MDP instance can be solved
using linear programming~\cite{d1963}.
Interestingly, strongly polynomial algorithms are known if the discount factors $\gamma(a)$ are all
equal to a constant $\gamma\in (0,1)$~\cite{HansenMZ13, Ye11a}.

In more restricted deterministic Markov Decision Processes (DMDPs), for each $a\in A$ the distribution $p(a)$
satisfies $p(a)(w)=1$ for some $w\in V$. As a result, the action determines the next state uniquely.
Such an MDP can be represented as a directed graph $G$ with a corresponding edge $e=uv$ for each action $e\in E$ with $s(e)=u$ and $p(e)(v)=1$.
The optimal policy $\sigma$ maps each vertex to one of its outgoing edges.
An optimal policy for a DMDP can be obtained by finding the pointwise maximal solution to the system~\eqref{eq:system},
which, since $\gamma(e)\in (0,1)$ for all $e\in E$, always exists and is bounded (for a proof, see e.g.,~\cite{MadaniTZ10}).

Currently, the most efficient known strongly polynomial algorithms for solving DMDPs in full generality
are those testing 2VPI feasibility~\cite{CohenM94, HochbaumN94} (see also~\cite{madani2000complexity}).
Less efficient strongly polynomial algorithms that avoid parametric search have also been described~\cite{dadush2021accelerated, HansenKZ14}.

The special case when all discount factors $\gamma(e)$ are uniform (equal to $\gamma$, where $\gamma$ can be a parameter,
not necessarily a constant) gained a lot of attention.
Madani~et~al.~\cite{MadaniTZ10} showed an $O(mn)$ time algorithm for this case.
Note that improving upon this bound would be a large breakthrough since $O(nm)$ remains the best
known strongly polynomial bound even for the simpler problem of negative cycle detection.
Weaker strongly polynomial upper bounds can be obtained by analyzing the performance of the simplex algorithm on such DMDPs~\cite{PostY15, HansenKZ14}.

\paragraph{Discounted APSP.}
Madani~et~al.~\cite{MadaniTZ10} also studied the following extension
of the DMDPs. Suppose the controller is allowed to end the process
after entering a fixed state $t\in V$, which is equivalent to adding a zero-cost loop action to $t$.
The \emph{discounted all-pairs shortest paths} problem asks to find an optimal policy for every possible
start/end state pair $s,t\in V$.

Besides of the motivation arising from DMDPs, the discounted APSP problem is interesting in its
own as a natural generalization of the standard APSP problem. In this generalization, the cost of a path 
$P=eQ$ for some $e\in E$ is given by $c(P)=c(e)+\gamma(e)\cdot c(Q)$ (for $\gamma(e)\leq 1$).
Note that in the standard APSP problem we have $\gamma(e)=1$ for all $e\in E$.

The discounted APSP problem turns out to be challenging because
the shortest paths do not possess the optimal substructure property anymore (except for the path suffixes)
and, as a result, the standard $\Ot(nm)$ or $O(n^3)$ time algorithms for APSP fail (see~\cite{MadaniTZ10}).
However, one can quite easily solve the problem in $O(mn^2)$ time by applying
a reverse Bellman-Ford algorithm $n$ times.

Madani~et~al.~\cite{MadaniTZ10} showed that despite the additional challenges,
in the \emph{uniform} case (when $\gamma(e)$ is equal for all $e$,
and thus the cost of a path $P=e_0\cdots e_k$ is given by $c(P)=\sum_{i=0}^k c(e)\cdot\gamma^i$),
the discounted APSP problem can be solved in $\Ot(n^2\sqrt{m})$ time with a randomized algorithm.
Note that this bound is $\Ot(n^3)$ for all $m$ and $\Ot(n^{2.5})$ for sparse graphs.\footnote{In the uniform case, it does not make sense to consider the case $m>n^2$.}
They asked whether there exists an $\Ot(nm)$ time algorithm for discounted APSP in the uniform case whose
performance would match, up to polylogarithmic factors, the known algorithms solving the standard APSP problem.

\subsection{Our results}

First of all, we show a strongly polynomial time-space trade-off algorithm for computing
a pointwise maximal solution to a monotone 2VPI system.

\begin{restatable}{theorem}{tvpi}\label{t:main}
  For any $h\in [1,n]$,
  there is a Las Vegas algorithm solving a monotone 2VPI system in $\Ot(mnh+(n/h)^3)$ expected time
  and $O(m)+\Ot(n^2/h)$ space. If the system is feasible, the algorithm computes the pointwise
  maximal solution. The time bound holds with high probability.
\end{restatable}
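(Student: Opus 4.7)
The plan is to combine the parametric-search framework of Cohen and Megiddo~\cite{CohenM94} with a new phased strategy that trades additional Bellman-Ford work for reduced parametric-search state. Cohen and Megiddo solve M2VPI in $\Ot(n^3+nm)$ time and $\Ot(n^2+m)$ space by simulating an $\Ot(n)$-depth, $\Ot(n^2)$-work parallel algorithm whose branching decisions are resolved by a sequential feasibility oracle running in $\Ot(nm)$ time; the $\Ot(n^2)$ space is essentially forced by storing a round's $\Ot(n^2)$ active comparisons. The goal is to keep only $\Ot(n^2/h)$ such state resident at any time, at the price of invoking the $\Ot(nm)$ oracle on the order of $h$ times rather than $O(1)$ times.

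First I would recast the problem so that pinning down $\xmax{v}$ for a given vertex $v$ reduces to a root-finding task solvable by parametric search with a Bellman-Ford–style feasibility oracle, and so that for any subset $S\subseteq V$ of $k$ vertices one can extract from $G$ a compressed gain-graph on $S$ that encodes the best walks in $G$ between elements of $S$, and solve the residual M2VPI on $S$ in $\Ot(k^3)$ time via Cohen-Megiddo on this compressed instance. Then I would organize the main algorithm into $h$ phases. Each phase (i) performs a controlled number of Bellman-Ford-style relaxation sweeps at cost $\Ot(nm)$, and (ii) finalizes $\xmax{}$ on a randomly sampled pivot set $S$ of size $\approx n/h$, represented as a compressed gain graph of size $\Ot(n^2/h)$. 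A Las-Vegas sampling argument then guarantees that the union of pivots after $h$ phases covers $V$ with high probability, so the total relaxation work is $\Ot(mnh)$ while the cumulative parametric-search work on the compressed instances telescopes to $\Ot((n/h)^3)$.

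The main obstacle will be proving correctness of finalization: one must show that after the Bellman-Ford sweeps within a phase the compressed graph on $S$ already carries exactly the information needed to extract the true pointwise maximal values on $S$, even though the non-pivot vertices are only partially settled and may still need to be updated in later phases. A secondary difficulty is engineering the data structures so that the compressed gain graph can be maintained as pivots get eliminated and rebuilt between phases within the $\Ot(n^2/h)+O(m)$ space budget, avoiding the explicit storage of the full $\Ot(n^2)$ gain matrix that underlies the Cohen-Megiddo space usage. The Las-Vegas randomization in pivot selection is the natural mechanism both for balancing the phases and for obtaining the claimed high-probability time bound.
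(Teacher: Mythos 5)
Your high-level intuition—combine Cohen--Megiddo with random pivot sampling, use a compressed gain graph on the pivots, and build that graph with parametric search inside a Bellman--Ford budget—is pointing in roughly the right direction, but the concrete plan as written does not yield the claimed bound, and it omits the structural idea that actually makes the paper's argument work.

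The budget does not telescope. You propose $h$ phases, each solving a compressed M2VPI instance on a pivot set of size $\approx n/h$ via Cohen--Megiddo, so that after $h$ phases the pivots cover $V$. Each such solve costs $\Ot((n/h)^3)$, so over $h$ phases you get $\Ot(n^3/h^2)$, not $\Ot((n/h)^3)$. Plugging in the optimizing $h=n^{1/2}/m^{1/4}$ gives $\Ot(n^2 m^{1/2})$, which for sparse $m$ is strictly worse than the target $\Ot(n^{3/2}m^{3/4})$. You assert that the parametric-search work ``telescopes,'' but give no mechanism for the compressed instances to shrink across phases, and with pivot sets of a fixed size $n/h$ there is no reason they would. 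The paper avoids this entirely by solving a \emph{single} compressed instance.

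More fundamentally, your plan is missing the short-cycle/long-cycle dichotomy that drives both correctness and the time bound. Shostak's characterization says $\xmax{v}$ is certified by a path to some cycle $C$ with $\gamma(C)<1$. If $C$ is short (fewer than $h$ edges), a random pivot set of size $\Ot(n/h)$ need not hit it at all, and a compressed gain graph whose edges encode only length-$\leq h$ walks between pivots cannot represent the relevant cycle bound unless consecutive pivots on $C$ are at most $h$ edges apart—which again fails for short or unsampled cycles. The paper handles this by (i) running $O(\log h)$ phases of the simple linear-space algorithm to compute $k$-cycle bounds on sampled vertices, covering all distinguished cycles of length $\leq h$, and (ii) for long cycles, sampling $S$ densely enough that \emph{every} $(h-1)$-edge subpath of every long distinguished cycle is hit, so the compressed instance $G'$ on $S$ provably reproduces $\xmax{s}$ for $s\in S^*$ (Lemma 5.5). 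Your ``main obstacle'' paragraph gestures at exactly the correctness gap that this machinery closes, but leaves it unresolved, and the covering-$V$-with-pivots strategy does not supply the needed hitting property for cycle subpaths.

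Finally, a technical point: you will not be able to build the compressed graph within the $\Ot(nmh)$ budget by naive parametric search. Each edge $st$ of $G'$ requires a binary search guided by a location oracle, and running $|S|$ independent binary searches (one per target $t$) would cost $\Ot(|S|\cdot nm\cdot h)=\Ot(n^2 m)$. The paper's construction crucially relies on Cohen--Megiddo's parallel location lemma (Lemma 5.2 here), which decides $\xmax{v}<\xi_v$ for \emph{all} $v$ with \emph{per-vertex} thresholds $\xi_v$ simultaneously in $\Ot(nm)$ expected time; synchronizing the binary searches across all targets behind this lemma is what brings the construction cost down to $\Ot(nmh)$ in $\Ot(n^2/h)$ space. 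Without that primitive, your phase cost analysis does not hold up even before the telescoping issue.
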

Recall that due to known reductions~\cite{dadush2021accelerated, HochbaumMNT93}, the algorithm behind Theorem~\ref{t:main} can be used
to solve the 2VPI feasibility problem in full generality and also to compute an optimal policy
for DMDPs.
In particular, by a suitable choice of the parameter $h$, we conclude the following.
\begin{restatable}{corollary}{specific}
There exist Las Vegas algorithms solving monotone 2VPI systems:
  \begin{enumerate}
    \item in $\Ot(nm+n^{3/2}m^{3/4})$ expected time using $\Ot(n^{3/2}m^{1/4})+O(m)$ space,
    \item in $\Ot(nm+n^3)$ expected time using $O(n+m)$ space.
  \end{enumerate}
\end{restatable}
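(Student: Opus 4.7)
The plan is to derive both parts by instantiating the time--space trade-off of Theorem~\ref{t:main} with a suitable choice of $h\in[1,n]$; once $h$ is fixed, everything else is arithmetic. The only interesting step is picking $h$ to minimize the relevant resource.

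For part (1), I would balance the two summands of the time bound $\Ot(mnh+(n/h)^{3})$. Setting them equal gives $h=n^{1/2}/m^{1/4}$, which lies in $[1,n]$ exactly when $m\le n^{2}$. In that regime, taking $h=\lceil n^{1/2}/m^{1/4}\rceil$ makes each summand equal to $\Ot(n^{3/2}m^{3/4})$ and the space equal to $\Ot(n^{3/2}m^{1/4})+O(m)$, which matches the statement. For $m>n^{2}$ the balanced $h$ drops below $1$, so I would fall back to $h=1$: the time becomes $\Ot(nm+n^{3})=\Ot(nm)$ (subsumed by the $nm$ term in the statement, since $n^{3}\le nm$ in this range) and the space becomes $\Ot(n^{2})+O(m)=O(m)$, which is within $\Ot(n^{3/2}m^{1/4})+O(m)$ because $n^{3/2}m^{1/4}\le m$ whenever $m\ge n^{2}$. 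The two regimes together yield the bound in~(1).

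For part (2), I would push $h$ all the way to $n$, so that the space bound $\Ot(n^{2}/h)+O(m)$ collapses to $O(n+m)$ (folding the polylogarithmic factor into the input-size term $m$, which one may safely assume is at least $n$). The subtle point is the running time: substituting $h=n$ into the generic bound $\Ot(mnh+(n/h)^{3})$ yields only $\Ot(n^{2}m)$, which is looser than the claimed $\Ot(nm+n^{3})$. To recover the stronger bound I expect one needs a sharper accounting inside the algorithm underlying Theorem~\ref{t:main} in the extreme regime $h=\Theta(n)$: intuitively, the factor $h$ in ``$mnh$'' counts iterations of an outer loop whose cumulative edge work saturates at $\Ot(nm)$ once $h$ reaches $n$, rather than continuing to grow linearly in $h$. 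Pinning down this saturation argument inside the proof of Theorem~\ref{t:main}---and checking that it preserves the high-probability Las Vegas guarantee---is the main obstacle I foresee beyond the bare calculation; if that sharpening is unavailable, the alternative is to use $h=1$ (which already gives $\Ot(nm+n^{3})$ time with $\Ot(n^{2})+O(m)$ space) and compress the $\Ot(n^{2})$ workspace by recomputation or checkpointing down to $O(n)$.
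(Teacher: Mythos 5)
Your derivation of part~(1) is correct and matches the paper's: pick $h=\max(1,n^{1/2}/m^{1/4})$, balance $mnh$ against $(n/h)^3$, and handle $m>n^2$ by falling back to $h=1$.

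Part~(2), however, has a genuine gap. You set $h=n$, correctly observe that this yields only $\Ot(mn^2)$ time, and then speculate that either (i) the $mnh$ term ``saturates'' at $\Ot(nm)$ once $h$ reaches $n$, or (ii) one could take $h=1$ and compress the $\Ot(n^2)$ workspace by recomputation. Neither is what is needed, and neither is justified: the $\Ot(nmh)$ cost comes from running $h$ levels of the sampling phases of Section~\ref{s:simple} plus the parametric-search construction of the contracted graph $G'$ in Lemma~\ref{l:dense}, and there is no obvious reason this work caps at $\Ot(nm)$; likewise, nothing in Theorem~\ref{t:cohen} or Lemma~\ref{l:dense} suggests the quadratic workspace can be recomputed away for $h=1$. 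The actual fix is much simpler and stays entirely within the trade-off: choose $h$ so that $mnh=\Ot(n^3)$, i.e.\ $h=\Theta(n^2/m)$. Concretely, writing the space bound of Theorem~\ref{t:main} as $O((n^2/h)\log^c n)+O(m)$, take $h=\max\bigl(1,(n^2\log^c n)/m\bigr)$. Then $h\in[1,n]$ (assuming $m=\Omega(n\log^c n)$), $mnh=\Ot(n^3)$, $(n/h)^3=\bigl(m/(n\log^c n)\bigr)^3\le nm$ because $m\le n^2\log^c n$, and the space $O((n^2/h)\log^c n)+O(m)=O(m)=O(n+m)$. In the residual sparse regime $m=O(n\log^c n)$ one bypasses the trade-off entirely and uses the simple linear-space algorithm of Lemma~\ref{l:simple}, whose $O(mn^2\log m)$ running time is then $\Ot(n^3)$. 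No sharper internal accounting of the algorithm is required.
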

The former bound improves upon the previous best randomized bound $\Ot(n^3+nm)$ of Cohen and Megiddo~\cite{CohenM94}
by a polynomial factor for non-dense instances with $m=O(n^{2-\eps})$, where $\eps>0$.
This answers an open problem of Madani~et~al.~\cite{MadaniTZ10}.\footnote{Who asked whether one can solve DMDPs in $o(mn^2)$ strongly polynomial time.
Whereas the randomized algorithm of Cohen and Megiddo~\cite{CohenM94} already improves upon $O(mn^2)$ time for most densities,
our algorithm is the first to break the $O(mn^2)$ bound for \emph{sparse} graphs.}
For such instances it also uses subquadratic space, whereas all the previous algorithms with $o(mn^3)$ time required
$\Theta(n^2+m)$ space.
The latter bound matches the previous best bound~\cite{CohenM94} but using \emph{linear} space as opposed
to quadratic in $n$.

We also give an improved algorithm for the discounted APSP problem.
\begin{restatable}{theorem}{discount}
  There exists a deterministic algorithm solving the discounted all-pairs
  shortest paths problem in the uniform case ($\gamma(e)=\gamma$ for all $e\in E$) in $\Ot(n^{3/2}m^{3/4})$ time.
\end{restatable}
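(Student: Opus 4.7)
The plan is to treat uniform discounted APSP as $n$ closely related single-destination M2VPI instances and execute the algorithm of Theorem~\ref{t:main} in a way that amortizes its expensive phase across all destinations.

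First, I would fix a destination $t \in V$ and observe that the optimal discounted distances $d_v$ from every $v$ to $t$ form exactly the pointwise minimal solution of the M2VPI system with inequalities $d_v \le c(e) + \gamma \cdot d_w$ for all edges $e = vw \in E$, where $d_t$ is pinned to $0$ via a zero-cost loop at $t$. Equivalently, this is a uniform-discount DMDP with end vertex $t$, and its value function is exactly what the algorithm behind Theorem~\ref{t:main} computes. A naive application to each of the $n$ destinations would cost $\Ot(n(nmh + (n/h)^3))$, which optimizes to $\Ot(n^{5/2} m^{3/4})$ -- a factor of $n$ too large.

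Next, I would open up the algorithm of Theorem~\ref{t:main} and exploit the fact that the $\Ot((n/h)^3)$ term arises from a dense all-pairs computation on an auxiliary ``pivot'' graph on $O(n/h)$ vertices obtained via a form of contraction. Because this auxiliary graph depends only on $G$, $c$, and $\gamma$ -- not on the pinned destination -- the pivot-to-pivot discounted distance table can be constructed once and reused for every choice of $t$. The key remaining claim is that, once this table is available, each individual destination can be processed with only $\Ot(mh)$ additional propagation work; summed over the $n$ destinations this contributes $\Ot(nmh)$, and together with the one-time $\Ot((n/h)^3)$ preprocessing the total cost drops to $\Ot(nmh + (n/h)^3)$. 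Setting $h = \sqrt{n}/m^{1/4}$ balances both terms at $\Ot(n^{3/2} m^{3/4})$, matching the claimed bound and lying in the admissible range $h \in [1,n]$ whenever $m \leq n^2$.

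For determinism in the uniform-discount case, I would replace the randomized pivot selection used for general discounts by a deterministic greedy hitting-set construction over short vertex-disjoint path fragments. Because all edges attenuate costs at the same rate $\gamma$, the combinatorial counting argument underlying the existence of a small hitting set becomes graph-theoretic and no longer depends on weighted random sampling; such a hitting set can be built deterministically in time absorbed by the other terms.

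The principal obstacle is substantiating the $\Ot(mh)$-per-destination claim. It will require an $h$-covering structural lemma -- essentially that, after the pivots are installed, every optimal discounted path from $v$ to $t$ hits a pivot within its first $O(h)$ edges, so that a label-correcting procedure run for $\Ot(h)$ rounds and using the pivot table as shortcuts suffices to recover all distances into $t$. Proving this covering property specifically in the uniform regime, implementing the propagation so that each destination costs only $\Ot(mh)$ rather than $\Ot(nmh)$, and doing so without reintroducing randomization, is likely the most technically delicate part of the proof.
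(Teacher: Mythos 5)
Your proposal contains several genuine gaps, the most serious of which is the opening reduction. Fixing a destination $t$, adding a zero-cost loop at $t$, and taking the pointwise maximal (or minimal) solution of the M2VPI system $d_v\leq c(e)+\gamma\cdot d_w$ does \emph{not} yield $\dist_G(v,t)$. By Shostak's characterization, the pointwise maximal value at $v$ is the best bound coming from \emph{any} path/cycle pair $(P,C)$ with $\gamma(C)<1$; in the uniform regime \emph{every} cycle has $\gamma(C)<1$, so the 2VPI solution is the unconstrained ``terminate anywhere'' DMDP value, which is independent of $t$. Discounted APSP asks for the cost when you are \emph{forced} to reach the specific $t$, and the paper precisely stresses that this lacks the prefix-substructure that DMDPs have. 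So Theorem~\ref{t:main} applied per destination does not compute what you need, and the whole amortization-over-destinations plan built on it collapses.

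The second gap is the ``pivot graph via contraction'' step. In the trade-off algorithm behind Theorem~\ref{t:main}, the contracted graph $G'$ has edges $(c(Q),\gamma(Q))$ for $h$-edge paths $Q$, so the discounts $\gamma(Q)=\gamma^{|Q|}$ are no longer uniform; $G'$ is a genuinely non-uniform discounted instance. The paper explicitly notes that no $o(mn^2)$ algorithm is known for discounted APSP with non-uniform discounts, and on an $O(n/h)$-vertex, $O((n/h)^2)$-edge $G'$ the best you could do is $O((n/h)^4)$, not $\Ot((n/h)^3)$; balancing that against $nmh$ gives a bound strictly worse than $n^{3/2}m^{3/4}$. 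The paper avoids contraction entirely: it keeps working in the original uniform graph $G$ and instead \emph{recursively shrinks the set of sources}. Concretely, one $h$-round Bellman--Ford pass (Lemma~\ref{l:red1}(1)) plus a deterministic greedy hitting set of $h$-prefixes (Lemma~\ref{l:hitting}, this part you do have right) reduces $\langle V,V\rangle$ to $\langle S_1,V\rangle$ with $|S_1|=\Ot(n/h)$, and then a $O(\log n)$-depth chain of reductions (Lemma~\ref{l:red1}(2)) with \emph{geometrically growing} prefix lengths $h=d^{i+1}$ is used, each built on the lower-envelope data structure of Lemma~\ref{l:lines} whose correctness relies precisely on the slopes being $\gamma^i$. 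That iterative source-set shrinking, the envelope query structure, and the careful doubling of $h$ along the recursion are the mechanisms by which the paper reaches $\Ot(n^{3/2}m^{3/4})$; none of them appear in your plan, and the two ingredients you do propose (M2VPI per $t$; contraction pivot table) are the ones that cannot be made to work.
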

Whereas we do not achieve the $\Ot(nm)$ bound in the uniform case, we improve the bound of Madani~et~al.~\cite{MadaniTZ10}
significantly for sparse graphs; then, our algorithm runs in $\Ot(n^{2+1/4})$ time.
Additionally, our algorithm is deterministic.

Unfortunately, neither our algorithm nor that of Madani~et~al.~\cite{MadaniTZ10} leads to an
$o(mn^2)$ time bound for discounted APSP problem in the non-uniform case.
Obtaining such an algorithm is an interesting open problem. The main obstacle
here is that in the case of varying discount factors it is even unclear
whether one can compute \emph{single-source discounted shortest paths} in $o(mn^2)$ time.\footnote{As opposed to \emph{single-target} shortest discounted paths, which can be computed in $O(mn)$ time using a Bellman-Ford-style procedure.}

\subsection{Technical overview}

\newcommand{\cycs}{\mathcal{C}}
\newcommand{\paths}{\mathcal{P}}
\newcommand{\bnds}{\mathcal{B}}
\newcommand{\ub}{\mathrm{ub}}
\newcommand{\cb}{\phi}

\paragraph{A simple randomized linear-space algorithm.}Our first contribution is a simple Las Vegas randomized algorithm
computing a pointwise maximal solution of a monotone 2VPI system in $O(mn^2\log{m})$ time and \emph{linear} space.
Note that this matches the best known deterministic time bound~\cite{HochbaumN94} which is
achieved using $\Theta(n^2+m)$ space.

Like all the previous polynomial algorithms designed specifically for 2VPI feasibility,
we make use of Shostak's characterization~\cite{Shostak81} of the pointwise maximal solution $\xmax{}$ (see Theorem~\ref{t:shostak}).
Roughly speaking, it says that for any $v\in V$, $\xmax{v}$ equals the minimum
upper bound on $x_v$ that can be devised from a simple path/simple cycle pair
$(P,C)$ such that $P=v\to w$ for some $w\in V$ and $C$ is a cycle in $G$ going through $w$
whose product $\gamma(C)$ of values $\gamma(e)$ is less than $1$.
It is easy to verify that such a cycle $C$ implies some absolute upper bound on $x_w$,
whereas such a path $P$ implies an upper bound on $x_v$ as a function of $x_w$.

Megiddo's algorithm~\cite{Megiddo83b} uses this characterization in the following way.
Let $\cb_w$ be the tightest upper bound that can be obtained using a simple cycle $C$
going through $w$. Megiddo shows that each \emph{cycle bound} $\cb_w$ can be computed in $O(mn^2\log{m})$ time and linear space
with parametric search.
This is possible since $\cb_w<\xi$ can be evaluated in $O(mn)$ time using a Bellman-Ford-style procedure of depth $O(n)$.\footnote{Here,
by the depth we mean the parallel time of the algorithm.}
Thus, computing the cycle bounds $\cb_w$ for all $n$ possible $w\in V$ takes $O(mn^3\log{m})$ time.
It is not difficult to show that $\xmax{}$ can be obtained by propagating
the bounds~$\cb_w$ using a Bellman-Ford-style procedure running in $O(nm)$ time and linear space.

On the other hand, the state-of-the-art algorithms for 2VPI feasibility~\cite{CohenM94, HochbaumN94} do not compute cycle bounds
at all. They instead use the following
key lemma attributed to Aspvall and Shiloach~\cite{AspvallS80}.\footnote{The attribution is made consistently throughout
the previous work, e.g., in~\cite{CohenM94, HochbaumN94}. Restrepo and Williamson~\cite{RestrepoW09} point
out that the procedure should be rather tracked to Aspvall's Ph.D. thesis~\cite{aspvall1981efficient}; they also argue that
the original implementation~\cite{aspvall1981efficient} uses $O(n^2)$ space, and provide a linear-space algorithm accomplishing
essentially the same task.}
\begin{lemma}\label{l:aspvall}
  Let $v\in V$ and $\xi\in \mathbb{R}$. One can check whether $\xmax{v}<\xi$ in $O(nm)$ time
  and using linear space.
\end{lemma}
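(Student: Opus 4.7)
The plan is to reduce the query to an infeasibility test and to apply a Bellman-Ford-style iterative relaxation. First, I observe that $x_v^{\max} \ge \xi$ if and only if the M2VPI system augmented by the single extra constraint $x_v \ge \xi$ is feasible, so the task becomes to detect infeasibility of this augmented system in $O(nm)$ time and linear space. To this end, I would propagate the forced lower bounds forward from $v$: maintain an array $L \in (\mathbb{R}\cup\{-\infty\})^V$ initialized with $L_v \leftarrow \xi$ and $L_u \leftarrow -\infty$ for $u\ne v$, and in each round relax every edge $e=uv'$ by
\[
L_{v'} \leftarrow \max\bigl(L_{v'},\; (L_u - c(e))/\gamma(e)\bigr),
\]
recording the parent edge each time an entry of $L$ strictly increases. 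The update is valid because $x_u \le c(e) + \gamma(e) x_{v'}$ together with $x_u \ge L_u$ forces $x_{v'} \ge (L_u - c(e))/\gamma(e)$, so every feasible solution with $x_v \ge \xi$ must satisfy $x_u \ge L_u$ throughout.

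The correctness hinges on Shostak's characterization (Theorem~\ref{t:shostak}): $x_v^{\max} < \xi$ iff there exists a simple path $P:v\to w$ and a simple cycle $C$ through $w$ with $\gamma(C)<1$ for which $c(P)+\gamma(P)\cdot c(C)/(1-\gamma(C)) < \xi$, equivalently $(\xi-c(P))/\gamma(P) > c(C)/(1-\gamma(C)) =: \phi_C$. After relaxing along the edges of $P$ in $|P|\le n-1$ rounds, $L_w$ is at least $(\xi-c(P))/\gamma(P)$, i.e.\ strictly above $\phi_C$. Since the map $L\mapsto (L-c(C))/\gamma(C)$ implied by one pass around $C$ has fixed point $\phi_C$ and expands the distance $L-\phi_C$ by the factor $1/\gamma(C)>1$ when $\gamma(C)<1$, each further traversal of $C$ strictly increases $L_w$, so within $n$ more relaxation rounds some strict update of $L_u$ occurs whose parent chain closes into an amplifying cycle. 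I would therefore run $O(n)$ rounds and, whenever a relaxation causes a cycle to appear in the parent pointers, extract it in $O(n)$ time and test the sign of $\gamma(C)-1$: a cycle with $\gamma(C)<1$ combined with $L_w>\phi_C$ certifies infeasibility and the algorithm outputs ``$x_v^{\max}<\xi$''.

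The main obstacle is ensuring termination when the only active cycles surfacing in the parent chains are contracting ones with $\gamma(C)>1$, which pull $L_w$ toward a finite fixed point $\phi_C$ only geometrically and can produce arbitrarily many tiny strict updates under exact arithmetic, defeating the naive ``update in round $n+1$ implies infeasibility'' criterion. To handle this, each time a cycle $C$ with $\gamma(C)\ne 1$ is extracted from the parent chain, I would compute $\phi_C = c(C)/(1-\gamma(C))$ directly from $C$ in $O(n)$ time; if $\gamma(C)>1$, set $L_w$ to $\phi_C$ outright, ``solving'' the cycle and blocking all further updates through it. Since each vertex participates in at most $n$ such cycle-solving events at $O(n)$ cost each, and the $O(n)$ Bellman-Ford rounds cost $O(nm)$ in total, the whole procedure runs in $O(nm)$ time while using only the edge list together with the $O(n)$-space arrays $L$ and parent pointers, yielding the claimed $O(m+n)$ working space.
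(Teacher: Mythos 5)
Your setup is on the right track — propagating the forced lower bounds $L_u$ from $v$ via $L_{v'}\leftarrow\max(L_{v'},(L_u-c(e))/\gamma(e))$ with parent pointers is exactly the Restrepo--Williamson/Aspvall--Shiloach ``locate'' procedure that the paper uses (restated and extended as Lemma~\ref{l:locate-ext}). But the argument has a real gap at its crux: you assert, without proof, that ``within $n$ more relaxation rounds some strict update of $L_u$ occurs whose parent chain closes into an amplifying cycle.'' That sentence is precisely the hard content of the lemma (it corresponds to~\cite[Lemma~4.6]{RestrepoW09}, cited in the paper's Lemma~\ref{l:rw3}), and it does not follow from the local fixed-point calculation you give. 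The certificate the algorithm detects is a cycle in the graph $H_j$ of \emph{current-round} parent pointers, not a repeated vertex along the backward chain of updates (which traverses different rounds and need not produce a cycle in any single $H_j$). Showing that a cycle with $\gamma(C)<1$ must appear in $H_j$ for some $j\le n$ when $\xmax{v}<\xi$ is exactly what needs a careful induction, and the proposal does not supply one.

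The ``cycle-solving'' modification is also problematic, and in fact unnecessary. The standard algorithm simply runs $n$ rounds and inspects $H_j$ for a $\gamma<1$ cycle; it never uses the criterion ``an update still occurs past round $n$,'' so the slow geometric convergence around $\gamma>1$ cycles is not an obstacle to termination. Beyond being superfluous, your fix is unjustified on two counts: (i) it is unclear what parent pointer $w$ should carry after $L_w$ is forced to $\phi_C$, and altering $H_j$ this way could prevent the very $\gamma<1$ cycle you are looking for from surfacing, potentially breaking correctness; and (ii) the claim that ``each vertex participates in at most $n$ cycle-solving events'' has no basis — there can be exponentially many $\gamma>1$ cycles through a vertex, each with a different fixed point $\phi_C$, and nothing you say bounds how many of them surface sequentially. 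So the efficiency claim is not supported either. In short, the propagation mechanism is right, but the proof of correctness and the treatment of $\gamma>1$ cycles are both missing or wrong; the honest move is to cite the analysis of~\cite{RestrepoW09} or re-prove the key monotonicity/cycle-surfacing lemma in the style of the paper's Lemmas~\ref{l:rw1}--\ref{l:rw3}.
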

Cohen and Megiddo~\cite{CohenM94} call the process behind Lemma~\ref{l:aspvall}
``locating'' a value $\xi$ with respect to~$\xmax{v}$.
Roughly speaking, the procedure behind Lemma~\ref{l:aspvall} looks
for a certificate, consisting of a path and a cycle, that imposes
an upper bound on $x_v$ tighter than $\xi$.

Hochbaum and Naor~\cite{HochbaumN94} use Lemma~\ref{l:aspvall} to eliminate a single variable $x_v$
by replacing it with up to $O(n^2)$ inequalities (computed via parametric search) 
between the neighbors of $v$ in $G$.
So, reducing the instance in terms of the number of variables comes at a cost of densifying
the instance.

The deterministic algorithm of Cohen and Megiddo~\cite{CohenM94} leverages Lemma~\ref{l:aspvall} while performing
a repeated-squaring-style all-pairs shortest paths
algorithm. This algorithm constructs paths of at most $2k$ edges that impose tightest inequalities
between all the $n^2$ pairs out of the $n^2$ tightest inequalities imposed by paths of length at most $k$. Picking
the right $n^2$ paths out of $n^3$ possibilities requires applying Lemma~\ref{l:aspvall}
for each variable once.

The above explains why for both these approaches~\cite{CohenM94, HochbaumN94}, 
the $\Theta(n^2)$ space usage seems inherent.
Moreover, the usage of Lemma~\ref{l:aspvall} (as opposed to computing cycle bounds,
which is a slightly easier task) seems essential as well.

To match the $\Ot(mn^2)$ time bound with a linear space algorithm, we retreat to the original
approach of Megiddo~\cite{Megiddo83b}. We observe that if some cycle $C$ at vertex $v$ with $\gamma(C)<1$ satisfies
$\cb_v=\xmax{v}$, we do not need to compute the cycle bound $\cb_w$ for all other vertices
$w\neq v$ of $C$. So this suggests that if $C$ is large, we can save a lot of time.
Moreover, if $\cb_v$ is attained using a cycle with $k$ edges, then the test $\cb_v<\xi$
can be performed correctly in $O(mk)$ time. 
As a result, parametric search can compute $\cb_v$ correctly in $O(mk^2\log{m})$ time 
as opposed to $O(mn^2\log{m})$.
Interestingly, this property of cycle bounds does not apply to certificates considered
in Lemma~\ref{l:aspvall}. 
Indeed, it is not clear whether one can prove a stronger version of Lemma~\ref{l:aspvall} that could test in $O(mk)$ time whether
there exists a
path-cycle pair of total size $k$ that certifies $\xmax{v}<\xi$.\footnote{The intuitive reason
why this may be hard is as follows: locating a value $\xi$ wrt. $\xmax{v}$ resembles finding a negative cycle
\emph{reachable} from $v$ in $G$. Whereas one can easily check whether there exists a negative cycle
of $k$ edges \emph{going through}~$v$ in $O(mk)$ time, it is widely believed that finding such
a small negative cycle in $G$ globally (which can be easily made reachable from an artificial source vertex with edges to all other vertices)
requires cubic time~\cite{WilliamsW18}.}

With these observations, the randomized approach is roughly as follows. For some $k$, consider
some $\leq n$ chosen cycles $C_v$ such that $|C_v|\in [k,2k]$ and $\xmax{v}$ is attained using the upper bound incurred by $C_v$
(possibly, in combination with some path from $v$ to $C_v$).
Let us denote by $V_k$ the set of such vertices $v$.
Suppose we sample $O((n/k)\log{n})$ vertices $s$, and compute for each sample $s$ the best cycle bound
on $x_s$ incurred by a cycle of length at most $2k$. Then, with high probability\footnote{That is, with probability at least $1-1/n^c$ for an arbitrarily chosen constant $c\geq 1$.}, for all $v\in V_k$
some sample $s_v$ will hit $C_v$ and we will thus correctly compute $\cb_{s_v}$ in $O(mk^2\log{n})$ time.
Hence, the total time through all $O((n/k)\log{n})$ samples $s_v$ will be $O(mnk\log^2{n})$.
Given all the cycle bounds $\cb_{s_v}$, we can compute the actual tightest upper bounds $\xmax{v}$ for all $v\in V_k$
in $O(nm)$ time by propagating them using a Bellman-Ford-style procedure.

By proceeding as sketched above for all $k=2^j$, $j=0,\ldots,O(\log{n})$, we will succeed in computing
$\xmax{v}$ for all $v\in V$ in $O(mn^2\log^2{n})$ time whp.
As we show, the sampling can be slightly refined so that the \emph{expected} time
of the algorithm is $O(mn^2\log{m})$, i.e., exactly matching the best known deterministic bound~\cite{HochbaumN94}.

\paragraph{The trade-off algorithm.} Our space-time trade-off is obtained by combining the described
linear-space algorithm with the $\Ot(n^3+nm)$ time algorithm of Cohen and Megiddo~\cite{CohenM94}.
Namely, for some threshold parameter $h$, we handle computing cycle bounds incurred
by short cycles of length less than $h$ in $\Ot(nmh)$ time using $O(\log{h})$ initial phases of the linear-space algorithm.
Next, we show that computing the cycle bounds incurred by long cycles of length $h$ or more
can be reduced, again via sampling, to a smaller (but necessarily dense) 2VPI instance with $\Ot(n/h)$ variables
and $\Ot((n/h)^2)$ inequalities. Such an instance can be solved in $\Ot((n/h)^3)$ time.
Finally, we show that this reduction can be performed in $\Ot(nmh)$ time and using $\Ot(n^2/h)$ space.
The reduction constitutes another application of parametric search. 
However, the key that enables its efficient implementation
is the main trick behind the randomized algorithm of Cohen and Megiddo~\cite{CohenM94}, i.e.,
an $\Ot(nm)$ expected time procedure that can run 
Lemma~\ref{l:aspvall} for every vertex $v\in V$ (critically, for each $v$ with a different threshold $\xi_v$) in parallel (see Lemma~\ref{l:cohen}).

We now note that whereas the $\Ot(nmh+(n/h)^3)$ time trade-off that we achieve has not
been described previously, we believe that it can also be achieved via a combination
of the randomized approach of Cohen and Megiddo~\cite{CohenM94} and the recent parallel APSP
algorithm of Karczmarz and Sankowski~\cite{KarczmarzS21}.
This is because Cohen and Megiddo's approach is, technically speaking, a randomized
reduction to parallel APSP in the comparison-addition model. Namely, if a parallel APSP algorithm
with work $W(n,m)$ bound and depth $d$ is provided, the running time
of the reduction is $\Ot(W(n,m)+d\cdot nm)$.
Since it is known~\cite{KarczmarzS21} that for any $d\in [1,n]$ one can achieve $W(n,m)=\Ot(nm+(n/d)^3)$,
the same randomized time bound as in Theorem~\ref{t:main} is obtained ultimately.
However, the space requirement remains $\Theta(n^2)$ if one proceeds this way,
so our approach is significantly more efficient in terms of space.

\paragraph{Certificates of infeasibility.} Shostak's characterization~\cite{Shostak81} (Theorem~\ref{t:shostak})
states that if a monotone 2VPI system is infeasible, then there exists a minimal \emph{certificate}
that consists of at most two cycles and at most one path in $G$.
In applications, e.g., in Wayne's generalized minimum cost flow algorithm~\cite{Wayne02}, simply declaring
a monotone 2VPI system infeasible (which can be easily done by checking if the obtained pointwise
maximal solution violates the system~\eqref{eq:system}) is not enough, and a certificate of infeasibility is required.
Moreover, computing such certificates is a bottleneck in Wayne's algorithm.
We observe that constructing a certificate can be a very subtle task and it can make
2VPI feasibility algorithms considerably more complicated (see Section~\ref{s:kcycle}).
Our simple randomized algorithm is capable of constructing a certificate of infeasibility,
and thus we improve the space requirement of Wayne's algorithm to linear when relying
on a $\Ot(mn^2)$ time 2VPI feasibility algorithm.
However, our trade-off algorithm can only provide such a certificate \emph{conditionally} on
whether the randomized algorithm of Cohen and Megiddo~\cite{CohenM94} has this property.
While Wayne~\cite[Section~6.2]{Wayne02} claims that all the previously known 2VPI feasibility algorithms have
this property, we are unsure this applies to~\cite{CohenM94}.
That being said, it is very likely that that (already complicated) algorithm can be suitably extended.
Then, our trade-off algorithm leads to a faster combinatorial generalized minimum cost flow algorithm.

\paragraph{Discounted all-pairs shortest paths.}
In our discounted APSP algorithm,
we build upon the observation of Madani~et~al.~\cite{MadaniTZ10}
that allows reducing the general discounted APSP problem to the case where one
only cares about \emph{simple and finite} paths.
Then, to obtain the improved algorithm for the uniform case,
we describe two \emph{different} approaches for reducing the number of sources
of interest.
More specifically, given a discounted shortest paths problem instance $\langle S,V\rangle$
that requires computing the discounted distances $\dist_G(s,t)$ for all $(s,t)\in S\times V$,
a \emph{sources set reduction step} reduces the problem $\langle S,V\rangle$
to a problem $\langle S',V\rangle$, where $|S'|\ll|S|$.

Both our sources set reduction steps (Lemma~\ref{l:red1}) are based on deterministically hitting sufficiently
long prefixes of discounted shortest paths from $S$ to $V$.
The first reduction leverages the suffix optimality property of discounted shortest paths (Observation~\ref{o:suffix})
whereas the other takes advantage of $k$-prefix optimality of discounted shortest paths (Observation~\ref{o:prefix}).
Starting with the problem $\langle V,V\rangle$, we end up applying
the first reduction step once, and the other $O(\log{n})$ times.

\section{Preliminaries}\label{s:prelims}

\paragraph{Paths, cycles, and walks.}
To avoid confusion we will use the standard term \emph{walk} to refer to a non-necessarily simple
path. A \emph{closed walk} is a walk that starts and ends in the same vertex.
Whenever we write \emph{path}, we actually mean a \emph{simple path} with no repeated vertices,
possibly except for its endpoints.
A \emph{cycle} is a closed simple path with equal source and target.
Sometimes when referring to a cycle we interpret it as a subgraph of $G$, and sometimes, when
a cycle has a distinguished source vertex $s$, we interpret it as an $s\to s$ path.
It should be clear from the context which interpretation we use.

Let $P=e_1\ldots e_k$ be an $s\to t$ walk in $G$. We define $|P|=k$.
The walk $P$ can be used to devise the following inequality:
\begin{equation*}
  x_s\leq c(e_1)+\gamma(e_1)\cdot (c(e_2)+\gamma(e_2)\cdot \left(\ldots \cdot (c(e_k)+\gamma(e_k)\cdot x_t)\right))=\sum_{i=1}^kc(e_i)\left(\prod_{j=1}^{i-1}\gamma(e_j)\right)+\prod_{i=1}^k\gamma(e_i)\cdot x_t.
\end{equation*}
Let us define $c(P)=\sum_{i=1}^kc(e_i)\left(\prod_{j=1}^{i-1}\gamma(e_j)\right)$ and $\gamma(P)=\prod_{i=1}^k\gamma(e_i)$.
Then we can write:
\begin{equation*}
  x_s\leq c(P)+\gamma(P)\cdot x_t.
\end{equation*}
Note that if $k=1$, i.e., $P$ consists of a single edge $e_1$, we have $c(P)=c(e_1)$ and $\gamma(P)=\gamma(e_1)$.
If~$P$ can be expressed as $P=P_1P_2$, then $c(P)=c(P_1)+\gamma(P_1)\cdot c(P_2)$ and $\gamma(P)=\gamma(P_1)\cdot \gamma(P_2)$.

If $C$ is an $s\to s$ closed walk, we obtain
the inequality $x_s\leq c(C)+\gamma(C)\cdot x_s$.
Depending on the position of $\gamma(C)$ wrt. $1$, $C$ can be used
to obtain upper or lower bounds on $x_s$.
If $\gamma(C)<1$, then~$C$ implies the inequality
$x_s\leq \frac{c(C)}{1-\gamma(C)}$.
If $\gamma(C)>1$, then it implies the inequality
$x_s\geq\frac{c(C)}{1-\gamma(C)}$.
If $\gamma(C)=1$, then the implied inequality is
$0\leq c(C)$. So, assuming $\gamma(C)=1$, if $c(C)<0$, then the system
is infeasible, and otherwise the obtained inequality is trivial.

A \emph{negative unit-gain} closed walk with $\gamma(C)=1$ and $c(C)<0$ constitutes a possible \emph{certificate of infeasibility}.
Another possible certificate of infeasibility
is a pair of $s\to s$ closed walks $C^\geq,C^\leq$ such that $\gamma(C^\geq)>1$, $\gamma(C^\leq)<1$
and $\frac{c(C^\geq)}{1-\gamma(C^\geq)}>\frac{c(C^\leq)}{1-\gamma(C^\leq)}$.
Then the lower- and upper bounds imposed by $C^\geq$ and $C^\leq$ on $x_s$ respectively are
contradictory.

Sometimes, for a cycle $C\subseteq G$ without a distinguished start vertex
we will write $c_s(C)$ to denote
the cost of~$C$ if $C$ is interpreted as an $s\to s$ path.

Denote by $\paths_{s,t}$ the set of \emph{simple} $s\to t$ paths in $G$.
In particular, let $\paths_{s,s}$ be the set of \emph{simple} cycles in $G$ that contain $s$.

For any $k\geq 0$, denote by $\paths_{s,t}^k$ the set of
$s\to t$ walks with at most $k$ edges.
Let $\cycs_v^k=\paths_{v,v}^k$ be the set of closed walks with at most $k$ edges and a distinguished
starting vertex $v$, so that for any $C\in\cycs_v^k$, $c(C)$ is well-defined.
For $C\in\cycs_v^k$, let $\cb(C)=\frac{c(C)}{1-\gamma(C)}$ be the \emph{cycle bound} of~$C$.
If $C$ has no distinguished source, and we interpret it as an $s\to s$ path,
we similarly set $\cb_s(C)=\frac{c_s(C)}{1-\gamma(C)}$.

Finally, note that if $C$ is a walk such that $C=PQ$, then:
\begin{equation}\label{eq:source-change}
  \cb(C)=\frac{c(C)}{1-\gamma(C)}=\frac{c(P)(1-\gamma(QP))+c(P)\gamma(QP)+\gamma(P)\cdot c(Q)}{1-\gamma(QP)}=c(P)+\gamma(P)\cdot \cb(QP).
\end{equation}

\paragraph{Feasibility of monotone 2VPI systems.} The following key result is attributed (see e.g.~\cite{CohenM94, dadush2021accelerated}) to Shostak~\cite{Shostak81} and
provides a way to test the feasibility of monotone 2VPI inequality systems.
Moreover, if the system is feasible, it gives explicit formulas for calculating the vectors $\xmax{},\xmin{}$.

\begin{theorem}\label{t:shostak}\textup{\cite{Shostak81}}
  Let:
  \begin{align}
    x_v^\leq&=\min_{w\in V} \min_{P\in \paths_{v,w}} \left\{c(P)+\gamma(P)\cdot \min_{\substack{C\in \paths_{w,w}\\\gamma(C)<1}}\left\{ \cb(C)\right\}\right\}\label{eq:shostak},\\
    x_v^\geq&=\max_{w\in V} \max_{P\in \paths_{w,v}} \left\{\frac{1}{\gamma(P)}\left(\max_{\substack{C\in \paths_{w,w}\\\gamma(C)>1}}\left\{\cb(C)\right\}-c(P)\right)\right\}.\nonumber
  \end{align}
  Then:
  \begin{enumerate}
    \item If the system~\eqref{eq:system} is feasible, $\xmax{v}=x_v^\leq$ and $\xmin{v}=x_v^\geq$ for all $v\in V$.
    \item If $x_v^\leq < x_v^\geq$ for some $v$, then the system is infeasible.
    \item If the system is infeasible, then there exists a certificate of infeasibility $H\subseteq G$, such
      that $H$ is either
      \begin{enumerate}[label=(\alph*)]
        \item a cycle with $\gamma(H)=1$ and $c(H)<0$ (so-called \emph{negative unit-gain cycle}), or
      \item $H=C^\leq \cup C^\geq\cup P$, where, for some possibly equal $s,t\in V$, $C^\leq\in\paths_{t,t}$,
          $C^\geq\in\paths_{s,s}$, $P\in\paths_{s,t}$, $\gamma(C^\leq)<1$, $\gamma(C^\geq)>1$, and
      $c_s(C^\geq)>c(P)+\gamma(P)\cdot c_s(C^\leq)$ (so-called \emph{negative bicycle}).
      \end{enumerate}
  \end{enumerate}
\end{theorem}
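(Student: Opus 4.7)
The plan is to prove the three parts in an intertwined way: the easy half of Part~(1) already yields Part~(2), while Part~(3) is extracted from the obstruction that would otherwise break the hard half of Part~(1).

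First I would show that for any feasible $x$ and every $v$, $x_v^\geq \leq x_v \leq x_v^\leq$. Iterating the inequality $x_s\leq c(e)+\gamma(e)\,x_t$ along a simple $v\to w$ path $P\in\paths_{v,w}$ gives $x_v\leq c(P)+\gamma(P)\,x_w$, and iterating around a simple cycle $C\in\paths_{w,w}$ with $\gamma(C)<1$ rearranges to $x_w\leq\cb(C)$; composing yields $x_v\leq c(P)+\gamma(P)\cb(C)$ for every admissible pair $(P,C)$, hence $x_v\leq x_v^\leq$. The lower bound is symmetric. Part~(2) follows immediately: if $x_v^\leq<x_v^\geq$, any feasible $x$ would give the contradictory chain $x_v\leq x_v^\leq<x_v^\geq\leq x_v$.

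The hard half of Part~(1) is that $x^\leq$ itself satisfies~\eqref{eq:system}. Fix $e=uv\in E$; I need $x_u^\leq\leq c(e)+\gamma(e)\,x_v^\leq$. Let $(P,C)$ attain $x_v^\leq$, with $P\in\paths_{v,w}$ and $C\in\paths_{w,w}$, $\gamma(C)<1$. If $u$ lies on neither $P$ nor $C$, then $eP$ is a simple $u\to w$ path and $(eP,C)$ is admissible for $u$, yielding the bound by the composition identity $c(eP)+\gamma(eP)\cb(C)=c(e)+\gamma(e)(c(P)+\gamma(P)\cb(C))$. Otherwise $u$ lies on the combined walk $PC$, and splitting $PC$ at $u$ as $Q_1 Q_2$ exposes a closed walk $eQ_1$ at $u$ and a tail $Q_2$ that, paired with a suitable rotation of $C$, forms another candidate admissible pair for $x_u^\leq$. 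The identity~\eqref{eq:source-change} algebraically compares the resulting bound with $c(e)+\gamma(e)\,x_v^\leq$; together with a minimality argument on $|P|+|C|$ reducing all constituents to simple subwalks, the comparison goes through \emph{provided} $\gamma(eQ_1)<1$.

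The main obstacle, and the gateway to Part~(3), is the case $\gamma(eQ_1)\geq 1$, in which the comparison above breaks. I would argue that in this situation the failure itself exhibits a certificate of infeasibility. If $\gamma(eQ_1)>1$, then $eQ_1$ imposes a \emph{lower} bound on $x_u$ that, combined with the upper bound from the simple cycle obtained by attaching $C$ via the remaining walk, forms a negative bicycle. If $\gamma(eQ_1)=1$ and $c(eQ_1)<0$, we have a negative unit-gain cycle; if $c(eQ_1)\geq 0$ the inequality derived from $eQ_1$ is vacuous and can be dropped from the witness without loss. Hence under the assumption of feasibility the bad case cannot arise, completing Part~(1). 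Conversely, if the system is infeasible, fixing an edge $e$ violated by $x^\leq$ and running the same dichotomy on the minimizer of $x_v^\leq$ produces the certificate required by Part~(3). The most delicate step is extracting a \emph{simple} certificate from what is a priori a closed walk: I would pick a minimum-edge witness and excise repeated vertices, using the fact that $\gamma>0$ ensures that any such shortening can only sharpen the implied inequality and therefore preserves infeasibility.
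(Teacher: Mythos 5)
This theorem is cited verbatim from Shostak~\cite{Shostak81}; the paper gives no proof of it, so there is no in-paper argument to compare against. Evaluating your attempt on its own merits, the overall skeleton (lower/upper bound sandwich for feasible $x$, deduce Part~2, then argue $x^\leq$ itself satisfies~\eqref{eq:system} and harvest a certificate from the obstruction) is the right shape, but the crux of the argument is wrong as stated.

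The case dichotomy on $\gamma(eQ_1)$ is misdiagnosed. You claim the comparison ``goes through provided $\gamma(eQ_1)<1$'' and that $\gamma(eQ_1)\geq 1$ is the obstacle. In fact, writing $\alpha$ for the bound contributed by the remainder of the minimizing pair and using $c(eQ_1)=\cb(eQ_1)\bigl(1-\gamma(eQ_1)\bigr)$, one gets $c(eQ_1)+\gamma(eQ_1)\alpha=\cb(eQ_1)+\gamma(eQ_1)\bigl(\alpha-\cb(eQ_1)\bigr)$, and \emph{both} signs of $\gamma(eQ_1)-1$ yield $c(eQ_1)+\gamma(eQ_1)\alpha\geq x_u^\leq$ provided the feasibility-implied relation between $\alpha$ and $\cb(eQ_1)$ holds (namely $\alpha\geq x_u^\leq$ always, and $\cb(eQ_1)\leq x_u\leq\alpha$ when $\gamma(eQ_1)>1$). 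So the case $\gamma(eQ_1)>1$ does \emph{not} by itself produce a contradiction, and your claim that it yields a negative bicycle is unsupported: a negative bicycle requires the derived lower bound to numerically exceed the derived upper bound, and you never verify that inequality. Symmetrically, when you hope to read off a certificate for Part~3 from the same dichotomy, the needed strict inequality $c_s(C^\geq)>c(P)+\gamma(P)c_s(C^\leq)$ is asserted rather than proved. Two further soft spots: you never address $x_v^\leq=\infty$ (when no admissible $(P,C)$ exists), and the final remark that shortening a non-simple witness ``can only sharpen the implied inequality'' is not obviously true when the excised subwalk has $\gamma$ on the wrong side of $1$, so the simplification-to-a-simple-certificate step needs a real argument rather than a monotonicity slogan.
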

Theorem~\ref{t:shostak}
states that $\xmax{v}$ equals the tightest upper bound incurred on $v$ by a
path/cycle pair $(P,C)\in \bigcup_{w\in V}\paths_{v,w}\times \paths_{w,w}$.
Similarly, $\xmin{v}$  equals the tightest lower bound incurred on $v$ by a
cycle/path pair $(C,P)\in \bigcup_{w\in V}\paths_{w,w}\times \paths_{w,v}$.

\paragraph{Propagating bounds through walks.}
Suppose for all $v\in V$, $y_v\geq \xmax{v}$ is a valid upper bound on $x_v$.
By \emph{propagating bounds through edges} we mean the following operation.\footnote{Cohen and Megiddo~\cite{CohenM94} call this a \emph{push phase}.
Madani~\cite{madani2000complexity} calls this \emph{backward propagation}.}
First, record a copy of each $y_u$ as $y_u'$. For all $uv=e\in E$, set $y_u:=\min(y_u,c(e)+\gamma(e)\cdot y_v')$.
The operation clearly resembles a single phase of the Bellman-Ford algorithm.
Note that propagating bounds through edges keeps the upper bounds valid since
$x_v\leq y_v$ along with $x_u\leq c(e)+\gamma(e)x_v$ implies
that $c(e)+\gamma(e)y_v$ is also an upper bound on $x_u$.

More generally, we call applying $k\geq 1$ propagation steps
\emph{propagating bounds through walks of length~$k$}.
It is easy to see that after propagating bounds through walks of length $k$,
the obtained upper bounds $y_s$ (for $s\in V$) satisfy
$y_s=\min_{P\in\paths_{s,t}^k}\{c(P)+\gamma(P)\cdot y_t'\}$ for any $t\in V$,
where~$y_t'$ is the upper bound on $x_t$ \emph{before} the propagation.
Note that propagating through walks of length $k$ takes $O(mk)$ time
and requires linear additional space.

\section{Simplifying the problem}\label{s:simplify}
First of all, note that any closed walk $C\in \bigcup_{k=1}^\infty\cycs_v^k$ with $c(C)<0$ and $\gamma(C)=1$ certifies the infeasibility
of the system. Similarly, any triple $C^\leq\in \bigcup_{k=1}^\infty\paths_{t,t}^k$, $C^\geq\in\bigcup_{k=1}^\infty\paths_{s,s}^k$, $P\in\bigcup_{k=0}^\infty\paths_{s,t}^k$
such that $c(C^\leq)<1$, $c(C^\geq)>1$, and $c(C^\geq)>c(P)+\gamma(P)\cdot c(C^\leq)$ certifies infeasibility of the system.
We refer to such certificates as non-simple, whether to those described in Theorem~\ref{t:shostak} (item~3) as simple.

It is not very difficult to check that any non-simple certificate can be actually ``simplified'' in linear time
by removing some redundant cycles (however, e.g., a non-simple negative cycle with $\gamma(C)=1$ might
get simplified into a negative bicycle).
That being said, in the following, we will only care about producing non-simple certificates.

\newcommand{\vbad}{v_{\mathrm{bad}}}
For any $v\in V$, fix $(P_v,C_v)$ to be any simple path/simple cycle pair minimizing the right-hand side
of~\eqref{eq:shostak}, if $x_v^\leq\neq\infty$.
We now define a set $\cycs^*$ of \emph{distinguished cycles}.
If the system~\eqref{eq:system} is feasible, we 
set $\cycs^*$ to be any \emph{maximal} subset of vertex-disjoint cycles of $\{C_v:v\in V,x^\leq_v\neq\infty \}$.
Otherwise, if it is infeasible, let $\vbad\in V$ be such that $x_{\vbad}^\leq<x_{\vbad}^\geq$.
Then $\cycs^*=\{C_{\vbad}\}$. Note that such $\vbad$ exists by Theorem~\ref{t:shostak}.
Clearly, by the definition, we have $\sum_{C\in\cycs^*}|C|\leq n$ in any case.
In the following we assume that the distinguished cycles do not have a specified starting point.

\begin{definition}
  Let $v\in V$. 
  For $k\geq 1$, we call the value
  \begin{equation*}
    \cb_{v,k}=\min\left\{\cb(C):C\in\cycs_v^k, \gamma(C)<1\right\}
  \end{equation*}
  the \emph{$k$-cycle upper bound} of $v$ incurred by $G$.
\end{definition}
We start with the following technical lemma.
\begin{restatable}{lemma}{lred}\label{l:reduction}
  For each $v\in V$ let $x^*_v\in \mathbb{R}\cup\{\infty\}$ be values satisfying the following:
  if $x^*_v\neq\infty$, then there exists $F_v\in\paths_{v,f}^n$ and $D_v\in\cycs_f^n$ for some $f\in V$ such that $\gamma(D_v)<1$ and $x^*_v=c(F_v)+\gamma(F_v)\cdot \cb(D_v)$.
  Moreover, suppose that for each $C\in\cycs^*$ there exists such a $w\in V(C)$
  that $x^*_w\leq \cb_{w,|C|}$.
  Let
  \begin{equation}\label{eq:ystar}
    y^*_v:=\min_{w\in V}\min_{P\in\paths_{v,w}^{3n}}\left\{c(P)+\gamma(P)\cdot x^*_w\right\}.
  \end{equation}
  If the system~\eqref{eq:system} is feasible, then for any $v\in V$, $y^*_v=\xmax{v}$.
  Otherwise, $y^*_{\vbad}\leq x^\leq_{\vbad}$ holds.

  The values $y^*_v$, along with endpoints $w_v$ of some arbitrary corresponding walks $P=v\to w_v$ minimizing~\eqref{eq:ystar}, can be found in $O(nm)$ time and linear space.
\end{restatable}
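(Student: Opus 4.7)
The plan is to prove $y_v^*\ge x_v^{\le}$ and $y_v^*\le x_v^{\le}$ separately (the latter only at $v=\vbad$ when the system is infeasible). The first direction is essentially free: the hypothesis forces each $x_w^*$ to be a valid upper bound on the true $x_w$ (it is obtained by propagating the cycle bound $\cb(D_w)$, which is a valid upper bound on $x_f$, through the walk $F_w$), and valid upper bounds propagate through walks, so $c(P)+\gamma(P)x_w^*\ge x_v^{\le}$ for every $w$ and every $P\in\paths_{v,w}^{3n}$.

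The key technical step for the second direction will be a suffix-optimality identity for the fixed Shostak minimizer $(P_v,C_v)$: for every $u\in V(C_v)$, $x_u^{\le}=\cb_u(C_v)$. I will prove this by letting $Q\subseteq C_v$ go from $w_v$ to $u$, rewriting $x_v^{\le}=c(P_v)+\gamma(P_v)\cb_{w_v}(C_v)$ as $c(P_vQ)+\gamma(P_vQ)\cb_u(C_v)$ via the source-change identity~(1), and comparing with the valid upper bound $c(P_vQ)+\gamma(P_vQ)x_u^{\le}\ge x_v^{\le}$ to get $x_u^{\le}\ge\cb_u(C_v)$; the reverse direction $x_u^{\le}\le\cb_u(C_v)$ is immediate from Theorem~\ref{t:shostak} (empty path, cycle $C_v$ at $u$). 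Applying this identity simultaneously to $C_v$ and to any $C=C_y\in\cycs^*$ at a common vertex $u\in V(C_v)\cap V(C)$ yields the crucial equality $\cb_u(C_v)=x_u^{\le}=\cb_u(C)$ that drives the case analysis.

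I then split into two cases. If $C_v\in\cycs^*$, the hypothesis supplies $w\in V(C_v)$ with $x_w^*\le\cb_{w,|C_v|}\le\cb_w(C_v)$, and the walk $P=P_v\cdot Q$ of length at most $2n$ (with $Q\subseteq C_v$ from $w_v$ to $w$) gives, by source-change on $C_v$, $c(P)+\gamma(P)x_w^*\le c(P_v)+\gamma(P_v)\cb_{w_v}(C_v)=x_v^{\le}$. Otherwise $C_v\notin\cycs^*$, and by maximality some $C\in\cycs^*$ shares a vertex $u$ with $C_v$; letting $w\in V(C)$ satisfy $x_w^*\le\cb_{w,|C|}\le\cb_w(C)$, I take $P=P_v\cdot Q_1\cdot Q_2$ of length at most $3n$, where $Q_1\subseteq C_v$ goes $w_v\to u$ and $Q_2\subseteq C$ goes $u\to w$. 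Source-change on $C$ bounds $c(Q_2)+\gamma(Q_2)x_w^*\le\cb_u(C)$, and combining with $\cb_u(C)=\cb_u(C_v)$ and source-change on $C_v$ yields $c(P)+\gamma(P)x_w^*\le x_v^{\le}$. The infeasible case is subsumed by the first sub-case since $\cycs^*=\{C_{\vbad}\}$.

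For the algorithmic part I perform $3n$ rounds of backward Bellman-Ford-style relaxation: initialize $y_v:=x_v^*$ and an endpoint $w_v:=v$ for each $v\in V$; in each round, working from a snapshot $(y',w')$ of the previous round, relax $y_u\gets\min(y_u,\,c(e)+\gamma(e)\,y'_v)$ for every edge $uv=e\in E$ and copy $w'_v$ into $w_u$ whenever $y_u$ strictly improves. After $3n$ rounds, $y_v$ equals $\min_{w,P\in\paths_{v,w}^{3n}}\{c(P)+\gamma(P)x_w^*\}=y_v^*$ and $w_v$ stores a corresponding attaining endpoint. Two length-$n$ arrays together with in-place iteration over $E$ fit in $O(n+m)$ space, and the total running time is $O(nm)$. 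The main obstacle is isolating the suffix-optimality identity and using it to swap $\cb_u(C_v)$ for $\cb_u(C)$ at the shared vertex; once that identity is in hand, the walk-length count $|P_v|+|C_v|+|C|\le 3n-3$ and the endpoint bookkeeping are routine.
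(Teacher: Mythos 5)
Your proposal is correct and follows essentially the same route as the paper. Both proofs start by extending Shostak's characterization from simple paths/cycles to walks of bounded length (you phrase this as ``valid upper bounds propagate through walks''), both use the source-change identity~\eqref{eq:source-change} to slide along $C_v$ and along the distinguished cycle $C\in\cycs^*$ sharing a vertex with $C_v$, and both finish with $3n$ rounds of Bellman-Ford-style propagation. The one cosmetic difference is the intermediate claim you isolate: you prove the stronger ``suffix-optimality'' identity $x_u^\leq = \cb_u(C_v)$ for every $u\in V(C_v)$, whereas the paper proves only the weaker consequence $\cb_z(C_v)=\cb_z(C_w)$ at a shared vertex $z$ (by contradiction). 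Your stronger identity immediately yields the paper's, and its proof uses the exact same ingredient (composing $P_vQ$ with the Shostak minimizer at $u$, which relies on the walk-extension of~\eqref{eq:shostak}); so the mechanism is the same, just packaged as a direct equality rather than a contradiction. Your explicit case split ($C_v\in\cycs^*$ vs.\ not) is also only a stylistic variant of the paper's unified ``pick $C_v'\in\cycs^*$ sharing a vertex with $C_v$'', which covers $C_v'=C_v$ as a special case. Correctly, you only invoke the walk-extension (hence the suffix identity) in the feasible branch, and handle the infeasible case via Case~1 alone, matching the paper's treatment of $\vbad$.
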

\begin{proof}
  First suppose the system is feasible. Then note that in the right hand side of~\eqref{eq:shostak}
  we can replace the set $\paths_{v,w}$ over which $P$ is minimized with $\paths_{v,w}^{4n}$.
  This cannot decrease $x_v^\leq$ since it would then make the vector $\xmax{}$ infeasible, which would
  contradict Theorem~\ref{t:shostak}.
  Similarly, we can replace $P_{w,w}$ with $\cycs_w^{n}$ in that formula.
  Since each finite $x_v^*$ is an upper bound on $x_v$ achieved using a walk of no more than $n$ edges
  and a closed walk of no more than $n$ edges,
  each $y_v^*$, if finite, is an upper bound on $x_v$ achieved using a walk with no more
  than $3n$ edges and a closed walk with no more than $n$ edges.
  It follows that $x_v^\leq\leq y^*_v$.

  Recall that we have fixed $P_v,C_v$ for any $v$ to be some simple path/simple cycle minimizing the right-hand side
  of~\eqref{eq:shostak} when defining the distinguished cycles $\cycs^*$.
  We first argue that if some $z\in V$ lies on both $C_v$ and $C_w$ for $v\neq w$,
  then $\cb_z(C_v)=\cb_z(C_w)$. Otherwise, we would wlog. have $\cb_z(C_w)<\cb_z(C_v)$.
  Then, if $P_v=v\to w$, then for the $v\to z$ subpath $Q$ of $C_v$
  we would have
  \begin{align*}
    x_v^\leq&\leq c(P_vQ)+\gamma(P_vQ)\cdot \cb_z(C_w)\\
            &<c(P_vQ)+\gamma(P_vQ)\cdot \cb_z(C_v)\\
            &=c(P_v)+\gamma(P_v)\cdot(c(Q)+\gamma(Q)\cdot \cb_z(C_v))\\
            &=c(P_v)+\gamma(P_v)\cdot \cb_v(C_v)\\
            &=x_v^\leq,
  \end{align*}
  where in the penultimate step we use Equation~\eqref{eq:source-change}. This is clearly a contradiction.

  Now fix some $v\in V$.
  Let $C_v'\in \cycs^*$ be such that there exists a vertex $s\in V(C_v)\cap V(C_v')$.
  Such a $C_v'$ exists since $\cycs^*$ is maximal.
  Moreover, for some $t\in V(C_v')$ we have
  \begin{equation*}
    c(F_t)+\gamma(F_t)\cdot \cb(D_t)=x^*_t\leq \cb_{t,|C_v'|}\leq \cb_t(C_v'),
  \end{equation*}
  where $\gamma(D_t)<1$, $F_t\in\paths_{t,f}^n$, and $D_t\in \cycs_f^n$ for some $f\in V$.
  For $P_v=v\to w$, let $Q$ be the $w\to s$ subpath of the cycle $C_v$, and let $R$ be the $s\to t$ subpath
  of the cycle $C_v'$.
  We obtain:
  \begin{align*}
    x_v^\leq &= c(P_v)+\gamma(P_v)\cdot \cb_v(C_v)\\
             &=c(P_v)+\gamma(P_v)\cdot (c(Q)+\gamma(Q)\cdot \cb_s(C_v))\\
             &=c(P_v)+\gamma(P_v)\cdot (c(Q)+\gamma(Q)\cdot \cb_s(C_v'))\\
             &=c(P_v)+\gamma(P_v)\cdot (c(Q)+\gamma(Q)\cdot (c(R)+\gamma(R)\cdot \cb_t(C_v')))\\
             &=c(P_vQR)+\gamma(P_vQR)\cdot \cb_t(C_v').
  \end{align*}
  We also have $x_v^\leq\leq c(P_vQRF_t)+\gamma(P_vQRF_t)\cdot \cb(D_t)=c(P_vQR)+\gamma(P_vQR)\cdot x^*_t$
  since, as discussed before, in~\eqref{eq:shostak} we can equivalently minimize over pairs of walks of length no more than $4n$
  and closed walks of length at most $n$.
  So we obtain:
  \begin{equation*}
    x_v^\leq =c(P_vQR)+\gamma(P_vQR)\cdot \cb_t(C_v')\geq c(P_vQR)+\gamma(P_vQR)\cdot x^*_t\geq x_v^\leq.
  \end{equation*}
  It follows that $x_v^\leq=c(P_vQR)+\gamma(P_vQR)\cdot x^*_t$.
  Finally, since $P_vQR\in \paths_{v,t}^{3n}$, $y^*_v\leq x_v^\leq$.
  
  We have proved that if the system is feasible, we indeed have $y^*_v=x_v^\leq$  for all $v\in V$.
  By Theorem~\ref{t:shostak}, it follows that $y^*_v=\xmax{v}$ for all $v\in V$.

  Now assume that the system is infeasible. Let $P_{\vbad}=\vbad\to w$.
  Then there exists $t\in V(C_{\vbad})$ such that if $Q$ is the $w\to t$
  path of $C_{\vbad}$, then
  \begin{align*}
    x_{\vbad}^\leq&=c(P_{\vbad})+\gamma(P_{\vbad})\cdot \cb_w(C_{\vbad})\\
                  &=c(P_{\vbad})+\gamma(P_{\vbad})\cdot (c(Q)+\gamma(Q)\cdot \cb_t(C_{\vbad}))\\
                  &\geq c(P_{\vbad})+\gamma(P_{\vbad})\cdot (c(Q)+\gamma(Q)\cdot x^*_t)\\
                  &=c(P_{\vbad}Q)+\gamma(P_{\vbad}Q)\cdot x^*_t\\
                  &\geq y^*_t,
  \end{align*}
  as clearly $P_{\vbad}Q\in \paths_{v,t}^{3n}$ holds.

  The algorithm to compute the ultimate values $y_v^*$ is to simply set $y_v^*:=x_v$ initially and then propagate these values through paths
  of length no more than~$3n$. This takes $O(nm)$ time. Propagation can be easily extended to store
  an endpoint $w_v$ of some path $P$ such that $y_v^*=c(P)+\gamma(P)\cdot x^*_w$.
  \end{proof}

We now argue that in $O(mn\log{n})$ time we can reduce solving monotone 2VPI
feasibility to the following more restricted problem.
\begin{problem}\label{prob}
  Compute a vector of values $x_v^*$, $v\in V$, satisfying:
  \begin{itemize}
    \item If $x_v^*\neq\infty$, then $x_v^*=c(F)+\gamma(F)\cdot \cb(D)$ for some $F\in\paths_{v,w}^n$, $D\in\cycs_w^n$, $\gamma(D)<1$.
    \item For each $C\in\cycs^*$, there exists a vertex $s\in V(C)$ such that $x_s^*\leq \cb_{s,|C|}$.
  \end{itemize}
\end{problem}
\begin{remark}\label{r:prob}
  If we seek a certifying monotone 2VPI feasibility algorithm, then 
  the algorithm solving Problem~\ref{prob} has to possess the following additional properties:
  \begin{itemize}
    \item The algorithm is allowed to fail but then a certificate of infeasibility of the system
  is required.
    \item After the values $x_v^*$ are successfully computed,
  the algorithm may be requested to compute a corresponding walk/closed walk pair $(F,D)$
  as defined above
  for some \emph{single given} $v$ with $x_v^*\neq\infty$.
  \end{itemize}
\end{remark}

First, suppose the system is feasible. Then, indeed Lemma~\ref{l:reduction}
implies that the pointwise maximal solution can be computed
after solving Problem~\ref{prob}.
By symmetry in Theorem~\ref{t:shostak}, the values $\xmin{v}$ can be computed
by finding the pointwise maximal solution for a ``reverse'' instance $G^R$, where each
edge $e$ is reversed, has cost $c^R(e)=-c(e)/\gamma(e)$, and $\gamma^R(e)=1/\gamma(e)$.

If the system is infeasible and while solving Problem~\ref{prob} we do not
encounter a certificate of infeasibility, then the values $y^*_v$ from Lemma~\ref{l:reduction} (whose definition
does not depend on feasibility) are computed correctly. The same applies
to the analogously defined ``reverse'' values $(y^*_v)^R$.
Recall that if the system is infeasible then for some~$v$ we have
$x^\leq_{\vbad}<x^\geq_{\vbad}$.
Lemma~\ref{l:reduction} guarantees that $y^*_{\vbad}\leq x^\leq_{\vbad}$ in such case.
The same applies to the value $(y^*_{\vbad})^R$, i.e., $x^\geq_{\vbad}\leq (y^*_{\vbad})^R$.
As a result, we have $y^*_{\vbad}<(y^*_{\vbad})^R$.
We do not know $\vbad$ beforehand, but its existence proves
that $y^*_{g}<(y^*_{g})^R$ for some $g\in V$.
Clearly, we can find one such $g$ in $O(n)$ time.

Since $y_g^*=\min_{w\in V}\min_{P\in\paths_{g,w}^{3n}}\left\{c(P)+\gamma(P)\cdot x^*_w\right\}$,
the following lemma,
proved in Section~\ref{s:construct-path},
can be used to produce such a walk $S_g=g\to w_g$
that $y^*_{g}=c(S_g)+\gamma(S_g)\cdot x^*_{w_g}$.
\begin{restatable}{lemma}{lconstructpath}\label{l:construct-path}
  Let $s,t\in V$ and $\alpha\in\mathbb{R}$. Then, an $s\to t$ walk $Q\in\paths_{s,t}^k$
  lexicographically minimizing the pair $(c(Q)+\gamma(Q)\cdot \alpha,\gamma(Q))$ 
  can be constructed in $O(mk\log{k})$ time and linear space.
\end{restatable}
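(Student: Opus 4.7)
The plan is to prove the lemma in two stages: (a) compute in $O(mk)$ time and $O(n)$ space the lex-minimum value of $(c(Q)+\gamma(Q)\alpha,\gamma(Q))$ over $Q\in \paths_{s,t}^k$, and then (b) reconstruct in $O(mk\log k)$ time one walk achieving it, still in linear space.

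For (a) I would use a standard backward Bellman-Ford-style recurrence. Define $g_i(v)\in (\mathbb{R}\cup\{\infty\})\times \mathbb{R}_{+}$ as the lex-min pair $(c(P)+\gamma(P)\alpha,\gamma(P))$ over $P\in \paths_{v,t}^i$, with $g_0(t)=(\alpha,1)$ and $g_0(v)=(+\infty,\cdot)$ otherwise, and update
\[
  g_{i+1}(v) \;=\; \operatorname{lexmin}\Bigl(g_i(v),\; \operatorname*{lexmin}_{e=vw\in E}\bigl(c(e)+\gamma(e)\cdot g_i(w)_1,\; \gamma(e)\cdot g_i(w)_2\bigr)\Bigr).
\]
After $k$ iterations, $g_k(s)$ is the target lex pair; since only two consecutive layers need coexist, the pass uses $O(n)$ space and $O(mk)$ time.

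For (b) I would use Hirschberg-style divide-and-conquer on the walk length. The recursive task $\textsc{Rec}(a,b,\ell,\tau)$ outputs a walk in $\paths_{a,b}^\ell$ whose lex pair equals $\tau$; the base case $\ell\le 1$ just enumerates edges out of $a$. For $\ell\ge 2$, set $\ell_1=\lceil \ell/2\rceil$ and $\ell_2=\lfloor \ell/2\rfloor$. Any walk in $\paths_{a,b}^\ell$ factors through some midpoint $v$ as a prefix in $\paths_{a,v}^{\ell_1}$ and a suffix in $\paths_{v,b}^{\ell_2}$. I would recompute the backward BF from $b$ for $\ell_2$ iterations to obtain values $g^b_{\ell_2}(v)$ for all $v$ in $O(m\ell_2)$ time, identify a consistent midpoint $v^\star$ together with the induced target pair $\tau_1$ for the prefix (so that $\tau = (c(P_1)+\gamma(P_1)\cdot g^b_{\ell_2}(v^\star)_1,\gamma(P_1)\cdot g^b_{\ell_2}(v^\star)_2)$ for some prefix $P_1$), and recurse on $(a,v^\star,\ell_1,\tau_1)$ and $(v^\star,b,\ell_2,g^b_{\ell_2}(v^\star))$. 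The recurrence $T(\ell)=2T(\ell/2)+O(m\ell)$ solves to $T(k)=O(mk\log k)$; reusing arrays across recursive calls keeps the total memory $O(n+m)$.

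The main obstacle is locating a valid midpoint $v^\star$ within the $O(m\ell)$ time and $O(n)$ space budget of a single recursion level. A naive symmetric forward DP fails because the optimizing $(c(P_1),\gamma(P_1))$-pair at vertex $v$ depends on the scalar $g^b_{\ell_2}(v)$, which varies with $v$, and tracking the full Pareto front of achievable $(c,\gamma)$-pairs per vertex would blow up space. I would circumvent this by running an auxiliary backward BF on $G$ augmented with shortcut edges $v\to T$ (to a new sink $T$) of cost $g^b_{\ell_2}(v)_1$ and discount $g^b_{\ell_2}(v)_2$: an optimal walk $a\to T$ of length $\le \ell_1+1$ in the augmented graph has the same lex pair as the optimal walk in $\paths_{a,b}^\ell$, and its unique shortcut edge exposes $v^\star$. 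Extracting that shortcut in linear space is itself a smaller reconstruction problem, handled recursively by the same divide-and-conquer, so the additional work charges only a constant factor at each level and the global bound is preserved.
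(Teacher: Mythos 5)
Both you and the paper recurse on the walk length and split at $k'=\lceil k/2\rceil$, and your augmented-graph idea for locating the midpoint is correct in principle: collapsing the suffix-DP values $g^b_{\ell_2}(\cdot)$ into shortcut edges $v\to T$ (with terminal value $0$ at $T$) does reduce midpoint location to identifying the final shortcut edge on an optimal $a\to T$ walk of length $\le\ell_1+1$, and the lex tiebreak survives the translation. The gap is in your last sentence. You claim that extracting the shortcut is ``itself a smaller reconstruction problem, handled recursively by the same divide-and-conquer,'' charging ``only a constant factor at each level.'' That is not so: recursing to reconstruct a length-$\approx\ell_1+1\approx\ell/2$ walk, on top of the two recursive calls for the prefix and suffix, gives a three-way branching $T(\ell)\approx 3T(\ell/2)+O(m\ell)$, which solves to $\Theta(mk^{\log_2 3})$, not $O(mk\log k)$. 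Worse, the auxiliary reconstruction faces exactly the same midpoint-finding obstacle one level down, so it is not obviously a convergent reduction at all. The fix is cheap and does not need recursion: during the backward Bellman--Ford pass in the augmented graph, propagate alongside each lex pair $g'_i(v)$ the identity $\mathrm{last}(v,i)$ of the shortcut edge used by the corresponding optimal walk ($\mathrm{last}(v,i+1)=\mathrm{last}(v,i)$ if the old value is kept, $\mathrm{last}(v,i+1)=\mathrm{last}(w,i)$ if the win comes through an original edge $vw$, and $\mathrm{last}(v,i+1)=e_v$ if the win is the shortcut $e_v$ itself). This needs one edge label per vertex, hence $O(n)$ extra space, and $\mathrm{last}(a,\ell_1+1)$ exposes $v^\star$ in $O(m\ell_1)$ time, restoring the recurrence $2T(\ell/2)+O(m\ell)=O(mk\log k)$.

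For comparison, the paper avoids the augmented graph altogether. After computing the global target pair $(\beta,\Gamma)$, it runs, in addition to the ordinary backward DP $r_v^i=\min_{P\in\paths_{v,t}^{i}}(c(P)+\gamma(P)\alpha,\gamma(P))$, a ``pulled-back'' forward DP $l_v^i=\max_{P\in\paths_{s,v}^{i}}\bigl((\beta-c(P))/\gamma(P),\,1/\gamma(P)\bigr)$. The point is that $(\beta-c(P))/\gamma(P)$ factors correctly under concatenation, so it is a valid scalar lex DP with $O(n)$ state, and a valid midpoint is any $x$ satisfying $l^{k'}_{x,1}=r^{k-k'}_{x,1}$ and $r^{k-k'}_{x,2}/l^{k'}_{x,2}=\Gamma$. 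Both routes deliver the same bound once your accounting is repaired; the paper's forward--backward matching is somewhat more self-contained, whereas your augmented-graph version additionally requires the back-pointer propagation above.
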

We can thus request from the algorithm solving Problem~\ref{prob}
a walk/closed walk pair $F,D$ with $\gamma(D)<1$ such that $x^*_{w_g}=c(F)+\gamma(F)\cdot\cb(D)$.
Symmetrically, we can construct a walk $S_g^R=u_g\to g$, and a pair $F^R,D^R$ 
(note that $S_g^R, F^R,D^R$ are walks in $G$, not $G^R$)
such that $(y^*_g)^R=\left((\cb(D^R)-c(F^R))/\gamma(F^R)\right)-c(S_g^R))/\gamma(S_g^R)$.
Observe that then the closed walks $D,D^R$ along with the walk $F^R\cdot S_g^R\cdot S_g\cdot F$ form a negative bicycle
that constitutes a certificate of infeasibility.

\section{A simple randomized linear-space algorithm}\label{s:simple}

In this section, we describe a simple randomized algorithm 
solving Problem~\ref{prob}. 
The following lemma gives a method for computing values $\cb_{v,k}$ for any $k$.
The proof is quite technical and and is deferred to Section~\ref{s:kcycle}.
\begin{restatable}{lemma}{lparametric}\label{l:parametric}
  Let $v\in V$ and $k\leq n$.
  Then, in $O(mk^2\log{m})$ time and linear space one can
  compute either the value
  $\cb_{v,k}$ (along with a closed walk $C\in\cycs_v^k$ such that $\gamma(C)<1$ and $\cb(C)=\cb_{v,k}$
  if $\cb_{v,k}\neq\infty$)
  or a set of at most two closed walks certifying the infeasibility of the system.
\end{restatable}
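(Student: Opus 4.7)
The plan is to apply Megiddo's parametric-search paradigm. The key observation is that for any $\xi\in\mathbb{R}$, a closed walk $C\in\cycs_v^k$ with $\gamma(C)<1$ satisfies $\cb(C)<\xi$ iff $c(C)+\xi\gamma(C)<\xi$. Hence the task reduces to finding the smallest $\xi^*$ such that some $C\in\cycs_v^k$ with $\gamma(C)<1$ achieves $c(C)+\xi^*\gamma(C)\le\xi^*$.

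For any fixed $\xi$, I define the ``decision'' quantity $G_i(u)=\min_{P:\,u\to v,\,|P|\le i}\bigl(c(P)+\xi\gamma(P)\bigr)$, with $G_0(v)=\xi$ and $G_0(u)=\infty$ for $u\neq v$. The factorization $c(eQ)+\xi\gamma(eQ)=c(e)+\gamma(e)\cdot(c(Q)+\xi\gamma(Q))$ for $e=uw$ yields the reverse Bellman--Ford recurrence
\begin{equation*}
G_i(u)=\min\Bigl(G_{i-1}(u),\ \min_{uw=e\in E}\bigl(c(e)+\gamma(e)\cdot G_{i-1}(w)\bigr)\Bigr),
\end{equation*}
which evaluates $G_k(v)$ in $O(mk)$ time and linear space by retaining only two consecutive layers. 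Comparing $G_k(v)$ to $\xi$, along with inspecting the $(c,\gamma)$ pair of a witness walk, gives the decision oracle for ``$\cb_{v,k}<\xi$?''.

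The full algorithm simulates this Bellman--Ford at the \emph{unknown} target $\xi^*=\cb_{v,k}$. Every $G_i(u)$ is a concave piecewise-linear function of $\xi$ whose active linear piece at $\xi^*$ corresponds to the optimal walk; all steps reduce to constant-time arithmetic on coefficients and $\min$-comparisons between pairs $\xi\mapsto c_1+\xi\gamma_1$ and $\xi\mapsto c_2+\xi\gamma_2$. Each such comparison has a unique critical value $\xi_{\mathrm{crit}}=(c_1-c_2)/(\gamma_2-\gamma_1)$, and a single call to the $O(mk)$-time decision oracle at $\xi_{\mathrm{crit}}$ localizes $\xi^*$ relative to it. Organizing the per-vertex $\min$-reductions at each of the $k$ levels into $O(\log m)$-depth tournaments turns this into a parallel Bellman--Ford of depth $d=O(k\log m)$ and total work $W=O(mk)$. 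The standard Megiddo template resolves each parallel round by taking the (weighted) median of its pending critical values and binary-searching with the serial oracle; summing over rounds yields $O(W\log m+d\cdot T_s)=O(mk^2\log m)$ time in $O(m)$ space. Once $\xi^*$ is isolated, one additional serial Bellman--Ford at $\xi=\xi^*$ with predecessor bookkeeping recovers a walk $C\in\cycs_v^k$ with $\gamma(C)<1$ and $\cb(C)=\xi^*$.

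Certificates of infeasibility emerge organically during the simulation: whenever the running $(c,\gamma)$-pair for some intermediate optimal walk satisfies $\gamma=1$ and $c<0$, it is a negative unit-gain cycle; and whenever two walks at $v$ are discovered whose $\gamma$-values lie on opposite sides of $1$ and whose induced upper and lower bounds on $x_v$ are contradictory, they form a negative bicycle with trivial connecting path, meeting the ``at most two closed walks'' conclusion. \emph{The main obstacle} is that the minimizing walk of $G_k(v)$ need not satisfy $\gamma<1$, so $G_k(v)<\xi$ does not immediately certify $\cb_{v,k}<\xi$: a walk with $\gamma>1$ passing the threshold corresponds to a \emph{lower} bound, not the desired upper bound. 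The remedy is to carry the full $(c,\gamma)$ pair through the Bellman--Ford and argue that any ``ambiguous'' round either produces an unambiguous refinement of $\xi^*$ or surfaces one of the two infeasibility certificates above. Making this rigorous in linear space, together with careful handling of ties at $\xi^*=\xi_{\mathrm{crit}}$, is the technical content that Section~\ref{s:kcycle} must supply.
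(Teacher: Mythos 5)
Your high-level plan---Megiddo parametric search driven by a Bellman--Ford decision oracle over $(c,\gamma)$ pairs, yielding $O(mk^2\log m)$ time---is the same framework the paper uses, and you correctly isolate the central difficulty. But you explicitly stop short of resolving it: the final paragraph of your proposal is a candid acknowledgment that the hard part (what to do when the minimizing walk has $\gamma\geq 1$, how ties at critical values are broken, and how infeasibility is surfaced) is left as ``the technical content that Section~\ref{s:kcycle} must supply.'' That is precisely the content the lemma asks for, so this is a real gap rather than a routine detail. The paper closes it by first proving a multi-way decision subroutine (Lemma~\ref{l:locate-cycle}) that, crucially, breaks ties in $c(P)+\gamma(P)\xi$ by \emph{lexicographically} minimizing $\gamma(P)$; this turns a seemingly binary oracle into a five-way classifier distinguishing ``$\cb_{v,k}<\xi$'' (case~a), ``some $\gamma>1$ cycle bound $>\xi$'' (case~b), ``$\xi$ is sandwiched'' (case~c), infeasibility (case~d), and equality (case~e). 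The parametric search then maintains a more structured invariant on the interval $(a,b)$ than ``$\xi^*\in(a,b)$'': either all $\gamma>1$ cycle bounds lie $\leq a$ (invariant~2.(a)), or witness cycles $C_{\min}$ with $\gamma<1$ and $C_{\max}$ with $\gamma>1$ have been captured with $a<\cb(C_{\max})\leq\cb(C_{\min})<b$ (invariant~2.(b)). Keeping both of these alive across rounds---and showing that a breach yields an explicit bicycle or unit-gain certificate---is exactly the argument you would need to spell out.

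Two further, smaller points. First, recovering the witness closed walk by ``one additional serial Bellman--Ford with predecessor bookkeeping'' in linear space does not work: walks in $\cycs_v^k$ can revisit vertices, so predecessors must be indexed by $(\text{vertex},\text{layer})$, costing $\Theta(nk)$ space. The paper avoids this with a separate divide-and-conquer reconstruction routine (Lemma~\ref{l:construct-path}), running in $O(mk\log k)$ time and linear space; your proof needs an analogue. Second, the ``tournament of depth $O(\log m)$'' framing is unnecessary and, if taken literally, over-counts: the standard Megiddo accounting would then give $O(mk^2\log^2 m)$ rather than the claimed bound. The paper instead batches all $O(m)$ comparisons of a single Bellman--Ford layer into one lower-envelope computation plus an $O(\log m)$-step binary search over its breakpoints, which is both simpler and gives the right $O(mk^2\log m)$ bound directly.
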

Note that Problem~\ref{prob} can be easily solved deterministically
in $O(mn^3\log{m})$ time by setting $x^*_v=\cb_{v,n}$ for each $v\in V$.
However, for example, for large distinguished cycles $C\in\cycs^*$
it would be enough to set $x^*_u=\cb_{u,|C|}$ for just one vertex $u\in V(C)$,
instead of all $|C|$ vertices. Unfortunately, it is not clear how to
construct $\cycs^*$ deterministically without computing $\cb_{v,n}$ for all $v\in V$,
which again takes $O(mn^3\log{m})$ time.
We will thus use random sampling to hit all the distinguished cycles.
For example, to hit all distinguished cycles of length $\Theta(n)$ with high probability,
only $\Ot(1)$ samples are needed.

The simple randomized algorithm proceeds in $O(\log{n})$ phases numbered $0,\ldots,\ell$, where $\ell$ is the largest
integer such that $2^\ell\leq n$.
Initially, $x_v^*=\infty$ for all $v$. Consider the $j$-th phase. In the $j$-th phase we repeat the following
$t_j=\lceil n/2^j\rceil \cdot (\ell+2-j)^3$ times. First, sample a vertex $v\in V$ uniformly at random.
Then, compute $\cb_{v,2^{j+1}}$ using Lemma~\ref{l:parametric} and set $x_v^*:=\min(x_v^*,\cb_{v,2^{j+1}})$.
If Lemma~\ref{l:parametric} fails with a certificate of infeasibility, we terminate
with that certificate.
\begin{lemma}\label{l:simple-correct}
  If no certificate of infeasibility is produced, then
  with constant probability, for each distinguished cycle $C\in\cycs^*$ there exists $v\in V(C)$
  such that $\cb(D_v)\leq x_v^*\leq \cb_{v,|C|}$ for some $D_v\in\cycs_v^n$.
\end{lemma}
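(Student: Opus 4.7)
The plan is to focus, for each distinguished cycle $C \in \cycs^*$, on the smallest phase index $j^*(C) \in \{0,\ldots,\ell\}$ satisfying $2^{j^*(C)+1} \geq |C|$, and to show that with constant probability every such $C$ has a vertex sampled during phase $j^*(C)$.

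First I verify that one successful sample delivers the required inequality. Because $\cycs^*$ is built from minimizers $(P_v, C_v)$ of~\eqref{eq:shostak}, every $C \in \cycs^*$ has $\gamma(C) < 1$, and thus $\cb_{v,|C|} < \infty$ for any $v \in V(C)$. The quantity $\cb_{v,k}$ is monotone non-increasing in $k$, so $2^{j^*(C)+1} \geq |C|$ yields $\cb_{v,2^{j^*(C)+1}} \leq \cb_{v,|C|}$. If $v \in V(C)$ is sampled in phase $j^*(C)$, then Lemma~\ref{l:parametric}, which does not fail by hypothesis, returns a closed walk $D_v \in \cycs_v^{2^{j^*(C)+1}} \subseteq \cycs_v^n$ with $\gamma(D_v) < 1$ and $\cb(D_v) = \cb_{v,2^{j^*(C)+1}}$. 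Since $x_v^*$ is maintained as a running minimum of such values, its final value equals $\cb(D_v')$ for some $D_v' \in \cycs_v^n$ with $\gamma(D_v') < 1$, and $x_v^* \leq \cb_{v,|C|}$ follows. This gives the desired chain $\cb(D_v') \leq x_v^* \leq \cb_{v,|C|}$.

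The main work is the probabilistic union bound, which is where the expected obstacle lies: the cycles in $\cycs^*$ can have very different sizes, so the analysis must balance the number of candidate cycles at each scale against the sampling success probability at that scale. For $j^*(C) = j \geq 1$ one has $|V(C)| > 2^{j}$, so the probability of missing $C$ through all $t_j = \lceil n/2^j \rceil (\ell+2-j)^3$ independent uniform samples is at most $(1 - 2^j/n)^{t_j} \leq e^{-(\ell+2-j)^3}$; the analogous bound for $j^*(C) = 0$ is $e^{-(\ell+2)^3}$. Vertex-disjointness of $\cycs^*$ caps the number of cycles with $j^*(C) = j$ by $n/2^j$ for $j \geq 1$ (and by $n$ for $j=0$), so a union bound gives a total miss probability of at most
\[
n\,e^{-(\ell+2)^3} \;+\; \sum_{j=1}^{\ell} \frac{n}{2^j}\, e^{-(\ell+2-j)^3}.
\]
After substituting $k = \ell - j$ and using $n/2^\ell \leq 2$, the sum collapses to $O\!\bigl(\sum_{k \geq 0} 2^k e^{-(k+2)^3}\bigr)$, a super-geometrically convergent series dominated by its first term $2e^{-8}$; the $j=0$ term is similarly negligible because $(\ell+2)^3$ grows as $\log^3 n$ and dwarfs the prefactor $n$. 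The delicate point is precisely the calibration of the cubic exponent $(\ell+2-j)^3$ in $t_j$: it must simultaneously beat the $n/2^j$ count of candidate cycles at scale $2^j$ and yield a convergent union bound across all scales, and the cube is essentially the cleanest such choice. The total miss probability is therefore a tiny constant, well below $1$, establishing the claim.
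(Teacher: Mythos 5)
Your proof is correct, but it takes a genuinely different route from the paper. The paper's proof uses a coupon-collector-style expectation argument: it defines $X_{j,S}$ as the number of samples needed to hit all cycles in $S \subseteq \cycs^*_j$, proves by induction on $|S|$ that $\mathbb{E}[X_{j,S}] \leq \frac{n}{2^j}\sum_{i=1}^{|S|}\frac{1}{i}$, bounds $\mathbb{E}[X_{j,\cycs^*_j}] \leq \frac{n}{2^j}(\ell+2-j)$, and then applies Markov's inequality to get per-phase failure probability $p_j \leq 1/(\ell+2-j)^2$, summing to $\pi^2/6-1 < 2/3$. You instead bound the per-cycle miss probability directly via $(1-|V(C)|/n)^{t_j} \leq e^{-(\ell+2-j)^3}$ and union-bound over all cycles. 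Your approach is more elementary (no induction, no Markov) and yields an exponentially small failure probability rather than a constant below $1$, though both comfortably establish the ``constant probability'' claim. One consequence of your stronger bound: with your analysis, an exponent of $2$ rather than $3$ in $t_j$ would already give a convergent union bound (your closing remark that the cube is ``essentially the cleanest such choice'' overstates its necessity for your argument; the cube is what the paper's Markov route needs, given the $(\ell+2-j)^2$ factor it spends). Also, your phrase that ``$(\ell+2)^3$ grows as $\log^3 n$ and dwarfs the prefactor $n$'' is imprecise -- what you actually use is $n \leq 2^{\ell+1}$, so $n\,e^{-(\ell+2)^3}\leq 2^{\ell+1}e^{-(\ell+2)^3}\leq 2e^{-8}$ for every $\ell\geq 0$ -- but the computation is fine.
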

\begin{proof}
Denote by $p$ the probability that we fail to achieve the goal from the lemma's statement.
So our aim is on proving that $p$ is bounded from above by a constant less than $1$.

  Let $\cycs^*_j\subseteq \cycs^*$ be the subset of cycles of $\cycs^*$ of size between $2^j$ and $2^{j+1}$.
  Since the cycles in $\cycs^*$ are vertex-disjoint, we have $\sum_{C\in\cycs^*}|C|\leq n$.
  As a result, $|\cycs^*_j|\leq n/2^j$.
 
  In the following, \emph{to hit a cycle $C\in\cycs^*_j$} means to sample a vertex that is an element of $V(C)$.
  Clearly, after a cycle is hit with $v\in V(C)$, we set $x^*_v$ to a value no more than $\cb_{v,2^{j+1}}\leq \cb_{v,|C|}$.
  Let $p_j$ be the probability that during phase $j$ we fail to hit some $C\in \cycs^*_j$.
  By the union bound, we have:
  \begin{equation*}
    p\leq \sum_{j=0}^\ell p_j.
  \end{equation*}

  Imagine that we sample vertices in the $j$-th phase indefinitely.
  For any $S\subseteq \cycs^*_j$, let
  the random variable $X_{j,S}$ denote the minimal number of initial
  samples in the $j$-th phase that hit every cycle in $S$ at least once.

  We now show that for any $S\subseteq \cycs^*_j$, $\mathbb{E}[X_{j,S}]\leq \frac{n}{2^j}\cdot \sum_{i=1}^{|S|} \frac{1}{i}$.
  We prove the claim by induction on~$|S|$. The case $S=\emptyset$ is trivial.
    For any $C\in \cycs^*_j$, the probability that a sample hits the cycle $C$ is clearly $|C|/n$. 
    Moreover, let $y=\sum_{C\in S}|C|$.
    We have:
    \begin{align*}
      \mathbb{E}[X_{j,S}]&=1+\left(\sum_{C\in S}\frac{|C|}{n}\cdot \mathbb{E}[{X_{j,S\setminus\{C\}}}]\right)+\frac{n-y}{n}\cdot \mathbb{E}[X_{j,S}]\\
                        &\leq1+\left(\sum_{C\in S}\frac{|C|}{n}\cdot \left(\frac{n}{2^j}\cdot \sum_{i=1}^{|S|-1}\frac{1}{i}\right)\right)+\frac{n-y}{n}\cdot \mathbb{E}[X_{j,S}]\\
                        &\leq 1+\frac{y}{2^j}\cdot \left(\sum_{i=1}^{|S|-1}\frac{1}{i}\right)+\left(1-\frac{y}{n}\right)\cdot \mathbb{E}[X_{j,S}].
    \end{align*}
    By a simple transformation, and since $y\geq |S|\cdot 2^j$, we obtain:
    \begin{equation*}
      \mathbb{E}[X_{j,S}]\leq \frac{n}{y}+\frac{n}{2^j}\cdot \sum_{i=1}^{|S|-1}\frac{1}{i}\leq \frac{n}{2^j}\cdot \frac{1}{|S|}+\frac{n}{2^j}\cdot \sum_{i=1}^{|S|-1|}\frac{1}{i}=\frac{n}{2^j}\cdot \sum_{i-1}^{|S|}\frac{1}{i},
    \end{equation*}
    as desired.
  By the obtained bound and $|\cycs^*_j|\leq n/2^j$, we have:
  \begin{equation*}
    \mathbb{E}\left[X_{j,\cycs^*_j}\right]\leq \frac{n}{2^j}\cdot \sum_{i=1}^{|\cycs^*_j|}\frac{1}{i}\leq 
    \frac{n}{2^j}\left(\ln\frac{n}{2^j}+1\right)\leq \frac{n}{2^j}\cdot \log_2{\frac{2n}{2^j}}\leq \frac{n}{2^j}(\ell+2-j).
  \end{equation*}
  Since in the $j$-th phase we perform sampling $t_j=\lceil n/2^j\rceil \cdot (\ell+2-j)^3$ times, we have:
  \begin{equation*}
    p_j\leq \mathbb{P}\left[X_{j,\cycs^*_j}>t_j\right]\leq
    \mathbb{P}\left[X_{j,\cycs^*_j}>\mathbb{E}\left[X_{j,\cycs^*_j}\right]\cdot (\ell+2-j)^2\right]\leq \frac{1}{(\ell+2-j)^2},
  \end{equation*}
  where the latter bound above follows by Markov's inequality.
  Consequently, we have:
  \begin{equation*}
    p\leq \sum_{j=0}^\ell p_j\leq \sum_{j=0}^\ell \frac{1}{(\ell+2-j)^2}\leq \left(\sum_{j=1}^\infty\frac{1}{i^2}\right)-1=\frac{\pi^2}{6}-1<\frac{2}{3},
  \end{equation*}
  which concludes the proof.
\end{proof}
\begin{lemma}\label{l:simple-time}
  The $\ell+1$ phases of the algorithm require $O(mn^2\log{m})$ time and linear space.
\end{lemma}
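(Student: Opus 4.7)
The plan is to bound the work of phase $j$ directly using Lemma~\ref{l:parametric} and then show the sum across phases telescopes to $O(mn^2 \log m)$. For the space bound, I will observe that each call to Lemma~\ref{l:parametric} already runs in linear space, and the only additional persistent state across calls is the vector $x^*_v$ of size~$n$.

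First, I would fix phase $j$. In this phase we perform $t_j = \lceil n/2^j \rceil \cdot (\ell+2-j)^3$ invocations of Lemma~\ref{l:parametric} with parameter $k = 2^{j+1}$, each of which runs in $O(m k^2 \log m) = O(m \cdot 4^{j+1} \log m) = O(m \cdot 4^j \log m)$ time. Multiplying, the total time of phase~$j$ is
\begin{equation*}
  O\!\left(\lceil n/2^j \rceil \cdot (\ell+2-j)^3 \cdot m \cdot 4^j \log m\right) = O\!\left(nm \log m \cdot 2^j (\ell+2-j)^3\right).
\end{equation*}

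Next I would sum this over $j = 0, \ldots, \ell$. Writing $i = \ell+2-j$, we have $2^j = 2^{\ell+2-i} \leq 4n/2^i$, so
\begin{equation*}
  \sum_{j=0}^\ell 2^j (\ell+2-j)^3 \;\leq\; \sum_{i=2}^{\ell+2} \frac{4n}{2^i} \cdot i^3 \;\leq\; 4n \sum_{i=1}^\infty \frac{i^3}{2^i} \;=\; O(n),
\end{equation*}
where the last equality uses the fact that the series $\sum_{i\geq 1} i^3/2^i$ converges. Plugging this back yields a total running time of $O(nm\log m)\cdot O(n) = O(mn^2 \log m)$ across all phases.

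For the space bound, Lemma~\ref{l:parametric} by itself works in linear space, and after each invocation we only need to retain a single real number $x^*_v$ per vertex (the running minimum), which uses $O(n)$ additional space. Hence all workspace for the parametric search subroutine can be reused between invocations, and the algorithm overall uses $O(n+m)$ space. The only mildly non-obvious step is the summation, but the exponential decay of $2^j$ as $j$ decreases easily dominates the cubic factor $(\ell+2-j)^3$, so there is no real obstacle.
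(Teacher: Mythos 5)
Your proof is correct and follows the same approach as the paper: bound the work of phase $j$ via Lemma~\ref{l:parametric}, factor out $nm\log m$, and show that $\sum_j 2^j(\ell+2-j)^3 = O(n)$ because the exponential growth of $2^j$ dominates the polynomial factor (the paper factors out $2^{\ell+2}$ directly while you substitute $i=\ell+2-j$, but these are the same computation). You also explicitly address the space claim, which the paper's proof leaves implicit, so if anything your write-up is slightly more complete.
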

\begin{proof}
  By Lemma~\ref{l:parametric}, the running time can be bounded as follows:
  \begin{equation*}
    \sum_{j=0}^\ell \left(O\left(m\cdot (2^j)^2 \log{m}\right)\cdot \left\lceil\frac{n}{2^j}\right\rceil\cdot (\ell+2-j)^3\right)=O\left(nm\log{m}\cdot\sum_{j=0}^\ell 2^j(\ell+2-j)^3\right).
  \end{equation*}
  Moreover, we have:
  \begin{equation*}
    \sum_{j=0}^\ell 2^j(\ell-j+1)^3=2^{\ell+2}\cdot \sum_{j=0}^\ell\frac{(\ell+2-j)^3}{2^{\ell+2-j}}\leq 2^{\ell+2}\cdot \sum_{j=0}^\infty\frac{j^3}{2^j}=2^{\ell+2}\cdot O(1)=O(n).
  \end{equation*}
\end{proof}
\begin{remark}
  The purpose of using $\lceil n/2^j\rceil \cdot (\ell+2-j)^3$ samples instead of simply e.g., $\Theta(\lceil n/2^j\rceil\cdot\ell)$
  (which would guarantee high probability correctness) in the $j$-th phase
  is to save an $O(\log n)$ factor in the final expected time bound.
\end{remark}

\begin{lemma}\label{l:simple}
  There is a Las Vegas algorithm solving a monotone 2VPI system (or producing a minimal certificate
  of infeasibility) in $O(mn^2\log{m})$ expected time
  and linear space.
  With high probability, the algorithm finishes in $O(mn^2\cdot \log{m}\cdot \log{n})$ time.
\end{lemma}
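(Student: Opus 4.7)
The plan is to wrap the randomized procedure developed above in a Las Vegas amplification loop. In each iteration, I run the phased sampling algorithm on $G$ to obtain values $x_v^*$, invoke Lemma~\ref{l:reduction} to compute $y_v^*$ in $O(nm)$ additional time, and symmetrically run the same algorithm on the reverse instance $G^R$ to obtain $(y_v^*)^R$. By Lemma~\ref{l:simple-time} a single iteration takes $O(mn^2\log{m})$ time and uses linear space. If any call to Lemma~\ref{l:parametric} inside the sampling phase returns a certificate of infeasibility, the overall algorithm terminates with it immediately.

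The verification uses the two sound tests from Section~\ref{s:simplify}. First, if some $v$ satisfies $y_v^* < (y_v^*)^R$, the recipe at the end of Section~\ref{s:simplify} together with Lemma~\ref{l:construct-path} produces an explicit negative bicycle in $\Ot(nm)$ time, certifying infeasibility. Second, if $y^*$ satisfies the 2VPI system when plugged in, then since every finite $x_w^*$ is by construction a legitimate upper bound on $\xmax{w}$ and propagation preserves upper bounds, we have $y^*\ge \xmax{}$; combined with the pointwise maximality of $\xmax{}$ among feasible vectors, this forces $y^* = \xmax{}$, which is returned. The crucial point is that whenever both the forward and reverse subruns succeed in the sense of Lemma~\ref{l:simple-correct} (which happens with probability at least $(1/3)^2 = 1/9$ by independence of their random bits), one of these tests is guaranteed to fire: in the feasible case, Lemma~\ref{l:reduction} yields $y^* = \xmax{}$, which is automatically feasible; in the infeasible case, Lemma~\ref{l:reduction} applied to both instances gives $y_{\vbad}^* \le x^\le_{\vbad} < x^\ge_{\vbad} \le (y_{\vbad}^*)^R$.

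Thus each iteration succeeds with constant probability, giving $O(1)$ iterations in expectation and expected running time $O(mn^2\log{m})$. After $c\log{n}$ independent iterations the failure probability is at most $(8/9)^{c\log{n}} = n^{-\Omega(1)}$ for sufficiently large~$c$, which yields the $O(mn^2\log{m}\log{n})$ bound with high probability. Since iterations are run sequentially and memory can be reused between them, the total space remains linear. The main subtlety I anticipate is soundness of the feasibility test in the presence of coordinates $y_v^* = \infty$; this is handled by an $O(nm)$-time one-time preprocessing that identifies vertices with genuinely unbounded $\xmax{}$ (those from which no cycle $C$ with $\prod_{e\in C}\gamma(e) < 1$ is reachable) and requires the test to declare $y_v^*$ finite exactly on the complement---adding only lower-order overhead.
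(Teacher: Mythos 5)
Your amplification/verification loop is sound and the expected-time and whp bounds follow as you argue, but your verification step differs from the paper's in an interesting way. The paper checks pointwise maximality directly: after verifying that the candidate $x$ satisfies the system, it checks that every $u$ has a tight outgoing constraint (otherwise $x_u$ could be increased), concludes $x$ is a vertex of the feasible polytope, and then tests whether any adjacent pivot increases $\sum_v x_v$. Your argument sidesteps this entirely by noting that, by construction, each finite $x^*_w$ is of the form $c(F)+\gamma(F)\cb(D)$ with $\gamma(D)<1$ and hence is a certified upper bound on $\xmax{w}$, so $y^*\ge\xmax{}$ holds deterministically for every run; if $y^*$ additionally satisfies~\eqref{eq:system} then $y^*\le\xmax{}$ by pointwise maximality, forcing equality. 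This is a cleaner maximality check and it removes the need for the polytope-pivot reasoning, at the cost of relying on the over-estimate invariant, which the paper's more self-contained verification does not use.

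Two small points. First, your probability estimate $(1/3)^2$ for simultaneous success of the forward and reverse runs and the resulting $(8/9)^{c\log n}$ whp bound are fine; the paper simply states a $1/3$ per-iteration success probability and absorbs the reverse run implicitly via the Section~\ref{s:simplify} reduction, so the constants differ but not the order of growth. Second, the $\infty$-coordinate subtlety you flag is real: a vector with $y^*_v=\infty$ everywhere vacuously satisfies all edge inequalities and would pass a naive feasibility test even when every $\xmax{v}$ is finite. Your fix---precomputing $V_\infty=\{v:\xmax{v}=\infty\}$ as the vertices from which no cycle with product of gains less than $1$ is reachable, and insisting $y^*_v$ be finite exactly off $V_\infty$---is correct, because $x^*_w$ (and hence $y^*_v$) can only be $\infty$ on $V_\infty$, and $V_\infty$ is closed under outgoing edges. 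But the one-sentence claim that this set is computable in $O(nm)$ time hides a bit of work: you need to identify, per SCC, whether it contains a simple cycle with $\gamma<1$ (e.g.\ by a negative-cycle test on $\log\gamma$ weights restricted to the SCC) and then take reverse reachability to the union of such SCCs. That is indeed $O(nm)$ overall, but it is worth spelling out since a naive per-vertex Bellman--Ford would be $O(n^2m)$.
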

\begin{proof}
  The lemmas~\ref{l:simple-correct}~and~\ref{l:simple-time} combined show that $\ell+1$ phases produce a correct solution or certify infeasibility with probability at least $1/3$
  in $O(mn^2\log{m})$ time.  

  We now discuss how to verify whether the algorithm made a mistake and we need to rerun it.
  Consider some run of the algorithm.
  If the run failed with a certificate, then we can stop since the system is infeasible for sure.
  Otherwise, the run attempted to compute a pointwise maximal solution $x$.
  If $x$ does not satisfy~\eqref{eq:system}, which can be tested in $O(m)$ time, then the run failed to produce a certificate, 
  so we restart.
  Otherwise, $x$ certifies that the system is feasible, so we only need to verify whether $x$ is indeed pointwise maximal.
  Clearly, if for some $u$ we have $x_u<\min_{uv=e\in E}\{c(e)+\gamma(e)\cdot x_v\}$, then
  one can increase $x_u$ slightly without violating the inequalities and obtain a better solution.
  So the run indeed made a mistake.
  Otherwise, $x$ is a vertex of the feasible polytope.
  So if $x$ is not optimal, it is possible to move to an adjacent vertex which improves
  the objective function $\sum_{v\in V}x_v$.
  Therefore, for each $uv=e\in E$ with $x_u=c(e)+\gamma(e)\cdot x_v$ we need
  to check whether for some $\alpha>0$ one can increase $x_u$ by $\alpha$ and $x_v$ by $\gamma(e)\alpha$
  without violating any of the other inequalities.
  This can be easily verified in $O(nm)$ time or even near-linear time.

  Clearly, the expected number of times the algorithm is repeated before producing
  a correct answer is at most~$3$.
  Within $O(\log{n})$ repeats, the failure probability gets decreased to inversely polynomial in $n$.

  Since we only store the values $x_v^*$ for $v\in V$, the used space is linear by Lemma~\ref{l:parametric}.
\end{proof}

\section{A trade-off algorithm}

The randomized trade-off algorithm solving Problem~\ref{prob} is obtained via a combination of the simple
approach 
of Section~\ref{s:simple}
with the following result of Cohen and Megiddo~\cite{CohenM94}.
\begin{theorem}\label{t:cohen}
  There exists a Las Vegas randomized algorithm testing feasibility and computing a pointwise maximal solution
  to a monotone 2VPI system~\eqref{eq:system} in $\Ot(n^3+nm)$ time and using $O(n^2+m)$ space.
  The time bound holds with high probability.
\end{theorem}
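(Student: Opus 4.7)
The plan is to prove Theorem~\ref{t:cohen} by following a repeated-squaring paradigm for computing implicit upper bound inequalities between all ordered pairs of vertices, combined with a randomized batched parametric search subroutine. For each ordered pair $(u,v)$, we will maintain the tightest inequality of the form $x_u\leq c_{uv}+\gamma_{uv}\cdot x_v$ implied by some $u\to v$ walk discovered so far. Over $O(\log n)$ rounds of doubling the walk length, this yields the tightest pairwise inequalities implied by walks of any length in $G$; a single $O(nm)$-time propagation phase in the spirit of Lemma~\ref{l:reduction} then converts them into $\xmax{}$.

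In each doubling round we would compose the pairwise inequalities of the previous round through all possible intermediate vertices, yielding up to $n^3$ candidate concatenations filling $n^2$ target slots. The difficulty is pruning: for every fixed pair $(u,v)$ we must keep the concatenation through an intermediate $w$ that imposes the tightest bound on $x_u$ given that $x_v$ itself is still subject to later rounds. This is the natural setting for parametric search, since tightness has to be evaluated against an unknown threshold. To handle all $n$ sources simultaneously I would invoke the batched location procedure (Lemma~\ref{l:cohen} mentioned in the overview) that performs, in $\Ot(nm)$ expected time, $n$ parallel executions of Lemma~\ref{l:aspvall}, one per source $u$ with its own threshold $\xi_u$. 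Randomization enters precisely here: whenever the Bellman-Ford-style evaluator has to branch on a comparison whose sign depends on the unknown $\xi_u$'s, we resolve ambiguous comparisons in a random order so that the expected number of distinct parametric queries remains near-linear.

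Putting the ingredients together, each of the $O(\log n)$ rounds costs $\Ot(nm)$ expected time for batched pruning plus $\Ot(n^3)$ for the matrix-style concatenation arithmetic, for a total of $\Ot(n^3+nm)$ expected time. A Las Vegas wrapper analogous to that in the proof of Lemma~\ref{l:simple}—verifying at the end that $\xmax{}$ both satisfies~\eqref{eq:system} and admits no local improvement along a tight edge, or otherwise extracting a certificate of infeasibility—turns the expected bound into a high-probability one at the cost of $O(\log n)$ extra repetitions. Storage is dominated by the $\Theta(n^2)$ pairwise inequalities maintained across rounds plus the input itself, giving $O(n^2+m)$ space.

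The main obstacle, and the step on which I would spend most effort, is the batched parametric search: proving that randomizing the resolution order of ambiguous comparisons keeps the total expected work for $n$ simultaneous instances of Lemma~\ref{l:aspvall}, each with its own threshold $\xi_u$, at $\Ot(nm)$ rather than $\Ot(n^2 m)$. Every other ingredient—the doubling template, the final propagation to $\xmax{}$, and the Las Vegas verification—is a straightforward adaptation of machinery already developed in the preceding sections.
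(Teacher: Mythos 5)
Theorem~\ref{t:cohen} is not actually proved in this paper; it is a result imported verbatim from Cohen and Megiddo~\cite{CohenM94} and used as a black box, alongside Lemma~\ref{l:cohen} from the same source. So there is no internal proof to compare your sketch against. That said, the paper's technical overview does characterize Cohen--Megiddo's randomized algorithm as a randomized reduction to parallel APSP in the comparison--addition model, running in $\Ot(W + d\cdot nm)$ time given a parallel APSP routine of work $W$ and depth $d$. Your repeated-squaring template (work $\Ot(n^3)$, depth $O(\log n)$) plugs into this framework and recovers exactly $\Ot(n^3+nm)$, so the outer shell of your argument is consistent with what the paper reports about~\cite{CohenM94}.

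Two points in your sketch do not hold up, however. First, you attribute the randomization to ``resolving ambiguous comparisons in a random order'' inside the Bellman--Ford-style evaluator. That is not the mechanism behind Lemma~\ref{l:cohen}. As the paper's sketch of that lemma explains, the randomization consists of sampling random subsets of $V$ so that with high probability some subset isolates a uniformly random \emph{big} vertex $v$ (one with $\xmax{v}<\xi_v$); running the linear-space location procedure of Lemma~\ref{l:locate-ext} on each sample, exploiting the property that a certificate spanning $\ell$ vertices is found in $O(m\ell)$ time; and then propagating the discovered certificate for $\Theta(\ell)$ steps to identify an $\Omega(\ell/n)$-fraction of the remaining big vertices. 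The near-linear expected time comes from this iterative sample--locate--propagate loop, not from randomizing a comparison-resolution order. Second, a smaller but real slip: in each doubling round the parametric comparisons to be located are against the unknown \emph{target} value $\xmax{v}$ (one must choose, for each pair $(u,v)$, the line that is lowest at $x_v=\xmax{v}$), so Lemma~\ref{l:cohen} is invoked with thresholds indexed by targets $v$, not by ``sources $u$'' as you wrote. You correctly flag the batched location as the crux, but you do not actually argue it, and the mechanism you propose for it is not the one that works.
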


This combination is made possible by the key technical trick from~\cite{CohenM94} that we state as the following lemma.
\begin{lemma}\label{l:cohen}\textup{\cite[Proposition~5.2]{CohenM94}}
  Suppose the system~\eqref{eq:system} is feasible.
  Let $\xi_v\in \mathbb{R}$ for each $v\in V$. 
  In $\Ot(nm)$ expected time and linear space one can decide, for each individual $v\in V$, whether
  $\xmax{v}<\xi_v$.
  The algorithm is Las Vegas randomized and the time bound holds with high probability.
\end{lemma}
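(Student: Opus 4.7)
The plan is to reduce the $n$ individual threshold queries to $O(\log n)$ Bellman–Ford-style passes of total work $\Ot(nm)$, exploiting random sampling along certifying walks in a manner analogous to randomized APSP. By Shostak's characterization (Theorem~\ref{t:shostak}) and the length-truncation argument used in the proof of Lemma~\ref{l:reduction}, $\xmax{v}<\xi_v$ is equivalent to the existence of a walk $W=P_vC_v\in\paths_{v,v}^{O(n)}$, where $P_v\in\paths_{v,w}$ and $C_v\in\cycs_w^{n}$ with $\gamma(C_v)<1$, such that $c(P_v)+\gamma(P_v)\cdot \cb(C_v)<\xi_v$. So the task becomes: for each $v$, detect a short path-cycle certificate whose propagated value undercuts the vertex-specific threshold $\xi_v$.

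First I would handle one power-of-two length class at a time. For $k=1,2,4,\ldots,n$, sample a set $R_k$ of $\Ot(n/k)$ vertices; then for every $v$ whose optimal certificate prefix $P_v$ has length in $[k,2k)$, the random set hits $P_v$ with high probability. From each sampled $r\in R_k$ I would compute, in $O(mk\log m)$ time and linear space via (a variant of) Lemma~\ref{l:parametric}, a small-depth summary of the best value $\cb(C)$ attainable from $r$ by cycles through $r$ of length $\leq O(k)$. Summed over samples and scales, this contributes $\sum_k (n/k)\cdot mk\log m = \Ot(nm)$ work.

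Next I would stitch the samples back to the sources by a single backward Bellman–Ford-style pass of depth $O(n)$. Initialize $y_r$ at every sampled vertex $r$ with its computed cycle-bound summary and $y_u=+\infty$ elsewhere, then repeatedly apply $y_u\leftarrow\min_{uv=e}\{c(e)+\gamma(e)\cdot y_v\}$ for $O(n)$ rounds. Crucially, this single propagation is \emph{source-agnostic}: each vertex stores only the best combined value so far, regardless of which sample $r$ it corresponds to, so the pass runs in $O(nm)$ total time and linear space. After the pass, declare $\xmax{v}<\xi_v$ iff $y_v<\xi_v$. A union bound over the $O(\log n)$ scales gives the desired correctness w.h.p., and standard Las Vegas verification (rerun Lemma~\ref{l:aspvall} on any $v$ whose decision is suspect; the expected count of such checks is $O(1)$ by the high-probability success) yields the expected bound and confirms the answer.

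The main obstacle is that the thresholds $\xi_v$ differ per source, which rules out the naive trick of shifting the graph once so that all sources share a common pivot: one must not propagate $\xi_v$ at all during the pass, but instead accumulate a single source-independent quantity (the best $c(P)+\gamma(P)\cdot\cb(C)$ achievable at $v$) and only compare to $\xi_v$ at the end. A secondary subtlety is that Lemma~\ref{l:parametric} in its stated form searches only over closed walks of bounded length through a specific vertex, so one must verify that restricting to cycles hit by $R_k$ (rather than all cycles of that length class) still captures the minimum-value certificate for each relevant $v$ — this is exactly what the sampling guarantee, together with the suffix-change identity~\eqref{eq:source-change}, provides.
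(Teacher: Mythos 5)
There is a genuine gap, and in fact the proposal rederives something weaker than what the lemma claims. Your plan amounts to computing, by sampling cycles of each length scale and then propagating, a value at each vertex that is essentially $\xmax{v}$ itself, and only comparing to $\xi_v$ at the very end. But the whole point of Cohen and Megiddo's lemma is that \emph{deciding} $\xmax{v}<\xi_v$ for every $v$ can be done in $\Ot(nm)$ time even though \emph{computing} all $\xmax{v}$ exactly provably requires the full 2VPI machinery (which costs $\Ot(nm+n^3)$, and which in the paper is achieved by feeding the output of Lemma~\ref{l:cohen} into further structure). An algorithm of your form, which first computes the actual bounds and only then compares to the per-vertex thresholds, cannot beat the cost of computing those bounds.

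Concretely, the arithmetic does not close. Lemma~\ref{l:parametric} runs in $O(mk^2\log m)$ time, not $O(mk\log m)$: the parametric search performs $k$ propagation rounds, each requiring a binary search of $O(\log m)$ steps, each step calling the $O(mk)$-time locate procedure of Lemma~\ref{l:locate-cycle}. With the correct cost, your sum over scales becomes $\sum_k (n/k)\cdot mk^2\log m=\Theta(n^2m\log m\log n)$ — this matches the simple algorithm of Section~\ref{s:simple}, not the $\Ot(nm)$ target. There is also a structural confusion in the hitting argument: to compute a cycle bound you must hit the cycle $C_v$, but you aim $R_k$ at the path $P_v$; a cycle bound through a vertex of $P_v$ that is not on $C_v$ has no a priori relation to $\cb(C_v)$. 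The Equation~\eqref{eq:source-change} suffix-change identity lets you move the cycle's base point \emph{along $C_v$ only}, not onto $P_v$.

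The missing idea is that the paper never computes a cycle bound. Instead, the algorithm of Cohen and Megiddo, as sketched in the paper's proof, is iterative: it uses the locate procedure of Lemma~\ref{l:locate-ext} (whose running time is $O(m\ell)$ when the returned certificate has $\ell$ vertices) together with random subsets $S_i$ tuned so that each likely contains a \emph{unique} undetected big vertex, then propagates only $\Theta(\ell)$ rounds to harvest an $\Omega(\ell/n)$-fraction of the remaining $\ell$-big vertices. The per-big-vertex amortized cost is then $\Ot(m)$, which is where $\Ot(nm)$ comes from. Your proposal has no analogue of the early-terminating locate, no adaptive estimate of the number of undiscovered big vertices, and no fractional progress accounting — precisely the three ingredients that make the bound achievable — so the argument does not establish the lemma.
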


Note the difference between Lemma~\ref{l:cohen} and Lemma~\ref{l:aspvall}.
The latter can detect whether for \emph{some}~$v$ we have $\xmax{v}<\xi_v$.
Lemma~\ref{l:cohen} can decide this for \emph{all} $v\in V$ at once
within roughly the same time. However, this comes at a cost of using
randomization that seems difficult to eliminate.
Since the algorithm in~\cite{CohenM94} actually required $O(n^2)$ space, we will next sketch this algorithm
and explain how the space can be reduced to linear by 
using the following generalization of Lemma~\ref{l:aspvall}.
\begin{restatable}{lemma}{locateext}\label{l:locate-ext}
  Suppose the system~\eqref{eq:system} is feasible.
  Let $\xi_v\in \mathbb{R}\cup\{-\infty\}$ for each $v\in V$.
  
  In $O(nm)$ time and linear space one can decide whether $\xmax{v}<\xi_v$ holds for \emph{some} $v\in V$.
  Moreover, the algorithm has the following properties.
  \begin{itemize}
    \item If the answer is affirmative, then for some $v$ with $\xmax{v}<\xi_v$, 
      a certificate $(P,C)\in \paths_{v,w}^n\times \cycs_{w,w}^n$ such that
      $\gamma(C)<1$ and
      $c(P)+\gamma(P)\cdot \cb(C)<\xi_v$ is returned.
      Moreover, in such a case, if $|V(P\cup C)|=\ell$, then the algorithm
      terminates in $O(m\ell)$ time.\footnote{We stress that it \emph{does not} necessarily hold that if some certificate $(P',C')$ with $|V(P'\cup C')|=\ell$
      exists, then the algorithm terminates in $O(m\ell)$ time. In such a case, the algorithm may still take e.g., $\Theta(nm)$ time, but then
      the returned certificate $(P,C)$ satisfies $|V(P\cup C)|=\Theta(n)$.}

    \item If $\xmax{v}<\xi_v$ holds for precisely one $v\in V$, then
      the algorithm produces the same certificate regardless of the exact values
      of $\xi_w\leq \xmax{w}$ for all other $w\in V\setminus\{v\}$.
 \end{itemize}
\end{restatable}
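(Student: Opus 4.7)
The plan is to extend the classical Aspvall-Shiloach procedure underlying Lemma~\ref{l:aspvall} to handle per-vertex thresholds $\xi_v$ simultaneously. By Shostak's Theorem~\ref{t:shostak}, $\xmax{v}<\xi_v$ is witnessed by $P\in\paths_{v,w}$ and $C\in\paths_{w,w}$ with $\gamma(C)<1$ and $c(P)+\gamma(P)\cdot\cb(C)<\xi_v$; equivalently, augmenting the monotone 2VPI system with the lower bounds $\{x_v\geq \xi_v : \xi_v\in\mathbb{R}\}$ is infeasible iff $\xmax{v}<\xi_v$ for some~$v$ (when $\xi_v\leq \xmax{v}$ for all $v$, the vector $\xmax{}$ itself satisfies the augmented system). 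The algorithm propagates these lower bounds along edges of $G$: initialize $z_v:=\xi_v$ for $\xi_v\in\mathbb{R}$ and $z_v:=-\infty$ otherwise, and for up to $n$ rounds apply the relaxation $z_v\leftarrow \max\bigl\{z_v,\,(z_u-c(e))/\gamma(e)\bigr\}$ over every edge $e=uv\in E$, recording on each strict update a predecessor edge $\pi(v)$ and a source vertex $s(v)\in V$ (inherited from $s(u)$).

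Correctness rests on the algebraic fact that iterating a closed walk $C$ at $v$ sends $z_v\mapsto \cb(C)+(z_v-\cb(C))/\gamma(C)$: for $\gamma(C)<1$ and $z_v>\cb(C)$, the iterates diverge from $\cb(C)$ geometrically with ratio $1/\gamma(C)>1$, whereas for $\gamma(C)\geq 1$ they converge toward $\cb(C)$ (the case $\gamma(C)=1,\,c(C)<0$ is excluded by feasibility of the original system). We declare a certificate when the $\pi$-chain rooted at some vertex revisits a vertex $w$ \emph{and} the enclosed closed walk $C$ satisfies $\gamma(C)<1$; revisits enclosing cycles with $\gamma\geq 1$ are ignored since they correspond to convergent rather than divergent iteration. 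Once such a $(w,C)$ is found, we recover the outer walk $Q:s(w)\to w$ in $G$ that originally raised $z_w$ past $\cb(C)$; combined with $C$ this yields $c(Q)+\gamma(Q)\cdot\cb(C)<\xi_{s(w)}$, precisely the Shostak certificate for $\xmax{s(w)}<\xi_{s(w)}$. Peeling $Q$ and $C$ to a simple path and simple cycle while preserving the target inequality is the standard cycle-peeling argument based on Equation~\eqref{eq:source-change} and on the fact that feasibility of the original system forbids negative unit-gain sub-cycles.

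The remaining properties follow from careful bookkeeping. Recovering $Q$ within the $O(m\ell)$ budget (where $\ell=|V(P\cup C)|$) is delicate because $\pi(w)$ is overwritten once cycle iterations begin; a clean fix is, upon detecting $(w,C)$ at round $k$, to rerun a bounded Bellman-Ford from $s(w)$ for $k$ rounds to find a certifying walk $Q$, costing $O(mk)=O(m\ell)$. Linear space suffices since only $O(n)$ words are kept for $z,\pi,s$ alongside the read-only input. For the deterministic-certificate property, the key observation is that when $\xi_w\leq\xmax{w}$ for every $w\neq v^*$, no blow-up can be seeded by any $w\neq v^*$: such a blow-up would certify $\xmax{w}<\xi_w$, contradicting $\xi_w\leq\xmax{w}$; hence every certificate the algorithm can return has seed $v^*$, and committing to a canonical tie-breaking rule (for instance, the lex-smallest edge scanned in a fixed order each round) forces the returned $(P,C)$ to be a deterministic function of $G$ and $\xi_{v^*}$ alone. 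The main technical obstacle is precisely this certificate-reconstruction step: reliably distinguishing divergent ($\gamma<1$) from convergent ($\gamma\geq 1$) chain cycles and then extracting a simple path/cycle pair within the promised time budget---where the identity~\eqref{eq:source-change} together with a monotone-ascent invariant on the $z$-values do the heavy lifting.
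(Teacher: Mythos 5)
Your proposal follows the same conceptual route as the paper: a Bellman--Ford-style lower-bound propagation (essentially the Restrepo--Williamson negative-GAP algorithm, which the paper acknowledges and cites), with termination when the predecessor graph acquires a cycle of gain $<1$. The algebraic setup, the exclusion of unit-gain cycles by feasibility, the equivalence between ``augmented system infeasible'' and ``some $\xmax{v}<\xi_v$'', and the ignoring of $\gamma\geq 1$ predecessor cycles are all correct. However, two of the three non-trivial claims in the lemma are asserted rather than proved, and the missing pieces are precisely the parts the paper devotes most of its effort to.

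First, the $O(m\ell)$ termination bound. You write that the rerun after detection at round $k$ costs ``$O(mk)=O(m\ell)$'', but the equality $k=O(\ell)$ is exactly the hard part and does not follow from anything you say. The danger is that a small certificate $(P,C)$ is detected only after many rounds. The paper's Lemma~\ref{l:rw2} rules this out: it shows that whenever the predecessor graph $H_j$ first acquires a cycle $C$ with $\gamma(C)<1$, the resulting certificate $(P_w,C)$ satisfies $|V(P_w\cup C)|\geq j/2$, because the walk $P_wC$ achieves a strictly better value than $y_{w,j-1}$ and hence must have at least $j$ edges. That strict inequality in turn rests on the tie-breaking rule ``keep $p_{w,j-1}$ when $y_{w,j}=y_{w,j-1}$'' so that a predecessor change certifies a strict increase (Lemma~\ref{l:rw1}). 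Your sketch never establishes this length lower bound, so the time bound is unproved. (Note also that, since the algorithm already spends $O(mk)$ time before the rerun, you need $k=O(\ell)$ for the overall bound even if your rerun itself were free.)

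Second, the deterministic-certificate property. You argue that when $\xi_w\leq\xmax{w}$ for all $w\neq v^*$, any detected blow-up must be seeded by $v^*$, and then assert that a canonical tie-breaking rule makes the output depend only on $G$ and $\xi_{v^*}$. The first half is fine, but the second half is a non sequitur: the $z$-values and predecessor pointers at intermediate vertices can and do depend on the $\xi_w$ for $w\neq v^*$, so a tie-breaking rule by itself does not obviously pin down the same $(P,C)$ across different choices of $\{\xi_w\}_{w\neq v^*}$. The paper handles this with a dedicated comparison argument: it fixes the reference vector $\xi^*_u=\xmax{u}$ for $u\neq v^*$, proves that every $Q_{z,j}$ appearing on the returned certificate for $\xi^*$ is a walk from $v^*$ (and hence unchanged under any smaller $\xi'$), and then shows that an earlier termination under $\xi'$ would force some $Q^*_{z,\cdot}$ to originate from a vertex $w\neq v^*$, which would certify $\xmax{w}<\xi^*_w=\xmax{w}$, a contradiction. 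This argument is not subsumed by ``canonical tie-breaking''. In addition, the specific tie-breaking used by the paper (minimize walk length, then maximize the last edge) is load-bearing in both of these proofs; your example rule (lex-smallest edge scanned per round) would not interface with the arguments in the same way.

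Finally, a smaller issue: your source-recovery scheme tracks $s(v)$ explicitly, while the paper recovers the path by following $H_{j-1}$ and then argues away the case where that chain runs into a $\gamma>1$ cycle (case~(ii) of Lemma~\ref{l:rw2}, excluded by feasibility). Your scheme avoids that case, but it introduces its own bookkeeping subtleties once the cycle starts iterating; these are not fatal, but they would need to be argued, and you explicitly flag the reconstruction step as the ``main technical obstacle'' without resolving it. In short: right algorithm, correct intuitions, but the two properties that make this lemma useful in Cohen--Megiddo's framework are left unproved.
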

Lemma~\ref{l:locate-ext} can be obtained by
adapting and analyzing more carefully
the negative-cost generalized augmenting path detection algorithm
of Restrepo and Williamson~\cite[Section~4]{RestrepoW09}.
The proof of Lemma~\ref{l:locate-ext} can be found in Section~\ref{s:locateext}.
Cohen and Megiddo~\cite[Algorithm~2.18]{CohenM94} provide a location procedure achieving almost the same guarantees
as in Lemma~\ref{l:locate-ext}, except that their procedure requires $O(n^2)$ space.

\begin{proof}[Sketch of the algorithm behind Lemma~\ref{l:cohen}]
  Let us fix for each $v\in V$ with $\xmax{v}<\xi_v$ a certificate $(P_v,C_v)$ that
  Lemma~\ref{l:locate-ext} produces if run with a vector $\xi'$ such that
  $\xi'_v=\xi_v$ and $\xmax{w}\geq \xi'_w$ for all $w\neq v$.
  By Lemma~\ref{l:locate-ext}, this certificate is defined uniquely and
  regardless of the values $\xi'_w$ for $w\neq v$.

  Given $\xi$, let us call a vertex $v$ \emph{big} if $\xmax{v}<\xi_v$. Let us call a big vertex $\ell$-big if $|V(P_v\cup C_v)|=\Theta(\ell)$.
  
  Roughly speaking, the algorithm in~\cite{CohenM94} does the following.
  Let $I\subseteq V$ be the (unknown) set of big vertices that have not yet been determined,
  and let $D$ be the set of big vertices that we have detected.
  Using Lemma~\ref{l:locate-ext} one can check if $O(mn)$ time whether $I=\emptyset$ and
  we should stop.
  More generally, using random sampling and Lemma~\ref{l:locate-ext}, in $\Ot(mn)$ time one can compute an estimate $g$ approximating
  $|I|$ up to a constant factor.

  If $I\neq\emptyset$, then based on $g$ one can sample $O(\polylog{n})$ sets $S_1,\ldots,S_k\subseteq V\setminus D$
  such that with high probability, at least $\Omega(\log{n})$ of the samples $S_i$ contain a unique big vertex $v$,
  that can be additionally assumed to be randomly picked from $I$.
  Consider running the algorithm of Lemma~\ref{l:locate-ext} with
  a vector $\xi'$ such that $\xi'_s=\xi_s$ for $s\in S_i$ and $\xi'_s=-\infty$
  for $s\in V\setminus S_i$ as an input.
  If $v$ is $\ell$-big, then the run will terminate in $O(m\ell)$ time.
  Moreover, one can prove that if $v$ is randomly picked from $I$,
  then by propagating the upper bounds following from the certificate $(P_v,C_v)$
  through walks of length $\Theta(\ell)$, with constant probability, we can identify
  an $\Omega(\ell/n)$-fraction of the remaining $\ell$-big vertices in~$I$.
  This propagation requires $O(m\ell)$ time and only linear space~\cite[Section~5.3]{CohenM94}.
  
  Since we do not know which samples $S_i$ contain a unique big vertex
  and how big the vertices in $I$ are, we run the above (Lemma~\ref{l:locate-ext} plus
  the propagation) for all samples $S_1,\ldots,S_k$ in parallel, and stop
  if we succeed in determining $\Omega(\ell/n\cdot g)=\Omega(\ell/n\cdot |I|)$ new big values
  within $\Ot(m\ell)$ time.
  One can show that such an event occurs with high probability.
  
  Intuitively, since identifying an $\Omega(\ell/n)$ fraction of the remaining
  big values costs $\Ot(m\ell)$ time, the algorithm finishes in $\Ot(mn)$ time.
  The space usage is linear by Lemma~\ref{l:locate-ext} and~\cite[Section~5.3]{CohenM94}.
\end{proof}
Unfortunately, it is not entirely clear whether one can extend the already complicated
algorithm behind Theorem~\ref{t:cohen} so that it produces
a certificate of infeasibility if the system is infeasible.
As a result, our algorithm in this
section does not produce a certificate, but merely computes $\xmax{}$ assuming
the system is feasible, and eventually
possibly declares the system infeasible by checking whether $\xmax{}$ is feasible
(recall, however, that such a guarantee is sufficient e.g., for DMDPs, where feasibility can be taken for granted).
Given the above, we do not require the algorithm presented in this
section to satisfy the additional requirements stated in Remark~\ref{r:prob},
and in the following, we can assume that the system is feasible.

Let $h$ be an integer to be chosen later.
Suppose we stop the phases of the algorithm of Section~\ref{s:simple} after the earliest
phase $j$ with $2^j\geq h$.
Recall that this guarantees, with constant probability,
that the sampled vertices hit all the cycles $C\in\cycs^*$ with $|C|\leq 2^{j+1}$.
So for each $C\in\cycs^*$ with $|C|\leq 2^{j+1}$, 
there exists $v\in V(C)$ such that $\cb(D_v)\leq x^*_v\leq \cb_{v,|C|}$ for
some $D_v\in\cycs_v^n$ satisfying $\gamma(D_v)<1$.
It can be easily verified that the phases $0,\ldots,j$ use
$\Ot(mnh)$ time and linear space.

Generally speaking, our strategy is to reduce handling cycles of $\cycs^*$ of size more than $h$ 
to computing a pointwise maximal solution of a smaller but denser instance of the monotone 2VPI system
feasibility that will be passed
to the algorithm of~\cite{CohenM94}.

Let $\Psi$ denote the set of those cycles $C\in \cycs^*$ such that $|C|>h$.
Let $S$ be a random subset of $V$ of size $\Theta((n/h)\log{n})$.
With high probability (dependent on the constant hidden in the $\Theta$ notation),
$S$ contains not only a vertex of every cycle in $\Psi$, but
also a vertex of \emph{every} subpath of length $h-1$ of any cycle in $\Psi$.
Denote by $S^*$ the set of those $s\in S$ that indeed lie on some cycle in $\Psi$.
Observe that we might have $S^*\subsetneq S$.

We now construct a new, smaller instance $G'=(S,E')$ that will satisfy the following:
\begin{itemize}
  \item If $G$ is feasible, $G'$ is also feasible.
  \item For each $s\in S$, the pointwise maximum $(\xmax{s})'$ of $G'$
    satisfies either $(\xmax{s})'=\infty$ or \linebreak $(\xmax{s})'=c(F)+\gamma(F)\cdot \cb(D)$
    for some $F\in\paths_{s,w}^n$,
    $D\in\cycs_{w}^n$, $\gamma(D)<1$ (where $\paths_{s,w}^n$ and $\cycs_{w}^n$ are sets
    of subgraphs of $G$).
Moreover, if $s\in S^*$, then additionally $(\xmax{s})'\leq \cb_{s,n}$.
\end{itemize}
By the above conditions, setting $x^*_s:=\min(x^*_s,(\xmax{s})')$ after the initial $j+1$ phases
will meet the 
requirements of Problem~\ref{prob}.

For every pair $s,t\in S$, $E'$ will contain a \emph{single} edge $st$
with $(c'(st),\gamma'(st))=(c(Q_{st}),\gamma(Q_{st}))$ where~$Q$ is \emph{any} walk minimizing $c(Q_{st})+\gamma(Q_{st})\cdot \xmax{t}$ among $s\to t$
walks of at most $h$ edges.
Note that $G'$ has $|S|^2$ edges.
Using parametric search~\cite{Megiddo83a}, such walks can be found without computing values $\xmax{t}$ exactly.
Now let us argue that solving the problem on $G'$ is enough for our needs.

\begin{lemma}\label{l:compress}
  For any $s\in S$, either $(\xmax{s})'=\infty$ or $(\xmax{s})'=c(F)+\gamma(F)\cdot\cb(D)$ for some $F\in\paths_{s,w}^n$,
    $D\in\cycs_{w}^n$, $\gamma(D)<1$.
Moreover, with high probability, for each $s\in S^*$ we have
  $(\xmax{s})'\leq \cb_{s,n}$.
\end{lemma}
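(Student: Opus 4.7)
The plan is to handle the two claims of the lemma separately.

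For the first (structural) claim, I would invoke Shostak's characterization (Theorem~\ref{t:shostak}) applied to the auxiliary instance $G'$. Feasibility of $G$ transfers to $G'$: each inequality $x_s \leq c'(st)+\gamma'(st)x_t$ of $G'$ is obtained by chaining the inequalities of $G$ along the walk $Q_{st}$ of length at most $h$, so $\xmax{}$ restricted to $S$ is a feasible point of $G'$. Whenever $(\xmax{s})'<\infty$, Shostak yields a simple $s\to w$ path $P'$ and a simple cycle $C'$ at $w$ in $G'$ with $\gamma'(C')<1$ such that $(\xmax{s})' = c'(P')+\gamma'(P')\cdot \cb(C')$. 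Replacing each $G'$-edge of $P'$ (resp.\ $C'$) by its associated $\leq h$-edge walk in $G$ yields a walk $F$ (resp.\ a closed walk $D$) in $G$ of length polynomial in $n$. The composition rules from Section~\ref{s:prelims} immediately give $c(F)=c'(P')$, $\gamma(F)=\gamma'(P')$ and similarly for $D$, which together with $\gamma(D)=\gamma'(C')<1$ produces the claimed decomposition.

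For the second claim, I would first establish the sampling property. Since $|S|=\Theta((n/h)\log n)$ is chosen uniformly at random, and $\Psi$ consists of at most $n/h$ vertex-disjoint cycles each having at most $n$ windows of $h$ consecutive vertices, a Chernoff/union-bound argument gives that with high probability $S$ hits every such window. Condition on this event and fix any $s\in S^*$ lying on some $C\in\Psi$. Enumerate the $S$-vertices of $C$ in cyclic order starting from $s$ as $s=s_0,s_1,\ldots,s_{k-1},s_0$, and let $R_i$ denote the $s_i\to s_{i+1}$ subpath of $C$; the hitting property forces $|R_i|\leq h$. The $G'$-edge $s_i s_{i+1}$ has weights matching the optimal walk $Q_i$ of length $\leq h$, so
\[
 c(Q_i)+\gamma(Q_i)\xmax{s_{i+1}}\;\leq\;c(R_i)+\gamma(R_i)\xmax{s_{i+1}}.
\]

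Since the $s_i$ are distinct vertices of the simple cycle $C$, the closed sequence $s_0\to s_1\to\cdots\to s_{k-1}\to s_0$ forms a simple cycle $C^\#$ in $G'$. Expanding its edges yields the closed walk $\tilde C=Q_0Q_1\cdots Q_{k-1}$ in $G$, with $c(C^\#)=c(\tilde C)$ and $\gamma(C^\#)=\gamma(\tilde C)$. Provided $\gamma(C^\#)<1$, Shostak's theorem applied to $G'$ with the trivial path at $s$ yields $(\xmax{s})'\leq \cb(C^\#)=\cb(\tilde C)$. The remaining, and core, technical step is to verify $\gamma(C^\#)<1$ together with $\cb(\tilde C)\leq \cb_{s,n}$. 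I would tackle this by multiplying the $i$-th optimality inequality by $\prod_{j<i}\gamma(Q_j)$ and summing in the same spirit as identity~\eqref{eq:source-change}, using that $C$ itself satisfies the analogous telescoping with $\gamma(C)<1$ and $\cb_s(C)\geq \cb_{s,n}$. I expect the main obstacle to be exactly this weighted telescoping argument: one must simultaneously rule out the degenerate case $\gamma(C^\#)\geq 1$ and control the resulting cost, and I anticipate that the lexicographic tie-breaking rule from Lemma~\ref{l:construct-path} (minimizing $(c+\gamma\alpha,\gamma)$) will be essential to keep each $\gamma(Q_i)$ from inflating beyond what $\gamma(R_i)$ permits.
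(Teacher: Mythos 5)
Your first (structural) claim is argued in essentially the same way as the paper — apply Shostak to $G'$ and expand each $G'$-edge to its defining walk in $G$ (the length bound you get is $O(|S|h)=O(n\log n)$ rather than $n$, but this is harmless: the $n$-caps in Problem~\ref{prob} and Lemma~\ref{l:reduction} can be replaced by any fixed polynomial in $n$).

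The second part has a genuine gap, and the direction you flag as the ``remaining core technical step'' is headed the wrong way. You establish only the one-sided inequality
\[
 c'(e_i)+\gamma'(e_i)\cdot\xmax{s_{i+1}}\;=\;c(Q_i)+\gamma(Q_i)\cdot\xmax{s_{i+1}}\;\le\;c(R_i)+\gamma(R_i)\cdot\xmax{s_{i+1}},
\]
and then propose to salvage the argument via a weighted telescoping of these inequalities together with the lexicographic tie-breaking rule of Lemma~\ref{l:construct-path}. Neither is needed, and the telescoping of one-sided inequalities will not by itself control $\gamma(C^\#)$ or the cycle bound. The missing observation is that the inequality above is in fact an \emph{equality}. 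Since $\xmax{}$ is feasible for $G$, every $s_i\to s_{i+1}$ walk $Q$ (in particular the $G'$-defining walk) satisfies $c(Q)+\gamma(Q)\cdot\xmax{s_{i+1}}\ge\xmax{s_i}$; and the subpath $R_i$ of the distinguished cycle $C$ attains $c(R_i)+\gamma(R_i)\cdot\xmax{s_{i+1}}=\xmax{s_i}$, because by Theorem~\ref{t:shostak} and~\eqref{eq:source-change} each $s_i\in V(C)$ has $\xmax{s_i}=\cb_{s_i}(C)$ and these cycle bounds telescope along $C$. Hence $c'(e_i)+\gamma'(e_i)\cdot\xmax{s_{i+1}}=\xmax{s_i}$ exactly. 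Chaining these equalities around $C^\#$ yields $\xmax{s}=c(C^\#)+\gamma(C^\#)\cdot\xmax{s}$, so $\cb_s(C^\#)=\xmax{s}$; together with the reverse containment $(\xmax{s})'\ge\xmax{s}$ (because $\xmax{}\!\restriction_S$ is feasible in $G'$), this gives $(\xmax{s})'=\xmax{s}\le\cb_{s,n}$, which is the conclusion the paper draws. Notice that this argument is agnostic to \emph{which} minimizing walk was chosen for each $G'$-edge, so no tie-breaking property is required — the driver is the feasibility of $\xmax{}$, not the selection rule in Lemma~\ref{l:construct-path}.
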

\begin{proof}
  The first part follows since each inequality in $G'$ corresponds to
  to some chain of inequalities in $G$. Hence, $G'$ is feasible
  and the existence of $(F,D)$ if $(\xmax{s})'\neq\infty$ follows by Theorem~\ref{t:shostak}.
  
  We will actually prove $(\xmax{s})'=\xmax{s}$ for $s\in S^*$. 
  This will imply $(\xmax{s})'\leq \cb_{s,n}$ by Theorem~\ref{t:shostak}.

  Recall that $s\in S^*$ lies on some cycle in $\Psi$.
  Let $s=s_1,s_2,\ldots,s_q$ be all the vertices of $S^*\cap V(C_w)$ in the order
  they appear on $C_w$ starting at $s$.
  Put $s_{q+1}:=s_1$. 
  By Theorem~\ref{t:shostak}, for all $s_i$ we have $\xmax{s_i}=\cb_{s_i}(C_w)$.
  Observe that with high probability, for any $s_i$, $i=1,\ldots,q$, $s_i$ and $s_{i+1}$ are at most~$h$ edges apart,
  as otherwise there would exist a subpath of $C_w$ consisting of $h-1$ edges
  that is not hit by $S$. Let $Q_i$ be the $s_i\to s_{i+1}$ subpath of $C_w$; we have $|Q_i|\leq h$.
  Note that by~\eqref{eq:source-change}, $\xmax{s_i}=\cb_{s_i}(C_w)=c(Q_i)+\gamma(Q_i)\cdot \cb_{s_{i+1}}(C_w)=c(Q_i)+\gamma(Q_i)\cdot \xmax{s_{i+1}}$ holds for any $i$
  and moreover, for any $s_i\to s_{i+1}$ walk $Q_i'\in\bigcup_{k=1}^\infty\paths_{s_i,s_{i+1}^k}$ in $G$
  we have $\xmax{s_i}\leq c(Q_i')+\gamma(Q_i')\cdot \xmax{s_{i+1}}$.
  So $Q_i$ is a path minimizing $c(Q)+\gamma(Q)\cdot \xmax{s_{i+1}}$ over~\emph{all} $s_i\to s_{i+1}$ walks~$Q$ in $G$.
  By $|Q_i|\leq h$ and the definition of $G'$, for each edge $e_i=s_is_{i+1}$ in $G'$ we have
  \begin{equation*}
    c'(e_i)+\gamma'(e_i)\cdot \xmax{s_{i+1}}= c(Q_i)+\gamma(Q_i)\cdot \xmax{s_{i+1}}
  \end{equation*}
  We now argue that the cycle $s_1s_2\ldots s_q$ in $G'$ implies the same upper bound
  on $x_s\equiv x_{s_1}$ that $C$ implies on $x_s$ in $G$.
  This will imply $\xmax{s}\geq (\xmax{s})'$ as needed.
  
  Since $\xmax{s_i}=c(Q_i)+\gamma(Q_i)\cdot \xmax{s_{i+1}}$ for all $i$,
  we also have $\xmax{s_i}=c'(e_i)+\gamma'(e_i)\cdot \xmax{s_{i+1}}$ for all $i$.
  So, by chaining these equations we get:
  \begin{align*}
    \xmax{s_1}&=c(Q_1\cdots Q_q)+\gamma(Q_1\cdots Q_q)\cdot \xmax{s_1},\\
    \xmax{s_1}&=c(e_1\cdots e_q)+\gamma(e_1\cdots e_q)\cdot \xmax{s_1}.
  \end{align*}
  As a result:
  \begin{equation*}
    \frac{c(e_1\cdots e_q)}{1-\gamma(e_1\cdots e_q)}=\xmax{s_1}=\frac{c(Q_1\cdots Q_q)}{1-\gamma(Q_1\cdots Q_q)},
  \end{equation*}
  so indeed the cycle $e_1\ldots e_q$ in $G'$ implies the same upper bound $\xmax{s}$ on $x_s$ as
  $C$ in $G$.
\end{proof}

After constructing $G'$, we will invoke the algorithm of~Theorem~\ref{t:cohen} on it.
Since $G'$ has $\Ot(n/h)$ vertices and $\Ot((n/h)^2)$ edges,
this takes $\Ot((n/h)^3)$ expected time with high probability
and requires $\Ot(n/h)$ space.
The final step is to show how to compute $G'$.

\begin{lemma}\label{l:dense}
  Given $S$, the instance $G'$ can be computed in $\Ot(mnh)$ time and $O(n^2\log{n}/h)$ space.
\end{lemma}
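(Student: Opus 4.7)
The plan is to apply Megiddo's parametric search using Lemma~\ref{l:cohen} as the parallel comparison oracle. For each target $t\in S$, treat $\xmax{t}$ as a formal parameter $x_t$ and run $h$ rounds of backward Bellman--Ford seeded with the value $x_t$ at $t$ and $+\infty$ elsewhere. Throughout, at every vertex we maintain the pair $(c(Q),\gamma(Q))$ of the currently-chosen minimizing walk rather than its scalar value; after $h$ rounds the pair stored at each $s\in V$ is, at $x_t=\xmax{t}$, precisely $(c(Q_{st}),\gamma(Q_{st}))$, which defines the edge $st$ of $G'$.

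Each Bellman--Ford round is organized as a balanced binary min-tournament at every vertex, giving a parallel algorithm of depth $O(\log n)$ and work $O(m)$. Every comparison that arises is between two affine functions $\alpha+\beta x_t$ and $\alpha'+\beta'x_t$ of $x_t$, which reduces, after normalization, to deciding ``is $x_t\lessgtr\xi$'' for an explicit threshold $\xi$. The $|S|$ per-$t$ parametric searches are run in lockstep, exploiting the fact that a single call to Lemma~\ref{l:cohen} decides the family of inequalities $\xmax{t}<\xi_t$ for every $t\in V$ at once in $\Ot(nm)$ time, with a separate threshold $\xi_t$ per $t$. At each tournament level, sorting the $O(m)$ thresholds arising for each $t$ and performing a lockstep binary search across the $|S|$ sorted lists resolves every comparison of that level with $O(\log m)$ oracle calls: the $i$-th probe feeds Lemma~\ref{l:cohen} the current median threshold of each $t$'s list and simultaneously resolves the $i$-th step of every search. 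Once the level is resolved, the tournament proceeds deterministically; after $h$ rounds the desired pairs are read off.

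For the time bound, every round uses $O(\log n\cdot\log m)=\Ot(1)$ oracle calls together with $\Ot(nm)$ auxiliary work for per-$t$ sorting and bookkeeping, so $h$ rounds take $\Ot(nmh)$ in total. The main obstacle is the space bound. The persistent state carried across rounds is one pair $(c,\gamma)$ per $(t,v)\in S\times V$, which fits in $O(|S|n)=O(n^2\log n/h)$ cells and matches the target. However, a naive materialization of the per-$t$ sorted threshold lists at a given tournament level would cost $|S|m$ cells, potentially far in excess of the budget when $m$ is much larger than $n$. To bypass this I would resolve a level's comparisons without pre-sorting, coordinating the $|S|$ concurrent binary searches across the whole level using Cole's weighted-median technique: the probe at each step is the weighted median (across vertices and across $t\in S$, with weights inherited from the still-active comparisons) of thresholds that are generated on demand from the persistent state, rather than an explicit median of a materialized list. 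This is the delicate part of the argument, and the key to making it work is that regenerating a threshold uses only the $(c,\gamma)$-pair at the currently-processed tournament node, so that the peak auxiliary space is bounded by the persistent state plus the few ``active'' tournament frontiers, while the amortized number of oracle calls per level remains $O(\log m)$; this preserves the overall $\Ot(nmh)$ time and $O(n^2\log n/h)$ space bounds.
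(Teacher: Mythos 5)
Your overall plan---Bellman--Ford style propagation with the target value $\xmax{t}$ kept as a formal parameter, Megiddo-style parametric search, and Lemma~\ref{l:cohen} as the batched comparison oracle so that all $|S|$ parametric searches advance in lockstep---is the same high-level strategy the paper uses, and the time accounting is essentially right. The gap is in the space argument.

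First, the ``weighted median (across vertices and across $t\in S$)'' step is not sound as stated. Lemma~\ref{l:cohen} already accepts a \emph{separate} threshold $\xi_t$ per target, and the whole point of using it is to feed a different probe to each of the $|S|$ independent searches simultaneously. If you instead collapse the pending thresholds of all targets into a single weighted median and use that one value for every~$t$, you are no longer guaranteed to discard a constant fraction of the pending comparisons \emph{for each} $t$: a single probe could make progress on one target and none on the others, and you lose the $O(\log m)$ probe bound per level. You want a per-$t$ median, and once you do that Cole's weighted-median machinery buys you nothing here --- a plain (unweighted) median already gives $O(\log m)$ probes per round, which is all the time bound needs.

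Second, and more importantly, the tournament/Cole apparatus is not what gets the space down; there is a much more direct observation that the paper uses. The entire transient state of target $t$'s parametric search inside round $j$ can be summarized by a single interval $(a_t,b_t)\ni\xmax{t}$: given $(a_t,b_t)$ and the persistent pairs $\{(c(P_{j-1,w}),\gamma(P_{j-1,w}))\}_{w\in V}$ from the previous round, one can regenerate from scratch the lower envelope of the candidate lines at each $w$ and the sorted breakpoint list $X$, pick the median breakpoint of $X\cap(a_t,b_t)$ as $\xi_t$, and then immediately \emph{discard} the envelopes and $X$. Do this sequentially for each $t\in S$ (reusing the same $O(m)$ scratch space), then fire one call to Lemma~\ref{l:cohen} with the vector $(\xi_t)_t$, and shrink each $(a_t,b_t)$ accordingly. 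This removes any need to keep per-$t$ sorted lists, tournament frontiers, or any per-comparison bookkeeping alive between probes. The recomputation multiplies only the non-oracle term by an extra $O(\log m)$, which stays $\Ot(nm)$, and the peak space is $O(m)$ scratch plus the $\Ot(|S|n)=\Ot(n^2/h)$ persistent walk pairs --- exactly the claimed bound. In short: your diagnosis of where the space blows up is correct, but the cure is the interval-plus-recompute trick, not Cole's technique, and no tournament is required since the lower-envelope/binary-search formulation already serves as the ``parallel'' structure parametric search needs.
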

\begin{proof}
  Let us first start with a simpler algorithm computing the edges $st\in E'$
  for a \emph{single vertex} $t\in S$ in $\Ot(mnh)$ time and linear space.
  We apply parametric search analogously as we do in the proof of Lemma~\ref{l:parametric}.\footnote{For a more thorough description
  of an application of parametric search, see the proof of Lemma~\ref{l:parametric}.}
  However, this application is slightly simpler: here we only care about
  the tightest bounds imposed by walks, and not tightest absolute bounds imposed by closed walks,
  so in particular we do not have
  to optimize the values $\gamma(\cdot)$ of the computed walks.
  For each bound on walk length $j=0,\ldots,h$, and $v\in V$, we maintain
  a walk~$P_{j,v}$ (actually, only the values $c(P_{j,v})$ and $\gamma(P_{j,v})$),
  such that $P_{j,v}$ minimizes $c(P)+\gamma(P)\cdot \xmax{t}$
  over all walks $P\in\paths_{v,t}^j$.
  
  To enable computing
  the walk $P_{j,v}$ based on $\{P_{j-1,w},w\in V\}$ without
  knowing $\xmax{t}$, we maintain an interval $(a,b)$ such that $\xmax{t}\in (a,b)$.
  In the first step, we compute the lower envelope of the set $L_v$ of lines
  $y=(c(e)+\gamma(e)\cdot c(P_{j-1,w}))+\gamma(e)\gamma(P_{j-1,w})\cdot x$,
  where $vw=e\in E$,
  and record the breakpoints as $X_v$.
  Computing all $X_v$ takes $O(m\log{m})$ time in total.
  We then merge the sets $X_v$ into a single sorted sequence $X$ in $O(m\log{m})$ time.
  Next, binary search is performed with the aim of
  locating two consecutive elements $x_i,x_{i+1}\in (X\cap (a,b))\cup \{a,b\}$
  such that $\xmax{t}\in (x_i,x_{i+1})$.
  Each decision made by binary search is guided using Lemma~\ref{l:locate-ext}.
  So finding $x_i,x_{i+1}$ takes $O(nm\log{m})$ time.
  Note that for each $v$ there exists a line in $L_v$ that
  attains $\min\{f(x):f\in L_v\}$ for all $x\in (x_i,x_{i+1})$.
  That line can be identified easily in $O(\deg_G(v))$ time
  and the corresponding walk is chosen to constitute~$P_{j,v}$.
  So picking the walks $P_{j,v}$ takes $O(m)$ time.

  Finally, for each $s\in V$ set $c'(st)=c(P_{h,s})$ and $\gamma'(st)=\gamma(P_{h,s})$.
  Note that this satisfies the definition of $G'$.

  Since handling a single walk length bound $j$ takes $O(mn\log{m})$ time,
  constructing the incoming edges of~$t$ in~$G'$ takes $O(mnh\log{m})$ time.
  Clearly, we use $O(m)$ space: when proceeding to the next $j$ we require
  two numbers per walk $P_{j-1,w}$ for each $w\in V$ be stored.
  Also, $O(m)$ space is needed to construct the merged sequence~$X$ of breakpoints.

  To obtain a more efficient solution, albeit at the cost of using randomization,
  we make use of Lemma~\ref{l:cohen}.
  We run the above algorithm in parallel for all $t\in V$.
  For each $j$, we synchronize the $O(\log{m})$ decisions made by
  binary search so that the $i$-th decision is made for all targets $t\in S$
  at the same time. Since each decision is of the form $\xmax{t}<\xi_t$,
  all of them can be made at once in $\Ot(nm)$ expected time using Lemma~\ref{l:cohen}.
  Hence, the running time is $O(|S|mh\log{m})+\Ot(nmh)=\Ot(nmh)$.
  
  Unfortunately, simply running the simple-minded algorithm
  in parallel leads to $O(|S|m)=\Ot(mn/h)$ space usage.
  To obtain an improved $\Ot(n^2/h)$ space bound we need
  to avoid storing the envelopes and the breakpoint sequences $X$
  while proceeding in parallel.
  The solution is to instead maintain for each $t$ only a single
  interval $(a_t,b_t)$ such that $\xmax{t}\in (a_t,b_t)$.
  Before the synchronized decision made by the algorithm of Lemma~\ref{l:cohen},
  for each $t$ we recompute the envelopes and the sequence $X$ from scratch,
  and find the middle element $\xi_t$ of $X\cap (a_t,b_t)$.
  Afterwards, we discard $X$ and the envelopes and only store $\xi_t$.
  In other words, the space used by the envelopes and breakpoints
  is shared between different $t\in S$.
  Since for a single $j$ binary search makes $O(\log{m})$ steps, this adjustment
  increases the $O(|S|mh\log{m})$ term in the running time to $O(|S|mh\log^2{m})=\Ot(nm)$.

  It is however not clear how to eliminate the space cost incurred
  by maintaining the walks $P_{j,w}$ for all targets $t\in S$ at once.
  As a result, the space usage is $\Ot(|S|\cdot n)=\Ot(n^2/h)$.
\end{proof}

\tvpi*
\begin{proof}
  Recall that running the phases $0,\ldots,j$ of the algorithm of Section~\ref{s:simple} until $2^j\geq h$
  takes $\Ot(nmh)$ time and linear space.
  Constructing the dense graph $G'$ takes $\Ot(nmh)$ time and \linebreak $\Ot(n^2/h+m)$ space by Lemma~\ref{l:dense}.
  Finally, solving the instance $G'$ using Theorem~\ref{t:cohen} takes $\Ot(|S|^3)=\Ot((n/h)^3)$ time
  and $\Ot((n/h)^2)$ space.
  As we have argued, these steps are enough to solve Problem~\ref{prob}.
  By Lemma~\ref{l:reduction}, additional computation in $O(nm)$ time and linear space can
  compute a pointwise maximal solution to the input monotone 2VPI system with high probability.

  Recall from the proof of Lemma~\ref{l:simple} that the solution can be verified in $O(nm)$ time
  so that we can make the algorithm Las Vegas by running it until it computes the correct answer.
  With high probability $O(\log{n})$ runs are enough.
\end{proof}
\specific*
\begin{proof}
  For the former item, pick $h=\max(1,n^{1/2}/m^{1/4})$. For the latter, let the $\Ot(n^2/h)$ space
  bound in Theorem~\ref{t:main} be actually $O(n^2/h\cdot \log^c{n})$. If $m=O(n\log^c{n})$,
  just use the simple algorithm of Section~\ref{s:simple}.
  Otherwise, pick $h=\max(1,(n^2\log^c{n})/m)$. Then $h=\Omega(1)$ and $h=O(n)$.
\end{proof}

\begin{remark}
  If the algorithm of Theorem~\ref{t:cohen} could produce a certificate of infeasibility (which is likely, after a suitable extension),
  and the algorithm of Lemma~\ref{l:cohen} could locate values wrt. $x^{\leq}$ without
  assuming feasibility, then the trade-off algorithm could also produce a certificate
  of infeasibility.

  In other words, the ability of the trade-off algorithm
  to certify infeasibility depends solely on whether the results from~\cite{CohenM94}
  can be extended to certify infeasibility.
\end{remark}
\section{Discounted All-Pairs Shortest Paths}
Let $\gamma\in (0,1)$ be a fixed discount factor.
Let $G=(V,E)$ be a weighted directed graph with edge costs given by $c:E\to \mathbb{R}$.
For a walk $P\subseteq G$, its cost $c(P)$ is defined
as it was before but with $\gamma(e)\equiv\gamma$ for all $e\in E$.
So, if $P=e_1\ldots e_k$, then $c(P)=\sum_{i=1}^k c(e_i)\cdot \gamma^{i-1}$.
Since $\gamma(e)=\gamma$ for all $e$, we can assume that $G$ is simple (and thus $m=O(n^2)$)
and thus writing $c(uv)$ is unambiguous.

A \emph{discounted shortest path} from $s$ to $t$ is a finite $s\to t$
walk $P$ minimizing $c(P)$. 
Let $\dist^k_G(s,t)$ be the minimum cost of a $s\to t$
walk with \emph{exactly} $k$ edges.
The \emph{discounted distance} $\dist_G(s,t)$ is defined as $\inf_{k\geq 0}\dist^k_G(s,t)$.
Madani~et~al.~\cite{MadaniTZ10} showed that $\dist_G(s,t)$ is not necessarily attained
by a finite path even if all edge costs are positive. So, in particular, a discounted
shortest path needs not be simple. However, they showed~\cite[Lemma~5.8]{MadaniTZ10} the following reduction.
\begin{lemma}\label{l:apspred}\textup{\cite{MadaniTZ10}}
  In $O(nm)$ time, the discounted APSP problem on $G$ can be reduced to a discounted APSP
  problem on a graph $G'$ with $|V(G')|=O(n)$ and $|E(G')|=O(n+m)$ such that:
  \begin{enumerate}[label=(\arabic*)]
    \item each $v\in V(G)$ is assigned two corresponding vertices $v',v''$ in $V(G')$,
    \item for all $u,v\in V(G)$ we have $\dist_G(u,v)=\dist_{G'}(u',v'')$,
    \item for all $s,t\in V(G')$ there exists a finite simple path $P=s\to t$ in $G'$ such that $c(P)=\dist_{G'}(s,t)$,
    \item if $k$ is minimum such that $\dist_{G'}^k(s,t)=\dist_{G'}(s,t)$, then for
      \emph{every} walk $Q=s\to t$ in $G'$ such that $c(Q)=\dist_{G'}(s,t)$ and $|Q|=k$, $Q$ is a simple path.
  \end{enumerate}
\end{lemma}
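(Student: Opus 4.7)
The plan is to construct $G'$ by duplicating each vertex $v \in V(G)$ into a primed copy $v'$ ("still exploring, cycling allowed") and a double-primed copy $v''$ ("committed to the terminal"), and adding a single transition edge $(v', v'')$ per vertex whose cost encodes the asymptotic benefit of looping at $v$. The uniform discount factor is essential here: it lets the best infinite-loop value at each $v$ be summarized by a scalar $\phi_v$, which is exactly the finite weight we need on $(v', v'')$. Concretely, for each $v$ define $\phi_v := \min\{0,\ \min\{\cb_v(C) : C \in \cycs_v^k,\, k \geq 1\}\}$; this is the value of the DMDP at $v$ with a zero-cost termination option, and captures the best looping benefit obtainable by starting at $v$ and optionally stopping later. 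The entire vector $(\phi_v)_{v\in V}$ can be computed in $O(nm)$ time and linear space by $O(n)$ rounds of value iteration (a reverse Bellman-Ford--style procedure applied to the discounted Bellman operator).

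Set $V(G')=\{v',v'':v\in V\}$, so $|V(G')|=2n$. For each $(u,v)\in E(G)$ add a primed copy $(u',v')$ and a double-primed copy $(u'',v'')$ (with potential-style adjusted weights, and a $1/\gamma$ rescaling on the double-primed side so as to realign discount exponents across the transition edge), plus one transition edge $(v',v'')$ whose cost is derived from $\phi_v$. The edge count is $2m+n=O(n+m)$. Property~(2) then follows from showing both inequalities between $\dist_G(u,v)$ and $\dist_{G'}(u',v'')$: every $G'$-walk from $u'$ to $v''$ uses exactly one transition edge and so factors as a primed path $P_0:u\to w$, the transition at $w$, and a double-primed path $P_1:w\to v$; by the choice of transition cost the total walk cost is $c(P_0)+\gamma^{|P_0|}\phi_w$ (after the bookkeeping cancellations), which, in the other direction, is precisely the limit as $r\to\infty$ of the cost of the $G$-walk $P_0 C^r P_1$ where $C$ is the optimal cycle at $w$.

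Properties (3) and (4) --- that every $\dist_{G'}(s,t)$ is attained by a finite simple path, and indeed by any minimum-length optimal walk --- rely on the key inequality $\phi_u\leq \cb_u(C)$ for every cycle $C$ through $u$. After the potential-style reweighting baked into the edge costs, every within-layer cycle has non-negative reduced cost, so cost-minimizing within-layer walks are simple; since any inter-layer walk crosses the transition boundary at most once, an optimal $s\to t$ walk in $G'$ is a simple path overall. The refinement (4) follows because any non-simple optimal walk either strictly increases the cost (contradicting optimality) or contains a zero-reduced-cost cycle that can be excised to produce a strictly shorter optimal walk.

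The main obstacle is choreographing the cost assignment so that (2), (3), and (4) hold simultaneously: the transition cost must absorb the asymptotic looping contribution \emph{exactly}, the layer edges must be potential-reduced to kill in-layer negative cycles, and the discount exponents on the two sides of the transition must line up without leaving any residual multiplicative or additive slack. The uniformity of $\gamma$ is what makes this alignment possible --- a single scalar $\phi_v$ per vertex captures everything about looping at $v$ --- whereas in the nonuniform case the looping value would depend on the specific cycle chosen and could not be collapsed into a single edge weight.
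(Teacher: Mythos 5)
The paper's proof is much more modest than yours: it simply cites~\cite[Lemma~5.8]{MadaniTZ10} for the construction of $G'$ and for properties (1)--(3), and only proves property~(4) from scratch. Crucially, the paper's argument for~(4) does not depend on any structural detail of $G'$ at all --- it uses only the form of the discounted cost functional and the assumption (from~(3)) that $\dist_{G'}(s,t)$ is attained by a finite walk. If $Q=Q_1CQ_2$ is a shortest walk with the minimum number of edges $k$ and $C$ is a non-empty closed subwalk, then either $c(CQ_2)\geq c(Q_2)$, in which case $Q_1Q_2$ is a shortest walk with fewer than $k$ edges (contradicting minimality of $k$), or $c(CQ_2)<c(Q_2)$, in which case $c(Q_1CCQ_2)<c(Q_1CQ_2)$ (contradicting optimality). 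That dichotomy is the whole proof.

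Your proposal instead attempts to rebuild the Madani~et~al.\ construction from the ground up, which is both unnecessary for this lemma and not carried through rigorously: ``potential-style adjusted weights,'' the $1/\gamma$ rescaling, and the exact transition-edge cost are never pinned down, so it cannot be verified that properties (2)--(4) actually hold simultaneously. More concretely, the step you use for~(4) --- that a non-simple optimal walk ``contains a zero-reduced-cost cycle that can be excised to produce a strictly shorter optimal walk'' --- does not hold in the discounted setting as stated. Removing a cycle $C$ from the middle of $Q=Q_1CQ_2$ shifts the discount exponents applied to $Q_2$: one has $c(Q_1CQ_2)-c(Q_1Q_2)=\gamma^{|Q_1|}\left(c(C)-(1-\gamma^{|C|})\,c(Q_2)\right)$, so ``$C$ has zero reduced cost'' (in any fixed potential reweighting) is not the condition under which excision preserves the total; the correct condition is $\cb(C)=c(Q_2)$, which depends on the suffix. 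The paper's argument sidesteps this by comparing $Q_1CQ_2$ against both $Q_1Q_2$ (remove one copy of $C$) and $Q_1CCQ_2$ (add one copy of $C$), which gives exactly the right trichotomy without needing any reweighting of $G'$.
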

\begin{proof}
  The properties (1)-(3) are proved in~\cite[Lemma~5.8]{MadaniTZ10}. We now argue that 
  property~(4) holds as well. Suppose $Q$ is not a simple path, then, it
  can be expressed as $Q_1CQ_2$, where $C$ is a non-empty closed walk.
  We have $c(CQ_2)<c(Q_2)$, as otherwise we would have $\dist_{G'}(s,t)\leq c(Q_1Q_2)\leq c(Q_1CQ_2)=\dist_{G'}(s,t)$
  and $Q_1Q_2$ would have less than $k$ edges, which is impossible.
  Consequently,
  we obtain:
  \begin{equation*}
    \dist_G'(s,t)\leq c(Q_1CCQ_2)=c(Q_1C)+\gamma(Q_1C)\cdot c(CQ_2)<c(Q_1C)+\gamma(Q_1C)\cdot c(Q_2)=c(Q_1CQ_2),
  \end{equation*}
  that is, $\dist_G'(s,t)<\dist_G'(s,t)$,
  a contradiction.
\end{proof}
In the following, we assume that the above reduction has been applied and we reset $G:=G'$. As a result, we only
need to compute $\min_{k=0}^n\{\dist_G^k(s,t)\}$ for all $s,t\in V$.
For simplicity, in the following we put
$\dist_G(s,t):=\min_{k=0}^n\{\dist_G^k(s,t)\}$.
Moreover, for each pair $s,t\in V$, let $\ell_{s,t}$
denote the minimal number of edges on an $s\to t$ walk
with the minimum cost $\dist_G(s,t)$.
By Lemma~\ref{l:apspred}, all discounted shortest $s\to t$ paths
with $\ell_{s,t}$ edges are simple.

We concentrate
on computing the values $\dist_G(s,t)$ and not the respective paths themselves.
However, the given algorithms can be easily extended to produce the representation
of all discounted shortest paths within the same asymptotic time bound.
Note that due to the following simple facts (observed in~\cite{MadaniTZ10}), the discounted
shortest paths to a single fixed target vertex $t\in V$ (from all possible sources) 
can be represented using a tree. 

\begin{observation}\label{o:suffix}
  Let $P=s\to t$ be a discounted shortest path. Then, every suffix of $P$ is
  a discounted shortest path. More formally, if $P=P_1P_2$, where $P_2$ is
  a $w\to t$ path, then $P_2$ is a discounted shortest path, i.e., $c(P_2)=\dist_G(w,t)$.
\end{observation}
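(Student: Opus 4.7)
The plan is to prove the observation by contradiction: assume $P_2$ is not a discounted shortest $w \to t$ path and construct a strictly cheaper $s \to t$ walk, contradicting the optimality of $P$.

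First, I would suppose $c(P_2) > \dist_G(w,t)$. By the redefinition $\dist_G(w,t) = \min_{k=0}^n\dist_G^k(w,t)$, there exists a $w \to t$ walk $Q$ with $|Q| \le n$ and $c(Q) = \dist_G(w,t) < c(P_2)$; by Lemma~\ref{l:apspred}(3) we may in fact take $Q$ to be a simple path. I would then form the concatenation $P_1 Q$, which is a valid $s \to t$ walk, and use the multiplicative cost rule $c(P_1Q) = c(P_1) + \gamma(P_1)\cdot c(Q)$. Since $\gamma(P_1) = \gamma^{|P_1|} > 0$, plugging in $c(Q) < c(P_2)$ yields
\begin{equation*}
  c(P_1Q) \;=\; c(P_1) + \gamma^{|P_1|}\cdot c(Q) \;<\; c(P_1) + \gamma^{|P_1|}\cdot c(P_2) \;=\; c(P_1P_2) \;=\; c(P) \;=\; \dist_G(s,t).
\end{equation*}

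The one subtlety, which I expect to be the main obstacle, is that $P_1Q$ need not have $\le n$ edges (one could have $|P_1|+|Q|$ as large as $2n-1$), so a direct appeal to the capped definition of $\dist_G(s,t)$ is not immediate. I would resolve this exactly as in the proof of Lemma~\ref{l:apspred}(4): the capped minimum equals the unrestricted infimum $\inf_{k\ge 0}\dist_G^k(s,t)$ because Lemma~\ref{l:apspred}(3) guarantees that the infimum over all $s \to t$ walks is attained by a simple path on at most $n$ edges. Hence any $s \to t$ walk, regardless of length, has cost at least $\dist_G(s,t)$, and in particular $c(P_1Q) \ge \dist_G(s,t)$, contradicting the strict inequality above.

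This contradiction forces $c(P_2) = \dist_G(w,t)$, so every suffix $P_2$ of $P$ is a discounted shortest path, which is exactly the statement. The argument is essentially the standard optimal-substructure proof for shortest paths, with the twist that the ``exchange'' of a suffix is weighted by $\gamma^{|P_1|}$ rather than preserved additively, and that one must invoke Lemma~\ref{l:apspred} to bypass the edge-count cap in the redefinition of $\dist_G$.
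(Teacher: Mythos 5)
Your proof is correct and follows essentially the same route as the paper's: concatenate $P_1$ with a cheaper $w \to t$ walk $Q$ and use $\gamma^{|P_1|} > 0$ to conclude $c(P_1Q) < c(P)$, contradicting the optimality of $P$. The only difference is that you explicitly justify why $P_1Q$ (which may exceed $n$ edges) still cannot beat $\dist_G(s,t)$ under the capped redefinition; the paper elides this point, implicitly relying on Lemma~\ref{l:apspred}(3) having established that the capped minimum coincides with the unrestricted infimum. Your extra care here is warranted but does not change the substance of the argument.
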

\begin{proof}
  We have $c(P)=c(P_1)+\gamma^{|P_1|}\cdot c(P_2)$. If there existed another path
  $Q=w\to t$ with \linebreak $c(Q)<c(P_2)$, $P_1Q$ would have smaller cost than
  $P_1P_2$ since $\gamma>0$. 
  Then $P$ would not be shortest.
\end{proof}

\begin{observation}\label{o:prefix}
  Let $P=s\to t$ be a discounted shortest path. Let $1\leq k\leq |P|$.
  Then the $k$-prefix of $P$ forms a path with the smallest cost
  among paths with exactly $k$ edges and same respective endpoints.
  More formally, if $P=P_1P_2$, where $P_1$ is
  a $s\to w$ path and $|P_1|=k$, then $c(P_1)=\min\{c(P):P\in \paths_{s,w}^k, |P|=k\}$.
\end{observation}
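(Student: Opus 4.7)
The plan is a direct exchange argument, dual in spirit to the one used for Observation~\ref{o:suffix}. Write $P = P_1 P_2$ where $P_1$ is the $s \to w$ prefix of length $k$ and $P_2$ is the $w \to t$ suffix. By the path-cost formula
\begin{equation*}
  c(P) \;=\; c(P_1) + \gamma^{|P_1|}\cdot c(P_2) \;=\; c(P_1) + \gamma^{k}\cdot c(P_2).
\end{equation*}

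Suppose toward a contradiction that there is some $Q \in \paths_{s,w}^k$ with $|Q| = k$ and $c(Q) < c(P_1)$. Consider the concatenation $QP_2$, which is an $s \to t$ walk with exactly $|Q| + |P_2| = k + (|P|-k) = |P|$ edges. Its cost is
\begin{equation*}
  c(QP_2) \;=\; c(Q) + \gamma^{|Q|}\cdot c(P_2) \;=\; c(Q) + \gamma^{k}\cdot c(P_2) \;<\; c(P_1) + \gamma^{k}\cdot c(P_2) \;=\; c(P),
\end{equation*}
where the strict inequality uses $c(Q) < c(P_1)$ (the term $\gamma^k\cdot c(P_2)$ is identical on both sides and need not be positive). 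Since $c(P) = \dist_G(s,t)$, this contradicts $P$ being a discounted shortest $s \to t$ path.

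There is no real obstacle: the only thing to notice is that concatenating $Q$ with the \emph{same} suffix $P_2$ preserves the multiplicative weight applied to $P_2$ (namely $\gamma^k$), so a strict gain in the prefix translates directly into a strict gain in the whole walk. Note also that no simplicity assumption on $Q$ or $QP_2$ is needed for the contradiction, since $\dist_G(s,t)$ is defined as an infimum over all walks; hence the argument works with $\paths_{s,w}^k$ interpreted as a set of walks.
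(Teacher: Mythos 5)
Your proof is correct and is essentially the paper's own argument: both rely on the decomposition $c(P)=c(P_1)+\gamma^{k}\cdot c(P_2)$ and the fact that every $Q\in\paths_{s,w}^k$ with $|Q|=k$ has the same discount weight $\gamma^k$, so a cheaper prefix yields the cheaper walk $QP_2$, contradicting optimality of $P$. Your closing remark that no simplicity assumption on $Q$ is needed (since $\dist_G$ is an infimum over walks and $\paths_{s,w}^k$ consists of walks) is a sound and welcome clarification, but it does not change the substance of the argument.
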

\begin{proof}
  We have $c(P)=c(P_1)+\gamma^{k}\cdot c(P_2)$. If there existed another path
  $Q=s\to w$ with $|Q|=k$, then $\gamma(P_1)=\gamma(Q)$ and $c(Q)<c(P_1)$, so $QP_2$ would have a smaller cost than
  $P_1P_2$.
\end{proof}

In the following lemmas, we introduce the building blocks of our algorithm
for discounted APSP.

\begin{lemma}\label{l:target}
  Let $t\in V$ and $k\geq 1$. One can compute $\dist^{\leq k}_G(s,t)=\min_{j=0}^k\{\dist^j_G(s,t)\}$
  for all $s\in V$ and $j=0,\ldots,k$ in $O(mk)$ time.
\end{lemma}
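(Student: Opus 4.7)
The plan is to run a reverse Bellman--Ford computation rooted at $t$, exploiting the suffix structure of discounted walks. Specifically, I would define $\dist^j_G(s,t)$ for $j=0,1,\ldots,k$ via the recurrence
\begin{equation*}
\dist^0_G(s,t) = \begin{cases} 0 & \text{if } s=t,\\ \infty & \text{otherwise,}\end{cases}
\qquad
\dist^j_G(s,t) = \min_{su\in E}\bigl\{c(su) + \gamma\cdot \dist^{j-1}_G(u,t)\bigr\}.
\end{equation*}
This recurrence is correct because any $s\to t$ walk of exactly $j$ edges decomposes uniquely as an initial edge $su$ followed by a $u\to t$ walk of $j-1$ edges $Q$, and $c(suQ)=c(su)+\gamma\cdot c(Q)$; minimizing over the choice of the first edge and invoking Observation~\ref{o:suffix} (or, equivalently, a direct optimal-substructure argument on the suffix) yields the formula.

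Next, I would simply evaluate the recurrence layer by layer. For each $j$ from $1$ to $k$, I maintain the vector $(\dist^{j}_G(s,t))_{s\in V}$ and compute the next vector by scanning every edge $su\in E$ exactly once and relaxing
\begin{equation*}
\dist^{j}_G(s,t) \;\gets\; \min\bigl\{\dist^j_G(s,t),\; c(su)+\gamma\cdot \dist^{j-1}_G(u,t)\bigr\}.
\end{equation*}
Each layer therefore costs $O(m)$ time, for a total of $O(mk)$. In parallel, I accumulate $\dist^{\leq j}_G(s,t) := \min\bigl(\dist^{\leq j-1}_G(s,t),\dist^{j}_G(s,t)\bigr)$ after completing layer $j$, which adds only $O(n)$ work per layer and thus does not affect the asymptotic bound.

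There is no real obstacle here: the discounting factor $\gamma$ does not interfere with the Bellman--Ford structure because it is uniform, so the ``prepend an edge'' transition is a straightforward affine update. The only subtlety worth stating is that, unlike classical Bellman--Ford, one should not overwrite the array in place (since the recursion is on $j-1$), so two alternating arrays of size $n$ suffice. The final output is $\dist^{\leq k}_G(\cdot, t)$, obtained in the claimed $O(mk)$ time.
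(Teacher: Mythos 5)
Your proof is correct and essentially matches the paper's: both run a reverse Bellman--Ford from $t$ for $k$ rounds, using the suffix-optimality of discounted walks to justify the layer-by-layer recurrence. The only cosmetic difference is that you recurse on the exact-length quantity $\dist^j_G$ and then accumulate prefix minima, whereas the paper writes the recurrence directly on $\dist^{\leq j}_G$; the two are interchangeable and yield the same $O(mk)$ bound.
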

\begin{proof}
  For $j=0$ we have $\dist^{\leq 0}_G(t,t)=0$ and $\dist^{\leq 0}_G(s,t)=\infty$ for $s\neq t$.
  Note that by Observation~\ref{o:suffix}, for any $j>0$ and $s\in V$
  we have
  \begin{equation*}
    \dist^{\leq j}_G(s,t)=\min\left(\dist^{\leq j-1}_G(s,t),\min_{su\in E}\{c(su)+\gamma\cdot \dist_G^{\leq j-1}(u,t)\}\right).
  \end{equation*}

  As a result, all the values $\dist^{\leq k}_G(s,t)$ of interest can be computed using $k$ Bellman-Ford-style
  steps that take $O(m)$ time each.
\end{proof}

\begin{lemma}\label{l:source}
  Let $s\in V$ and $k\geq 1$. One can compute $\dist^j_G(s,t)$ for all $j=0,\ldots,k$ and $t\in V$
  in $O(mk)$ time. 
\end{lemma}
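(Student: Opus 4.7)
The plan is to mirror the reverse Bellman-Ford approach of Lemma~\ref{l:target}, but run it in the forward direction from the source $s$. By prefix optimality (Observation~\ref{o:prefix}), or more directly by simply decomposing a walk into its prefix and last edge, any $s\to t$ walk with exactly $j\geq 1$ edges consists of an $s\to u$ walk with exactly $j-1$ edges followed by an edge $ut\in E$, and the cost of this concatenation is $c(P_{j-1}) + \gamma^{j-1}\cdot c(ut)$, where $P_{j-1}$ is the $s\to u$ prefix. Minimizing yields the recurrence
\[
\dist^j_G(s,t)=\min_{ut\in E}\left\{\dist^{j-1}_G(s,u)+\gamma^{j-1}\cdot c(ut)\right\},
\]
with base case $\dist^0_G(s,s)=0$ and $\dist^0_G(s,u)=\infty$ for $u\neq s$.

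First I would precompute the scalar $\gamma^{j-1}$ incrementally (one multiplication per level). Then for each $j=1,\ldots,k$ in turn, given the array $(\dist^{j-1}_G(s,u))_{u\in V}$, I would iterate once over all edges $ut\in E$ and update $\dist^j_G(s,t)$ via the recurrence above. Each level costs $O(m)$ time, so across $k$ levels the total cost is $O(mk)$, as required.

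There is no real obstacle here: correctness follows from the trivial walk decomposition (there is no ambiguity about which edge is ``last'', unlike in the definition of $\dist^{\leq k}$ where one also has to handle the option of not extending), and the running time is just the obvious $k$-fold Bellman-Ford bookkeeping. The only minor subtlety is that the discount factor applied to an edge depends on its \emph{position} in the walk, not on the current iteration index in a symmetric way; handling this is precisely why $\gamma^{j-1}$ (rather than a fixed $\gamma$ as in Lemma~\ref{l:target}) appears as the multiplier on $c(ut)$ in the recurrence, and this single scalar per iteration is all that needs to be tracked.
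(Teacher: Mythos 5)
Your proof is correct and follows essentially the same approach as the paper: the same forward recurrence $\dist^j_G(s,t)=\min_{ut\in E}\{\dist^{j-1}_G(s,u)+\gamma^{j-1}\cdot c(ut)\}$ with the same base case, yielding $O(m)$ per level and $O(mk)$ overall. Your observation that the full strength of Observation~\ref{o:prefix} is not really needed here (simple last-edge decomposition suffices) is a fair point, but the paper invokes the same recurrence and bookkeeping.
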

\begin{proof}
  We have $\dist^{0}_G(t,t)=0$ and $\dist^{0}_G(s,t)=\infty$ for $s\neq t$.
  By Observation~\ref{o:prefix}, for any $j>0$ and $t\in V$
  we have
  \begin{equation*}
    \dist^{j}_G(s,t)=\min_{ut\in E}\{\dist^{j-1}_G(s,u)+\gamma^{j-1}\cdot c(ut)\}.
  \end{equation*}
  As a result, all the required $\dist^j_G(s,t)$ can be computed in $O(mk)$ time.
\end{proof}
\begin{lemma}\label{l:lines}
  Let $s\in V$ and $k\geq 1$. Then, in $O(mk)$ time one can construct
  a data structure $D(s)$ answering the following queries.

  For a given vector $(v,x_1,\ldots,x_l)\in V\times\mathbb{R}^l$ such
  that $x_1\leq \ldots \leq x_l$, compute the values:
  \begin{equation*}
    D(s,v,x_l)=\min_{i=0}^k \{\dist_G^i(s,v)+\gamma^i\cdot x_l\},
  \end{equation*}
  for $j=1,\ldots,l$. The query is answered in $O(k+l)$ time.
\end{lemma}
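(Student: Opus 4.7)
The plan is to reduce the query $D(s,v,x)$ to a standard lower-envelope evaluation. First, I invoke Lemma~\ref{l:source} to compute all values $\dist_G^i(s,v)$ for $v\in V$ and $i=0,1,\ldots,k$ in total $O(mk)$ time and store them. Observe that, for a fixed $v$, the quantity $\min_{i=0}^k \{\dist_G^i(s,v)+\gamma^i\cdot x\}$ is exactly the lower envelope (evaluated at $x$) of the $k+1$ affine functions $L_i^v(x)=\dist_G^i(s,v)+\gamma^i\cdot x$.

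Since $\gamma\in(0,1)$, the slopes $\gamma^0>\gamma^1>\ldots>\gamma^k$ come in strictly decreasing order, so the lines $L_0^v,L_1^v,\ldots,L_k^v$ are already sorted by slope. The lower envelope of lines given in slope-sorted order can be computed in linear time by the standard stack-based convex-hull-trick sweep: when processing $L_i^v$, pop from the stack every line whose last breakpoint with the previous top is dominated by $L_i^v$, then push $L_i^v$ together with its breakpoint against the new top. This yields, for each $v\in V$, a sorted list of at most $k+1$ breakpoints on the $x$-axis, each tagged with the index $i$ of the line attaining the minimum on the corresponding interval. Building these envelopes for all $v\in V$ costs $O(k)$ per vertex, hence $O(nk)=O(mk)$ in total (absorbing the $O(nk)$ term into $O(mk)$, or adding it explicitly in the degenerate case $m<n$). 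This collection of envelopes is stored as the data structure $D(s)$.

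To answer a query $(v,x_1,\ldots,x_l)$ with $x_1\leq\cdots\leq x_l$, I retrieve the envelope associated with $v$ and perform a two-pointer sweep: maintain a pointer $p$ into the list of envelope breakpoints of $v$ and a pointer $q$ into the query list, both initially at the leftmost position. For each $q$, advance $p$ forward while the current breakpoint is $\leq x_q$, then read the optimal index $i$ associated with the current envelope segment and output $\dist_G^i(s,v)+\gamma^i\cdot x_q$ using the stored distance value and a precomputed table of $\gamma^0,\ldots,\gamma^k$. Because the queries are sorted, $p$ moves monotonically, so the sweep performs $O(k+l)$ work in total.

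The main technical point is that monotone slope order is exactly what is needed to make envelope construction run in $O(k)$ time and to make the query sweep truly linear; both are standard consequences of the convex-hull trick once the slope ordering is established, which here follows for free from $\gamma\in(0,1)$. The rest of the argument is bookkeeping.
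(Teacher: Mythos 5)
Your proof is correct and takes essentially the same approach as the paper: compute $\dist_G^i(s,v)$ via Lemma~\ref{l:source}, observe that $D(s,v,x)$ is the lower envelope of the $k+1$ lines $\dist_G^i(s,v)+\gamma^i x$, exploit that the slopes $\gamma^0>\cdots>\gamma^k$ are already sorted to build each envelope in $O(k)$ time, and answer a sorted batch of queries by a monotone pointer sweep in $O(k+l)$ time. The only difference is that you spell out the stack-based convex-hull-trick construction explicitly, whereas the paper simply cites that slope-sorted lines admit linear-time envelope construction.
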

\begin{proof}
  The first step is to compute $\dist_G^j(s,t)$ for all $j=0,\ldots,k$ and $t\in V$ using Lemma~\ref{l:source}.
  For a given $v$, consider the lines $y=\dist^i_G(s,v)+\gamma^i\cdot x$ for $i=0,\ldots,k$.
  The lower envelope $L_v$ of these lines can be computed in $O(n)$ time
  since the slopes of these lines are sorted in decreasing order.
  Note that $L_v$ can be interpreted as a plot of the function $D(s,v,x)$.
  The data structure consists of the envelopes $L_v$ for all $v\in V$.

  When a query $(v,x_1,\ldots,x_l)$ comes, we match the coordinates $x_1,\ldots,x_l$ to the
  segments of~$L_v$ corresponding to the lines minimizing each subsequent $x_i$.
  Since the coordinates $x_i$ come in increasing order and $|L_v|=O(k)$, answering a query takes $O(k+l)$ time.
\end{proof}

\begin{lemma}\label{l:hitting}
  Let $S\subseteq V$ and $k\geq 1$. Then, in $O(|S|mk)$ time one can find a set $X\subseteq V$
  of size $O(n/k\log{n})$ such that for all $(s,t)\in S\times V$ satisfying $\ell_{s,t}\geq k$,
  there exists a path $Q_{s,t}=s\to t$ with $c(Q_{s,t})=\dist_G(s,t)$
  such that one of the first $k+1$ vertices of $Q_{s,t}$ is contained in $X$.
  We call~$X$ a \emph{$k$-hitting set} of discounted shortest paths from $S$ to~$V$.
\end{lemma}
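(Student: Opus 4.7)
My plan is to apply the classical greedy set-cover algorithm to a family of canonical $k$-edge shortest-path prefixes extracted from bounded-depth single-source relaxations.

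First, for each $s\in S$ I would invoke Lemma~\ref{l:source} with parameter $k$, obtaining in $O(mk)$ time per source the values $\dist^j_G(s,u)$ for all $j\in\{0,\dots,k\}$ and $u\in V$, along with predecessor pointers that let me reconstruct, for every $u$ with $\dist^k_G(s,u)<\infty$, a canonical $k$-edge walk $\pi_{s,u}$ from $s$ to $u$ of cost $\dist^k_G(s,u)$. Listing all such walks by chasing pointers costs $O(|S|nk)$, which is absorbed into $O(|S|mk)$. Let $\mathcal F=\{V(\pi_{s,u}) : s\in S,\, u\in V,\, \pi_{s,u}\text{ is simple}\}$; every element of $\mathcal F$ has exactly $k+1$ distinct vertices, and $|\mathcal F|\leq |S|n\leq n^2$.

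Second, I would run the standard greedy hitting-set procedure on $\mathcal F$: maintain, for each $v\in V$, a counter of uncovered sets in $\mathcal F$ that contain $v$; at each step add to $X$ the vertex with the largest counter and update. Since the total vertex-set incidence is $(k+1)\cdot|\mathcal F|$ across a universe of $n$ vertices, some vertex covers at least a $(k+1)/n$ fraction of the still-uncovered sets, so after $O((n/k)\log|\mathcal F|)=O((n/k)\log n)$ iterations every set of $\mathcal F$ is hit and $|X|=O((n/k)\log n)$. With incidence lists the bookkeeping is $O(k\cdot |\mathcal F|)=O(|S|nk)$, again within the $O(|S|mk)$ budget.

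Third, for correctness fix $(s,t)\in S\times V$ with $\ell_{s,t}\geq k$. By Lemma~\ref{l:apspred}(4), there is a simple minimum-length shortest $s\to t$ path $P^*$ of exactly $\ell_{s,t}$ edges; by Observation~\ref{o:prefix} its $k$-prefix is a simple $k$-edge $s\to u^*$ walk of cost $\dist^k_G(s,u^*)$. \textbf{The main obstacle I anticipate} is that the canonical Bellman-Ford walk $\pi_{s,u^*}$ need not coincide with the actual prefix of $P^*$, and might not even be simple on its own, so $X\cap V(\pi_{s,u^*})\neq\emptyset$ does not \emph{a priori} give a vertex of $X$ among the first $k+1$ vertices of a valid simple shortest $s\to t$ path. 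To resolve this I would strengthen the tie-breaking rule inside Lemma~\ref{l:source} so that whenever a simple $k$-edge walk of minimum cost exists for the pair $(s,u)$, the reconstructed $\pi_{s,u}$ is itself simple (so in particular $\pi_{s,u^*}$ is placed in $\mathcal F$), and then argue via a standard exchange step, using the gadget structure of the reduced graph of Lemma~\ref{l:apspred}, that $\pi_{s,u^*}$ can always be completed to a simple shortest $s\to t$ path $Q_{s,t}$; the $X$-vertex in $V(\pi_{s,u^*})$ then lies among the first $k+1$ vertices of $Q_{s,t}$, as required. If maintaining simple canonical walks under this tie-breaking turns out to be delicate, a fallback is to derandomize the obvious random-sampling hitting set via the method of conditional expectations applied to exactly this family $\mathcal F$, which reproduces the same size and time bounds.
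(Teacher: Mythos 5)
Your approach is essentially the paper's: run bounded-depth Bellman--Ford from each source to extract $k$-edge minimum-cost walks, keep the ones that happen to be simple (each has exactly $k+1$ distinct vertices), and run the deterministic greedy hitting-set algorithm on that family. The time/size accounting is also the same.

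The difference, and the gap, is in how you handle the simplicity issue. You correctly identify the worry that the canonical Bellman--Ford walk $\pi_{s,u^*}$ need not coincide with the $k$-prefix of $P^*$, and might not be simple. But the fix you propose---strengthening the tie-breaking in Lemma~\ref{l:source} so that simple witnesses are always reconstructed, plus an exchange argument to complete $\pi_{s,u^*}$ to a simple shortest path---is both unnecessary and not obviously easy to carry out. The paper's argument resolves it directly from property~(4) of Lemma~\ref{l:apspred} and Observation~\ref{o:prefix}, with no modification to the reconstruction: take \emph{any} $k$-edge walk $\pi_{s,u^*}$ of cost $\dist_G^k(s,u^*)$ returned by Lemma~\ref{l:source}. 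Writing $P^*=R_1R_2$ with $|R_1|=k$, Observation~\ref{o:prefix} gives $c(R_1)=\dist_G^k(s,u^*)=c(\pi_{s,u^*})$, and since $|\pi_{s,u^*}|=|R_1|$ we also have $\gamma(\pi_{s,u^*})=\gamma(R_1)$, so $c(\pi_{s,u^*}R_2)=c(P^*)=\dist_G(s,t)$ and $\pi_{s,u^*}R_2$ has $\ell_{s,t}$ edges. Property~(4) then says that \emph{every} shortest $s\to t$ walk with the minimum number of edges $\ell_{s,t}$ is a simple path, so $\pi_{s,u^*}R_2$ is simple; in particular $\pi_{s,u^*}$ is simple (hence in $\mathcal F$), and $Q_{s,t}:=\pi_{s,u^*}R_2$ is the required simple shortest path whose first $k+1$ vertices are $V(\pi_{s,u^*})$. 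No tie-breaking strengthening or exchange step is needed, and in fact it is not clear that the tie-breaking strengthening you propose can be implemented within the stated time bound, so you should not rely on it. With that simplification your argument becomes the paper's proof.
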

\begin{proof}
  We first compute $\dist_G^k(s,t)$ for all $(s,t)\in S\times V$ by applying Lemma~\ref{l:source} for
  each $s\in S$. This takes $O(|S|mk)$ time.
  Note that the algorithm in Lemma~\ref{l:source} can be easily extended
  to also return for each $t$ some (not necessarily simple) walk $P_{s,t}$ with $|P_{s,t}|=k$
  and $c(P_{s,t})=\dist^k_G(s,t)$,
  if one exists.
  Now, let $\mathcal{Q}$ denote those computed $P_{s,t}$ \emph{that constitute simple paths},
  i.e., satisfy $V(P_{s,t})=k+1$.

  We now prove that a set $X$ hitting all paths in $\mathcal{Q}$ (i.e., such that $V(P)\cap X\neq\emptyset$ for all $P\in\mathcal{Q}$) is
  a $k$-hitting set of discounted shortest paths from $S$ to $V$.
  Indeed, consider some pair $s,t\in V$ and some path $R$ with $c(R)=\dist_G(s,t)$ and $|R|=\ell_{s,t}\geq k$.
  Let $R=R_1R_2$, where $R_1=s\to w$ and $|R_1|=k$.
  Note that $P_{s,w}$ exists since $R_1$ does. By Observation~\ref{o:prefix}, we have $c(R_1)=\dist^k_G(s,w)=c(P_{s,w})$.
  So $c(P_{s,w}R_2)=\dist_G(s,t)$ and $P_{s,w}R_2$ is a discounted shortest $s\to t$ path.
  If $P_{s,w}$ was not simple, then $P_{s,w}R_2$ would be a non-simple shortest $s\to t$ path
  with the minimum possible number of edges~$\ell_{s,t}$,
  which would contradict the property~(4) guaranteed by Lemma~\ref{l:apspred}.
  So $P_{s,w}$ has to be simple and therefore $P_{s,w}\in\mathcal{Q}$.
  As a result, $V(P_{s,t})\cap X\neq\emptyset$.
  Note that the discounted shortest $s\to t$ path $P_{s,w}R_2$ contains a vertex of $X$ among its first $k+1$ vertices, as desired.

  Since every element of $\mathcal{Q}$ has precisely $k+1$ \emph{distinct} vertices of $V$,
  one can deterministically compute a set $X$ of
  size $O((n/k)\log{n})$ that hits
  all elements in $\mathcal{Q}$ using a folklore greedy algorithm
  that repeatedly includes in $X$ an element that hits the most (which is always at least an \linebreak $\Omega(k/n)$-fraction)
  of the remaining elements of $\mathcal{Q}$. 
  The greedy algorithm can be easily implemented in $O(|\mathcal{Q}|k)=O(|S|nk)$ time.
  For analysis, see e.g.~\cite[Lemma~5.2]{King99}.
\end{proof}

Let us denote denote by $T(r)$ the time needed to compute
discounted shortest paths from~$r$ sources ($1\leq r\leq n$) in $G$,
as a function of $n$ and $m$.
Of course, our ultimate goal is to compute discounted shortest
paths from all~$n$ sources.
We will show two different reductions to a problem with
a smaller number of sources.

\begin{lemma}\label{l:red1}
  There exists a constant $c\geq 1$ such that for
  any $r,h\in [1,n]$ we have:
  \begin{enumerate}[label=(\arabic*)]
    \item $T(r)\leq T(\min(cn/h\cdot \log{n},n))+O(nmh)$,
    \item $T(r)\leq T(\min(cn/h\cdot \log{n},n))+O(rmh+(n^2/h)\log{n}\cdot (r+\log{n}))$.
  \end{enumerate}
\end{lemma}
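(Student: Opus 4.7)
The plan is to establish both bounds by a common three-step scheme: build a hitting set $X\subseteq V$ of size $O((n/h)\log n)$ via Lemma~\ref{l:hitting}; recursively solve discounted APSP with source set $X$, obtaining $\dist_G(x,t)$ for all $x\in X$, $t\in V$; and combine these values with short-prefix information to recover $\dist_G(s,t)$ for every original source~$s$. Both combines rely on suffix optimality (Observation~\ref{o:suffix}): if a shortest $s\to t$ path factors as $s\to x\to t$ with $x\in X$ at position~$i$, its suffix attains $\dist_G(x,t)$, so the whole distance decomposes as $\dist^i_G(s,x)+\gamma^i\dist_G(x,t)$. The two items differ only in the set $S$ passed to Lemma~\ref{l:hitting} and in how the combine is implemented.

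For~(1), I would invoke Lemma~\ref{l:hitting} with $S=V$ and $k=h$, spending $O(nmh)$ to obtain an $X$ that hits a shortest path for every pair $(s,t)$ with $\ell_{s,t}\geq h$. The combine is done per target $t\in V$: augment $G$ with a sink $T_t$ joined by edges $(x,T_t)$ of cost $\dist_G(x,t)$ and discount~$1$ for every $x\in X$, plus the edge $(t,T_t)$ of cost~$0$ and discount~$1$, and then run the reverse Bellman-Ford of Lemma~\ref{l:target} from $T_t$ for $h+1$ iterations. A short case analysis shows that $\dist^{\leq h+1}_{G_t}(s,T_t)$ equals $\min(\dist^{\leq h}_G(s,t),\min_{i\leq h,\,x\in X}\dist^i_G(s,x)+\gamma^i\dist_G(x,t))$, which coincides with $\dist_G(s,t)$: the short-path case ($\ell_{s,t}<h$) is captured via the $(t,T_t)$ edge, and the long-path case via the hitting vertex granted by Lemma~\ref{l:hitting} together with suffix optimality. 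The per-target cost is $O(mh+|X|)$, summing to $O(nmh+n|X|)=O(nmh)$ in the non-vacuous regime.

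For~(2), I would instead feed the $r$ query sources as $S$ into Lemma~\ref{l:hitting}, paying only $O(rmh)$. The combine is organized per source using Lemma~\ref{l:lines}: as preprocessing, for each $x\in X$ sort $\{\dist_G(x,t):t\in V\}$, at total cost $O(|X|n\log n)=O((n^2/h)\log^2 n)$. Then for each source~$s$, build $D(s)$ in $O(mh)$ time and, for each $x\in X$, issue a batched query against the precomputed sorted list in $O(h+n)$ time, obtaining all $n$ values $D(s,x,\dist_G(x,t))$; take a per-$t$ minimum over $x\in X$, including the short-path contribution $D(s,t,0)=\dist^{\leq h}_G(s,t)$ to cover the case $\ell_{s,t}<h$. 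Per source this costs $O(mh+n|X|)$; summing over the $r$ sources and adding the preprocessing yields $O(rmh+(n^2/h)\log n\cdot(r+\log n))$.

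The main obstacle I foresee is keeping the combine cost of~(1) within $O(nmh)$: a naive reverse Bellman-Ford in $G_t$ would relax all $|X|$ sink-edges in each of the $h+1$ iterations, adding an $O(n|X|h)=O(n^2\log n)$ overhead that can exceed $O(nmh)$ for sparse inputs. The saving observation is that each edge $(x,T_t)$ attains its optimal value $\dist_G(x,t)$ already in its first relaxation (since $\dist^{\leq 0}_{G_t}(T_t,T_t)=0$) and can never be improved thereafter, so the algorithm can seed the sink-edges once and then sweep only the $m$ edges of~$G$ in the remaining $h$ iterations. Correctness of both combines reduces to the factorization sketched in the first paragraph, and the remaining work is a direct sum of the hitting-set, recursive, and combine costs.
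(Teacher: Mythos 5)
Your proposal is correct and follows essentially the same route as the paper: build a hitting set $X$ via Lemma~\ref{l:hitting}, recurse on $X$, and combine via a Bellman-Ford sweep seeded from the $X$-values for item~(1) (your sink-gadget $G_t$ is a reformulation of the paper's $d_j$ recursion) and via the lower-envelope structure of Lemma~\ref{l:lines} for item~(2). Your choice $S=V$ in item~(1), the sink-vertex gadget with its one-shot relaxation of the $|X|$ sink edges, and folding the short-path case into the same minimum (via the $(t,T_t)$ edge, resp.\ $D(s,t,0)$) rather than computing it separately are cosmetic variants of the paper's bookkeeping and stay within the stated $O(nmh)$ and $O(rmh+(n^2/h)\log n\cdot(r+\log n))$ budgets.
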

\begin{proof}
  Let $S$ denote the set of $r$ source vertices.
  Clearly, we can compute discounted shortest paths that contain less than $h$
  edges in $O(rmh)$ by applying Lemma~\ref{l:source} to every source vertex $s\in S$.
  Hence, we can concentrate computing discounted shortest paths
  for such pairs $(s,t)$ that $\ell_{s,t}\geq h$.

  Let $X$ be a $h$-hitting set of discounted shortest paths from $S$ to $V$
  computed using Lemma~\ref{l:hitting}. 
  The set $X$ has size $O(\min((n/h)\log{n},n))$
  and can be computed in $O(rmh)$ time.
  Let us compute the values $\dist_G(s,t)$ for all $(s,t)\in X\times V$ recursively in
  $T(|X|)=T(O(\min(n/h\log{n},n)))$ time.

  Now consider the item (1).
  Let us define $d_j(s,t)$ for $(s,t)\in V\times V$ and $j=0,\ldots,h$ as follows:
  \begin{equation*}
    d_j(s,t)=\begin{cases} \dist_G(s,t) &\text{ if }j=0\text{ and }s\in X,\\
                            \infty &\text{ if }j=0\text{ and }s\notin X,\\
                           \min\left(d_{j-1}(s,t),\min_{su\in E}\{c(su)+\gamma\cdot d_{j-1}(u,t)\}\right)&\text{ if }j>0. 
    \end{cases}
  \end{equation*}
  Similarly as in Lemma~\ref{l:target}, the values $d_j(s,t)$ for a single $t\in V$ can be computed using
  $h$ Bellman-Ford-style phases in $O(mh)$ time. Through all $t$, computing them takes $O(nmh)$ time.

  We now show that if $\ell_{s,t}\geq h$,
  then $d_h(s,t)=\dist_G(s,t)$. Consequently, we can return \linebreak
  $\dist_G(s,t)=\min(d_h(s,t),\dist_G^{\leq h}(s,t))$
  for all $s,t\in S\times V$.

  Indeed, let $QR=s\to t$ be a discounted shortest path such that $|Q|=k$ and there
  exists $x\in V(Q)\cap X$.
  The existence of such a path $QR$ is guaranteed by Lemma~\ref{l:hitting}.
  Let $Q=Q_1Q_2$, where $Q_1=s\to x$.
  By Observation~\ref{o:suffix}, $Q_2R$ is a discounted shortest $x\to t$ path,
  so \linebreak $c(Q_2R)=\dist_G(x,t)=d_0(x,t)$.
  If $Q_1=e_qe_{q-1}\ldots e_1$, where $e_i=v_iv_{i-1}$ then
  using Observation~\ref{o:suffix}
  it can be easily proved inductively that for $j=0,\ldots,q$ we have $d_j(v_j,t)=c(e_j\ldots e_1Q_2R)$.
  So $d_q(s,t)=c(QR)=\dist_G(s,t)$.
  Since $q\leq h$, $d_h(s,t)\leq d_q(s,t)$, and we conclude that $d_h(s,t)=\dist_G(s,t)$ holds as well.
  This concludes the proof of item~(1).

  Let us move to item~(2). Compute the data structure $D(s)$ of Lemma~\ref{l:lines} with $k=h$ for each $s\in S$.
  For every $x\in X$, sort the previously computed values $\dist_G(x,t)$, where $t\in V$.
  This takes $O(|X|n\log{n})=O((n^2/h)\log^2{n})$ time.
  For each $s\in S$, $x\in X$, $t\in V$ compute the values $D(s,x,\dist_G(x,t))$
  as defined in Lemma~\ref{l:lines}.
  This takes $O(r\cdot (n/h)\log{n}\cdot (h+n))=O(r(n^2/h)\log{n})$ time.
  
  We now prove that if $\ell_{s,t}\geq h$ then $\dist_G(s,t)=\min_{x\in X}\{D(s,x,\dist_G(x,t))\}$.
  Clearly, we have $\dist_G(s,t)\leq \min_{x\in X}\{D(s,x,\dist_G(x,t))\}$
  since each $D(s,x,\dist_G(x,t))$ constitutes the discounted cost of some $s\to t$ path in~$G$.
  Similarly as before, there exists such an $s\to t$ path $Q_1Q_2R$ that
  $c(Q_1Q_2R)=\dist_G(s,t)$, $Q_1=s\to y$, $|Q_1|\leq h$ and $y\in V(Q_1)\cap X$.
  Note that $c(Q_2R)=\dist_G(y,t)$ by Observation~\ref{o:suffix}.
  By Observation~\ref{o:prefix}, $c(Q_1)=\dist_G^{|Q_1|}(s,y)$, and thus 
  \begin{equation*}
    D(s,y,\dist_G(s,t))\leq \dist_G^{|Q_1|}(s,y)+\gamma^{|Q_1|}\cdot \dist_G(y,t)=c(Q_1)+\gamma^{|Q_1|}\cdot c(Q_2R)=c(Q_1Q_2R)=\dist_G(s,t).
  \end{equation*}
  We conclude $\min_{x\in X}\{D(s,x,\dist_G(x,t))\}\leq\dist_G(s,t)$ as desired.
\end{proof}

Our discounted APSP algorithm is as follows.
Denote by $\langle S,V\rangle$ the problem of 
computing $\dist_G(s,t)$ for all $(s,t)\in S\times V$.
So our goal is to solve the problem $\langle V,V\rangle$.

Let $d\geq 2$ be to be fixed later.
In the first step, we apply reduction~(1) of Lemma~\ref{l:red1} with $h=d$.
In $O(nmd)$ time this reduces the problem $\langle V,V\rangle$ to $\langle S_1,V\rangle$
for some set $S_1\subsetneq V$ with $|S_1|=O(\min(n,(n/d)\log{n}))$.

Next, for $i=1,\ldots,g=O(\log{n})$ we do the following. 
If $|S_i|=O(\log{n})$, then we solve the problem $\langle S_i,V\rangle$
in a simple-minded way
by applying Lemma~\ref{l:source} with $k=n$ for all $s\in S_i$
in $O(|S_i|nm)=O(nm\log{n})$ time. We set $g=i$ and terminate.

Otherwise, we have $|S_i|=O(\min(n,(n/d^i)\log{n}))$. We reduce the problem to $\langle S_{i+1},V\rangle$
using reduction~(2) of Lemma~\ref{l:red1}
with $h=d^{i+1}$, where $|S_{i+1}|=O(\min(n,(n/d^{i+1})\log{n}))$.
This takes
\begin{equation*}
  O(|S_i|md^{i+1}+(n^2/d^{i+1})\log{n}\cdot (|S_i|+\log{n}))=O\left(nmd\log{n}+(n^3/d^{2i+1})\log^2{n}\right)
\end{equation*}
time.
Note that $g=O(\log_d n)=O(\log{n})$ since $d\geq 2$.
Summing through all $i$ we have:
\begin{equation*}
  \sum_{i=1}^g O\left(nmd\log{n}+(n^3/d^{2i+1})\log^2{n}\right)=O((nmd+(n/d)^3)\log^2{n}).
\end{equation*}
This dominates the cost of the first reduction step and the final step solving $\langle S_g,V\rangle$.
By picking $d=\max(2,n^{1/2}/m^{1/4})$, we obtain the following theorem.
\discount*

\section{Reconstructing minimizing paths in linear space}\label{s:construct-path}
In this section, we prove the following lemma.
\lconstructpath*
\begin{proof}
  We prove the lemma by induction on $k$. If $k=1$, then it is enough to check all \linebreak $s\to t$ edges.
  So suppose $k\geq 2$, and let $k'=\lceil k/2 \rceil$.
  First, let us compute the minimal value \linebreak $(\beta,\Gamma)=(c(Q)+\gamma(Q)\cdot \alpha,\gamma(Q))$.
  This can be done using $k$ propagation steps in $O(mk)$ time and linear space.
  The next step is to compute the pairs $l_v^i$, $r_v^i$, where $v\in V$ and $i=0,\ldots,k'$, defined as:
  \begin{align*}
    l_v^i&=\max_{P\in\paths_{s,v}^i}\left\{\left((\beta-c(P))\cdot \frac{1}{\gamma(P)},\frac{1}{\gamma(P)}\right)\right\},\\
    r_v^i&=\min_{P\in\paths_{v,t}^i}\left\{\left(c(P)+\gamma(P)\cdot \alpha,\gamma(P)\right)\right\},
  \end{align*}
  where minimization/maximization is performed lexicographically.
  Clearly, we have \linebreak $l_v^0=([v=s]\cdot \beta,1)$, and $r_v^0=([v=t]\cdot \alpha,1)$.
  Set $l_v^i=(l_{v,1}^i,l_{v,2}^i)$, and $r_v^i=(r_{v,1}^i,r_{v,2}^i)$.
  For $i>0$, we have the following recursive relations:
  \begin{align*}
    l_v^i&=\max\left(l_v^{i-1},\max_{uv=e\in E}\left\{\left((l_{u,1}^{i-1}-c(e))\cdot \frac{1}{\gamma(e)},\frac{l_{u,2}^{i-1}}{\gamma(e)}\right)\right\}\right),\\
    r_v^i&=\min\left(r_v^{i-1},\min_{vw=e\in E}\left\{\left(c(e)+\gamma(e)\cdot r_{w,1}^{i-1},\gamma(e)\cdot r_{w,2}^{i-1}\right)\right\}\right).
  \end{align*}
  In particular, the values $\{l_v^i,v\in V\}$ depend only on the values $\{l_v^{i-1},v\in V\}$
  and the edges of~$G$, and can be computed in $O(m)$ time based on those.
  The values $\{r_v^i,v\in V\}$ have an analogous property.
  Consequently, we can compute the values $l_v^{k'}$ and $r_v^{k-k'}$ for all $v\in V$ in $O(mk)$ time
  and $O(n)$ space.

  Let us express some optimal path $Q$ as $Q_1Q_2$, where $Q_1\in \paths_{s,z}^{k'}$ and $Q_2\in \paths_{z,t}^{k-k'}$.
  We have
  \begin{equation*}
    c(Q_1)+\gamma(Q_1)\cdot (c(Q_2)+\gamma(Q_2)\cdot \alpha)=c(Q)+\gamma(Q)\cdot \alpha= \beta,
  \end{equation*}
  and hence we obtain:
  \begin{equation*}
    c(Q_2)+\gamma(Q_2)\cdot \alpha= (\beta-c(Q_1))\cdot\frac{1}{\gamma(Q_1)}.
  \end{equation*}
  Note that in fact we have $r_z^{k-k'}=(c(Q_2)+\gamma(Q_2)\cdot \alpha,\gamma(Q_2))$,
  as otherwise we could replace $Q_2$ with an $s\to t$ walk optimizing $r_z^{k-k'}$
  better than $Q_2$ and obtain a better path than $Q$.
  Similarly, $l_{z}^{k'}=\left((\beta-c(Q_1))\cdot\frac{1}{\gamma(Q_1)},\frac{1}{\gamma(Q_1)}\right)$, as otherwise there would exist a walk $Q_1'\in\paths_{s,z}^{k'}$ with
  \begin{equation*}
   \left((\beta-c(Q_1'))\cdot\frac{1}{\gamma(Q_1')},\frac{1}{\gamma(Q_1')}\right) =l_{z}^{k'}> \left((\beta-c(Q_1))\cdot\frac{1}{\gamma(Q_1)},\frac{1}{\gamma(Q_1)}\right).
  \end{equation*}
  Then, we would either have
  \begin{equation*}
    \beta\leq c(Q_1'Q_2)+\gamma(Q_1'Q_2)\cdot \alpha=c(Q_1')+\gamma(Q_1')\cdot  (\beta-c(Q_1))\cdot\frac{1}{\gamma(Q_1)}<c(Q_1')+\gamma(Q_1')\cdot  (\beta-c(Q_1'))\cdot\frac{1}{\gamma(Q_1')},
  \end{equation*}
  which reduces to $\beta<\beta$, i.e., a contradiction, or we would have $c(Q_1'Q_2)+\gamma(Q_1'Q_2)\cdot \alpha=\beta$
  and $\gamma(Q_1'Q_2)<\gamma(Q_1Q_2)$, which would contradict the optimality of $Q=Q_1Q_2$.
  
  It follows that $l^{k'}_{z,1}=r^{k-k'}_{z,1}$ and $\frac{r^{k-k'}_{z,2}}{l^{k'}_{z,2}}=\Gamma$.
  We conclude that in $O(n)$ time we can find some (possibly different than $z$) vertex $x\in V$ such that $l^{k'}_{x,1}=r^{k-k'}_{x,1}$ and $\frac{r^{k-k'}_{x,2}}{l^{k'}_{x,2}}=\Gamma$,
  which is guaranteed to exist by the above.

  By the induction hypothesis,
  one can construct some $Q_2^*\in \paths_{x,t}^{k-k'}$ minimizing the pair\linebreak
  $(c(Q_2^*)+\gamma(Q_2^*)\cdot \alpha,\gamma(Q_2^*))= r_{x}^{k-k'}$
  in $O(m(k-k'))$ time and linear space.
  Similarly, one can construct some $Q_1^*\in\paths_{s,x}^{k'}$ 
  minimizing the pair $\left(c(Q_1^*)+\gamma(Q_1^*)\cdot r_{x,1}^{k-k'},\gamma(Q_1^*)\right)$
  in $O(mk')$ time and linear space.
  Finally, we return the path $Q_1^*Q_2^*$.
  
  We now prove that $c(Q_1^*)+\gamma(Q_1^*)\cdot r_{x,1}^{k-k'}=\beta$ and $\gamma(Q_1^*)\cdot\gamma(Q_2^*)=\Gamma$
  from which the correctness will follow.
  Clearly, $c(Q_1^*)+\gamma(Q_1^*)\cdot r_{x,1}^{k-k'}\geq \beta$.
  Let $Q_1'$ be a path corresponding to $l_x^{k'}$.
  We have:
  \begin{equation*}
    c(Q_1^*)+\gamma(Q_1^*)\cdot r_{x,1}^{k-k'}\leq c(Q_1')+\gamma(Q_1')\cdot r_{x,1}^{k-k'}=c(Q_1')+\gamma(Q_1')\cdot l_{x,1}^{k'}=\beta.
  \end{equation*}
  We conclude that indeed  $c(Q_1^*)+\gamma(Q_1^*)\cdot r_{x,1}^{k-k'}=\beta$.
  It follows that $\gamma(Q_1^*Q_2^*)\geq \Gamma$.
  Suppose $\gamma(Q_1^*Q_2^*)>\Gamma=\frac{\gamma(Q_2^*)}{l_{x,2}^{k'}}=\gamma(Q_1')\cdot \gamma(Q_2^*)$,
  that is, $\gamma(Q_1^*)>\gamma(Q_1')$.
  But then we would have
  \begin{equation*}
    \left(c(Q_1^*)+\gamma(Q_1^*)\cdot r_{x,1}^{k-k'},\gamma(Q_1^*)\right)>\left(c(Q_1')+\gamma(Q_1')\cdot r_{x,1}^{k-k'},\gamma(Q_1')\right),
  \end{equation*}
  which would contradict the definition of $Q_1^*$.
  As a result, $\gamma(Q_1^*Q_2^*)=\Gamma$, as desired.

  The running time $T(n,m,k)$ of the algorithm can be bounded:
  \begin{align*}
    T(n,m,1)&=O(m)\\
    T(n,m,k)&=T(n,m,\lfloor k/2 \rfloor)+T(n,m,\lceil k/2\rceil)+O(mk).
  \end{align*}
  In a standard inductive way one can easily prove $T(n,m,k)=O(mk\log{k})$.
  
  The additional space used is $O(n+k)$, since we only use $\Theta(n)$ space when
  computing the $O(1)$ parameters of recursive calls (of which are $O(k)$ in total).
  After these parameters are found, the $\Theta(n)$ space used for computing
  the values $l_v^i$, $r_v^i$ is not needed anymore.
\end{proof}

\section{Computing the $k$-cycle bounds}\label{s:kcycle}
In this section we consider computing the $k$-cycle bounds $\cb_{v,k}$ as defined
in Section~\ref{s:simplify}.
\begin{lemma}\label{l:locate-cycle}
  Let $v\in V$, $k\leq n$, and $\xi\in \mathbb{R}$.
  There is a linear-space algorithm running in $O(mk)$ time and declaring exactly one of the following statements:
  \begin{enumerate}[label=(\alph*)]
    \item There exists a closed walk $Q\in\cycs_v^k$ with $\gamma(Q)<1$ and $\cb(Q)<\xi$.
    \item There exists a closed walk $Q\in\cycs_v^k$ with $\gamma(Q)>1$ and $\cb(Q)>\xi$.
    \item $\max\{\cb_v(C):C\in\cycs_v^k, \gamma(C)>1\}\leq \xi< \cb_{v,k}$.
    \item There exists a closed walk $Q$ with $c_v(Q)<0$ and $\gamma(Q)=1$ certifying the infeasibility of~\eqref{eq:system}.
    \item $\cb_{v,k}=\xi$.
  \end{enumerate}
  Additionally, after the algorithm finishes and if (a), (b), (d), or (e) is returned, one can request to construct a closed walk $Q$
  certifying the respective statement in $O(mk\log{k})$ time and linear space.

  Note that returning (a) implies $\cb_{v,k}<\xi$, and (c) implies $\cb_{v,k}\geq \xi$.
  In particular, if~\eqref{eq:system} is feasible, then (a) is returned iff $\cb_{v,k}<\xi$,
  whereas $\cb_{v,k}\geq\xi$ iff either (b) or (c) is returned.
\end{lemma}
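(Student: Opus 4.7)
My plan is to run $k$ rounds of backward Bellman--Ford-style propagation starting at $v$, but while maintaining at every $u\in V$ a \emph{pair} $(\alpha_u,\beta_u)$ that lexicographically minimizes $\bigl(c(P)+\gamma(P)\xi,\;\gamma(P)\bigr)$ over $u\to v$ walks of length at most the current iteration count. I initialize $(\alpha^{(0)}_v,\beta^{(0)}_v)=(\xi,1)$ (the contribution of the empty walk at $v$) and $(\alpha^{(0)}_u,\beta^{(0)}_u)=(+\infty,+\infty)$ otherwise, and in iteration $i$ set
\begin{equation*}
(\alpha^{(i)}_u,\beta^{(i)}_u)\;=\;\min_{\mathrm{lex}}\Bigl(\{(\alpha^{(i-1)}_u,\beta^{(i-1)}_u)\}\cup\{\bigl(c(e)+\gamma(e)\alpha^{(i-1)}_w,\;\gamma(e)\beta^{(i-1)}_w\bigr):uw=e\in E\}\Bigr).
\end{equation*}
Each round costs $O(m)$ time and only two layers need be stored, so this phase takes $O(mk)$ time and linear space. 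A routine induction gives that $(\alpha^{(k)}_v,\beta^{(k)}_v)$ is the lex-minimum of $(c(Q)+\gamma(Q)\xi,\gamma(Q))$ over all $Q\in\cycs_v^k$, with the empty closed walk at $v$ contributing $(\xi,1)$; in particular $\alpha^{(k)}_v\leq\xi$.

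Then I read off the classification from two comparisons. If $\alpha^{(k)}_v<\xi$, the lex-minimizer $Q$ is necessarily non-empty and satisfies $c(Q)+\gamma(Q)\xi<\xi$; the position of $\beta^{(k)}_v$ relative to $1$ distinguishes $\cb(Q)<\xi$ (case (a)), $\cb(Q)>\xi$ (case (b)), and $\gamma(Q)=1$ with $c(Q)<0$ (case (d), a negative unit-gain certificate). If $\alpha^{(k)}_v=\xi$ and $\beta^{(k)}_v<1$, the non-empty witness has $\gamma<1$ and $\cb=\xi$, so $\cb_{v,k}=\xi$: case (e). Otherwise $\alpha^{(k)}_v=\xi$ and $\beta^{(k)}_v\geq 1$, meaning no non-empty $\gamma<1$ walk improves on the empty walk's value (so $\cb_{v,k}>\xi$); the equality $\alpha^{(k)}_v=\xi$ also rules out any $\gamma(C)>1$ walk with $\cb(C)>\xi$ and any negative unit-gain closed walk, yielding case (c).

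The one subtle step, and the reason for breaking ties by $\gamma$ rather than tracking $c+\gamma\xi$ alone, is separating (c) from (e): the first coordinate by itself cannot tell us whether the value $\xi$ is realized by some $\gamma<1$ witness or only by the empty walk and $\gamma\geq 1$ witnesses. Preferring smaller $\gamma$ in ties makes any $\gamma<1$ witness dominate the empty walk's $\gamma=1$, so case (e) is detected exactly when such a witness exists. To construct the certificate in cases (a), (b), (d), (e) on request, I would invoke Lemma~\ref{l:construct-path} with $s=t=v$ and $\alpha=\xi$; it returns a walk in $\paths_{v,v}^k$ lex-minimizing $(c(Q)+\gamma(Q)\xi,\gamma(Q))$ in $O(mk\log k)$ time and linear space, and by the analysis above this walk is non-empty and of exactly the declared type. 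The ``iff'' refinements under feasibility are a direct consequence of Shostak's theorem, since feasibility forces $\max_{\gamma(C)>1}\cb_v(C)\leq\xmax{v}\leq\cb_{v,k}$, ruling out case (b) whenever $\cb_{v,k}<\xi$.
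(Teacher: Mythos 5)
Your proof is correct and is essentially the same as the paper's: the paper also propagates lex-minimal pairs $(c(P)+\gamma(P)\xi,\gamma(P))$ backward from $v$ through walks of length at most $k$, reads off the case from the resulting pair at $v$ (including the same tie-break by $\gamma$ to distinguish~(c) from~(e)), and reconstructs the certifying closed walk on demand via Lemma~\ref{l:construct-path}. One cosmetic remark: since the empty closed walk at $v$ is always among the candidates and contributes $(\xi,1)$, the condition ``$\beta^{(k)}_v\ge 1$'' in your last case is in fact always the equality $\beta^{(k)}_v=1$, but this does not affect the argument.
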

\begin{proof}
  Propagate the bounds $y_v=\xi$, $y_u=\infty$ for all $u\neq v$, through paths of length at most $k$,
  and also maintain the values $\gamma_w$ such that after the $i$-th propagation
  we have $y_w\neq\infty$ implies that there exists a walk $P\in\paths_{w,v}^i$ with
  $(c(P)+\gamma(P)\cdot\xi=y_w,\gamma(P)=\gamma_w)$ lexicographically minimal.
  This clearly takes $O(mk)$ time and requires linear space.

  Let $Q$ be any (possibly empty) closed walk such that
  $y_v=c(Q)+\gamma(Q)\cdot\xi$ and $\gamma_v=\gamma(Q)$ after the propagation finishes.
  
  Since $\cycs_v$ are actually paths $v\to v$,
  we have $y_v\leq c_v(C)+\gamma(C)\cdot \xi$ for all $C\in\cycs_v^k$ after the propagation.
  As propagation never increases bounds,
  we have $y_v\leq \xi$.

  If eventually $y_v<\xi$ and $\gamma(Q)<1$ both hold, then $c_v(Q)+\gamma(Q)\cdot\xi<\xi$ and we obtain
  $\frac{c_v(Q)}{1-\gamma(Q)}=\cb_v(Q)<\xi$ so indeed $\cb_{v,k}<\xi$.
  Thus, in this case, item (a) with the closed walk $Q$ is returned.

  If $y_v<\xi$ and $\gamma(Q)>1$ then  $\cb_v(Q)=\frac{c_v(Q)}{1-\gamma(Q)}>\xi$.
  We thus return (b) with closed walk $Q$.
  Note that if the system is feasible, for any $C'\in\cycs_v$ with $\gamma(C')<1$
  we need to have $\cb_v(C')\geq \cb_v(Q)$. So we indeed have $\phi_{v,k}\geq\xi$
  if the system is feasible.

  If $y_v<\xi$ and $\gamma(Q)=1$, then we obtain $c_v(Q)<0$.
  Recall that the system is infeasible in such a case and we return item (c)
  with a certificate of infeasibility $Q$.

  Now suppose $y_v= \xi$.
  Then, for all closed walks $C'\in\cycs_v^k$ we have $c(C')+\gamma(C')\cdot\xi\geq \xi$.
  In particular, if $\gamma(C')<1$, then this implies 
  $\frac{c_v(C')}{1-\gamma(C')}\geq \xi$,
  whereas if $\gamma(C')>1$ then  $\frac{c_v(C')}{1-\gamma(C')}\leq \xi$.
  If $\gamma(Q)<1$, then we have $\cb_{v,k}=\cb_v(Q)=\xi$.
  As a result, we can return (e) in this case.
  Otherwise, $\gamma(Q)=1$ which proves that $\cb_{v,k}>\xi$ since
  we minimized $\gamma(Q)$ in the second place.
  We can thus return (c).

  By Lemma~\ref{l:construct-path}, the closed walk can be constructed
  in additional $O(mk\log{k})$ time and linear space if needed.
\end{proof}

\newcommand{\ubar}[1]{\text{\b{$#1$}}}

\lparametric*
\begin{proof}
  First of all, we can check in $O(mk)$ time if $\cb_{v,k}$ is finite by checking whether
  a closed walk $C_{\min}\in \cycs_v^k$ that minimizes $\gamma(C_{\min})$ 
  satisfies $\gamma(C_{\min})<1$. Otherwise, we have $\cb_{v,k}=\infty$.
  Note that some closed walk $C_{\min}$ can be computed in $O(mk\log{k})$ time
  and linear space by applying Lemma~\ref{l:construct-path} with $\alpha=1$
  and all edge costs $c(e)$ for $e\in E$ reset to $0$.
  So in the following assume that $\cb_{v,k}$ is finite.
  
  Analogously, we find a closed walk $C_{\max}\in\cycs_v^k$ maximizing $\gamma(C_{\max})$ by applying
  Lemma~\ref{l:construct-path} with $\alpha=1$ and $c(e):=0$, $\gamma(e):=\frac{1}{\gamma(e)}$ for all $e\in E$.
  If $\gamma(C_{\max})\leq 1$, then there exists no closed walks $C\in \cycs_v^k$ with $\gamma(C)>1$.
  If $\gamma(C_{\max})>1$ and $\cb(C_{\max})>\cb(C_{\min})$, the system is infeasible
  and these two walks constitute a certificate, so we can terminate.
  In the following, let us assume that $\gamma(C_{\max})\leq 1$ or $\cb(C_{\max})\leq \cb(C_{\min})$ holds.

  Roughly speaking, for each $j=1,\ldots,k$ and $w\in V$ we wish to compute a walk $P_{j,w}\in\paths^j_{w,v}$ (actually, only the values
  $c(P_{j,k})$ and $\gamma(P_{j,k})$) such that
  $P_{j,k}$ incurs the tightest upper bound on $x_w$ with respect to $\cb_{v,k}$.
  In other words, $P_{j,w}$ minimizes $c(P_{j,w})+\gamma(P_{j,w})\cdot \cb_{v,k}$ among all $w\to v$ walks in $G$ with at most $j$ edges.

  More specifically, we maintain an interval $a,b\in \mathbb{R}\cup \{-\infty,\infty\}$ (where $a<b$), and possibly
  $C_{\min},C_{\max}\in\cycs^k_v$ such that, after the walks $P_{j,w}$ are computed,
  the following invariants are satisfied.
  \begin{enumerate}
    \item Each $P_{j,w}$ lexicographically minimizes the
  pair $\langle c(P_{j,w})+\gamma(P_{j,w})\cdot z,\gamma(P_{j,w})\rangle$
      \emph{for all}~${z\in (a,b)}$ among all walks in $\paths_{w,v}^j$.
    \item Exactly one of the following holds:
      \begin{enumerate}
        \item $\max\{\cb_v(C):C\in\cycs_v^k, \gamma(C)>1\}\leq a<\cb_{v,k}<b$,
        \item The two walks $C_{\min},C_{\max}\in\cycs_v^k$ satisfy $\gamma(C_{\min})<1$, $\gamma(C_{\max})>1$  and\linebreak
          $a<\cb(C_{\max})\leq \cb(C_{\min})<b$.
      \end{enumerate}
  \end{enumerate}
  Initially $a=-\infty$, $b=\infty$. $P_{0,v}$ is an empty walk,
  and all other walks $P_{0,w}$ for $w\neq v$ do not exist.
  One can easily verify that invariant 1. is satisfied initially,
  Note that invariant 2.(a) is satisfied initially if $\gamma(C_{\max})\leq 1$, and
  2.(b) is satisfied if $\cb(C_{\max})\leq \cb(C_{\min})$.

  The algorithm will only shrink the interval $(a,b)$.
  In particular, if the inequality \linebreak $\max\{\cb_v(C):C\in\cycs_v^k, \gamma(C)>1\}\leq a$ of invariant 2.(a) ever holds for the first time,
  it will hold till the end because $a$ can only increase.

  Suppose we have computed all the walks $P_{j-1,u}$, $u\in V$. Consider some desired walk $P_{j,w}$ that
  we want to compute. If $|P_{j,w}|<j$, then we can set $P_{j,w}=P_{j-1,w}$. Otherwise,
  $|P_{j,w}|\geq 1$. Let $P_{j,w}=eQ$, where $e=wu$.
  Then the lower bound that $P_{j,w}$ incurs on $x_w$ with respect to $\cb_{v,k}$ is:
  \begin{equation*}
    c(eQ)+\gamma(eQ)\cdot \cb_{v,k}=c(e)+\gamma(e)\cdot (c(Q)+\gamma(Q)\cdot \cb_{v,k}).
  \end{equation*}
  Since $\gamma(e)>0$, for a fixed $e$, the above is minimized for $Q=P_{j-1,u}$.
  So $c(P_{j,w})+\gamma(P_{j,w})\cdot \cb_{v,k}$ equals:
  \begin{equation*}
    \min\left(c(P_{j-1,w})+\gamma(P_{j-1,w})\cdot \cb_{v,k},\min_{wu=e\in E}\left\{(c(e)+\gamma(e)\cdot c(P_{j-1,u}))+\gamma(e)\cdot \gamma(P_{j-1,u})\cdot \cb_{v,k})\right\}\right).
  \end{equation*}
  However, the inner minimum above is not easy to evaluate without actually knowing $\cb_{v,k}$.
  But we don't need the minima, but rather the best walk $eP_{j-1,u}$ for each $w\in V$.
  To this end, we first compute, for all $w\in V$ the lower envelope $L_w$
  of the lines 
  \begin{equation*}
    y=\gamma(e)\cdot \gamma(P_{j-1,u})\cdot x + (c(e)+\gamma(e)\cdot c(P_{j-1,u})).
  \end{equation*}
  Using a standard convex-hull-like algorithm this takes $O(\deg_G(w)\log n)$ time.
  We record the $x$-coordinates of the breakpoints of the envelope as $X_w$.
  Through all $w$, computing the envelopes and breakpoints takes $O(m\log{n})$ time.

  Let $X=\{a,b\}\cup \left(\left(\bigcup_{w\in V}X_w\right)\cap (a,b)\right)$.
  We sort the set $X$ in $O(m\log{m})$ time. Then, roughly speaking, we perform a binary search for
  the two neighboring $x_i,x_{i+1}\in X$ such that ${\cb_{v,k}\in (x_i,x_{i+1})}$.
  The binary search makes decisions based on the output of the query $(v,k,\xi=x_i)$
  issued to the algorithm of Lemma~\ref{l:locate-cycle}.
  We now describe how this decision is made in more detail.

  We maintain that $a,b\in X$ and repeatedly 
  shrink the interval $[a,b]$ until $a$ and $b$ are neighboring elements of $X$.
  Let $g$ be the middle element of $X$ between $a$ and $b$.
  We issue the query $(v,k,\xi=g)$ to the procedure of Lemma~\ref{l:locate-cycle}.
  We have several cases depending on the result returned:
  \begin{enumerate}[label=(\alph*)]
    \item We set $b=g$ and reset $C_{\min}$ to the closed walk $C$ returned by Lemma~\ref{l:locate-cycle}.
    \item We set $a=g$ and reset $C_{\max}$ to the closed walk $C$ returned by Lemma~\ref{l:locate-cycle}.
    \item We set $a=g$. 
    \item The system is infeasible and we terminate with the certificate returned
  by Lemma~\ref{l:locate-cycle}.
    \item We have $\cb_{v,k}=g$ and we terminate with the certificate returned by Lemma~\ref{l:locate-cycle}.
  \end{enumerate}

  If, after applying one of the cases (a) or (b) above, we have $\cb(C_{\min})<\cb(C_{\max})$,
  then we terminate with the pair $(C_{\min},C_{\max})$ as the certificate of infeasibility.
  We also stress that in cases (a) and (b), we do not construct $C_{\min}$ or $C_{\max}$ explicitly so that
  the query costs $O(mk)$ instead of $O(mk\log{k})$ time.
  Instead, we only store the returned walk $C$ implicitly by storing $\cb(C)$ and the $O(1)$-sized parameters to the respective
  call of Lemma~\ref{l:locate-cycle}.
  If we later actually need $C_{\min}=C$ or $C_{\max}=C$ when returning a certificate of infeasibility, we call
  Lemma~\ref{l:locate-cycle} once again to reconstruct the closed walk $C$ in $O(mk\log{k})$ time.

  We now argue that while binary search proceeds without terminating prematurely
  with either $\cb_{v,k}$ or a certificate, invariants $1$. and~$2$. are satisfied.
  The cases (d) and (e) terminate the algorithm and thus require no explanation.
  Also, suppose we have $\cb(C_{\max})\leq \cb(C_{\min})$.
  
  That invariant 1. holds follows by the construction.
  Consider invariant 2 and first suppose invariant 2.(a) is satisfied
  before the binary search step.
  Then, clearly cases (a) and (b) maintain that 2.(a) is satisfied.
  However, if case (c) applies, we obtain
  \begin{equation*}
  \max\{\cb_v(C):C\in\cycs_v^k, \gamma(C)>1\}\leq a<\cb_{v,k}
  \end{equation*}
  by Lemma~\ref{l:locate-cycle}.
  But we also have $\cb_{v,k}\leq \cb(C_{\min})<b$, so invariant 2.(b)
  is satisfied afterwards.

  Now suppose that invariant 2.(b) is satisfied before the binary search step.
  For 2.(b) to be satisfied afterwards, it is enough to prove that $a<\cb_{v,k}<b$.
  As $\max\{\cb_v(C):C\in\cycs_v^k, \gamma(C)>1\}\leq a$ before, case (b) cannot actually happen. 
  If case (a) applies, then indeed the returned $C$ certifies that
  $\cb_{v,k}<g$, so we have $a<\cb_{v,k}<b$ afterwards.
  If case (3) applies, then we obtain that $g<\cb_{v,k}$, so indeed $a<\cb_{v,k}<b$ afterwards.

  The binary search clearly issues $O(\log{m})$ queries to Lemma~\ref{l:locate-cycle},
  which takes $O(mk\log{m})$ time for a single $j$.

  After the binary search finishes, $a,b$ are some neighboring
  elements $x_i,x_{i+1}$ of $X$.
  For each
  $w\in V$, we pick the walk $P_{j,w}=eP_{j-1,u}$ such that it
  attains the minima in the interval $(x_i,x_{i+1})$.
  If there are ties, we pick such $eP_{j-1,u}$ that
  $\gamma(e)\gamma(P_{j-1,u})$ is minimized.
  This is well-defined since by the definition of breakpoints $x_i,x_{i+1}$, 
  for any two walks $R,T$ of the form $eP_{j-1,u}$
  we have either $c(R)+\gamma(R)\cdot z\leq c(T)+\gamma(T)\cdot z$ for all
  $z\in (x_i,x_{i+1})$, or 
  $c(R)+\gamma(R)\cdot z\geq c(T)+\gamma(T)\cdot z$ for all
  $z\in ((x_i,x_{i+1})$.
  Picking the right walk $eP_{j-1,u}$ takes $O(m)$ time.
  
  Consider the walk $P_{k,v}$. Recall that $P_{k,v}$ minimizes
  $c(P_{k,v})+\gamma(P_{k,v})\cdot z$ for all $z\in (a,b)$.
  Observe that if invariant 2.(b) is satisfied,
  then, since $\cb(C_{\min}),\cb(C_{\max})\in (a,b)$, we obtain:
  \begin{equation}\label{eq:tmp}
    c(P_{k,v})+\gamma(P_{k,v})\cdot  \frac{c(C_{\min})}{1-\gamma(C_{\min})}\leq c(C_{\min})+\gamma(C_{\min})\cdot \frac{c(C_{\min})}{1-\gamma(C_{\min})}=\frac{c(C_{\min})}{1-\gamma(C_{\min})},
  \end{equation}
  \begin{equation}\label{eq:tmp2}
    c(P_{k,v})+\gamma(P_{k,v})\cdot  \frac{c(C_{\max})}{1-\gamma(C_{\max})}\leq c(C_{\max})+\gamma(C_{\max})\cdot \frac{c(C_{\max})}{1-\gamma(C_{\max})}=\frac{c(C_{\max})}{1-\gamma(C_{\max})}.
  \end{equation}
  On the other hand, if invariant 2.(a) is satisfied, then for $C^*\in\cycs_{v,k}$
  such that $\cb(C^*)=\cb_{v,k}\in (a,b)$, we have:
  \begin{equation}\label{eq:tmps}
    c(P_{k,v})+\gamma(P_{k,v})\cdot  \frac{c(C^*)}{1-\gamma(C^*)}\leq c(C^*)+\gamma(C^*)\cdot \frac{c(C^*)}{1-\gamma(C^*)}=\frac{c(C^*)}{1-\gamma(C^*)}.
  \end{equation}
  
  Consider the case $\gamma(P_{k,v})<1$ and invariant 2.(b) is satisfied. Then it follows from~\eqref{eq:tmp2} that $\cb(P_{k,v})\leq \cb(C_{\max})$.
  If $\cb(P_{k,v})< \cb(C_{\max})$, then the pair $(P_{k,v},C_{\max})$ deems the system infeasible.
  Otherwise, $\cb(P_{k,v})=\cb(C_{\max})$. So, $\cb_{v,k}\leq \cb(P_{k,v})$, and if the system is feasible, then $\cb_{v,k}=\cb(P_{k,v})$.
  We issue a query $(v,k,\xi=\cb(P_{k,v}))$ to Lemma~\ref{l:locate-cycle}.
  If the answer to that query is (e), we indeed have $\cb_{v,k}=\cb(P_{k,v})$ regardless of
  whether the system is feasible.
  Note that if the system is feasible, the procedure of Lemma~\ref{l:locate-cycle} will surely return (e).
  On the other hand, if the answer is (a) or (b) then the produced walk $Q$ along with $C_{\max}$
  or $P_{k,v}$ respectively forms a certificate of infeasibility.
  Clearly, if the answer is (e), we also have a relevant certificate.
  Finally, answer (c) cannot arise since $\cb_{v,k}\leq \cb(P_{k,v})$.
  Note that $P_{k,v}$ can be constructed in $O(mk\log{k})$ time and linear space using
  Lemma~\ref{l:construct-path} with $\alpha=\cb(C_{\max})$.

  If $\gamma(P_{k,v})<1$ and invariant 2.(a) is satisfied
  then $\cb_{v,k}\leq \frac{c(P_{k,v})}{1-\gamma(P_{k,v})}=\cb(P_{k,v})$
  by the definition of $\cb_{v,k}$
  and thus indeed $\cb_{v,k}=\frac{c(P_{k,v})}{1-\gamma(P_{k,v})}$ since~\eqref{eq:tmps} can be rewritten as
  \begin{equation*}
    \cb(P_{k,v})=\frac{c(P_{k,v})}{1-\gamma(P_{k,v})}\leq \frac{c_v(C^*)}{1-\gamma(C^*)}=\cb_{v,k}.
  \end{equation*}
  So in this case we return $\cb(P_{k,v})$ as $\cb_{v,k}$.

  In the following let us set $D=C^*$ if invariant 2.(a) is satisfied and $D=C_{\min}$ if invariant 2.(b)
  is satisfied. Note that we have $\cb(D)\in (a,b)$.

  Now assume that $\gamma(P_{k,v})>1$.   We now prove that then the system is infeasible and we show how
  to obtain a relevant certificate.
  Note that since we break ties while picking a walk $P_{k,v}$
  minimizing ${c(P_{k,v})+\gamma(P_{k,v})\cdot z}$
  by minimizing $\gamma(P_{k,v})$, $\gamma(P_{k,v})>1$
  implies that inequality~\eqref{eq:tmp} is actually strict if $D=C_{\min}$,
  and inequality~\eqref{eq:tmps} is strict if $D=C^*$ applies.
  We conclude that $\cb(P_{k,v})>\cb(D)$.
  If invariant 2.(b) is satisfied,
  it follows that indeed the system is infeasible
  and the pair $(P_{k,v},C_{\min})$ is a certificate.
  If invariant 2.(a) is satisfied, we have a contradiction with $\cb(P_{k,v})\leq a<\cb(C^*)$,
  so such a case cannot actually happen.

  Suppose $\gamma(P_{k,v})=1$. By~\eqref{eq:tmp}~or~\eqref{eq:tmps} (whichever applies), we obtain $c(P_{k,v})\leq 0$. 
  If $c(P_{k,v})<0$, then $P_{k,v}$ is a certificate of infeasibility.
  We now prove $c(P_{k,v})=0$ is actually impossible.
  If this was the case, we would have equality in~\eqref{eq:tmp}~or~\eqref{eq:tmps}, which would imply
  that the lines $y=x$ and $y=c(D)+\gamma(D)\cdot x$ cross at $x=\cb(C_{\min})$.
  As a result, we would have $c(D)+\gamma(D)\cdot x<c(P_{k,v})+\gamma(P_{k,v})\cdot x$ for
  all $x\in \left(\cb(D),b\right)$, where $\cb(C_{\min})<b$ by invariant 2.(b).
  However, $P_{k,v}$ minimizes $c(P_{k,v})+\gamma(P_{k,v})\cdot x$ for \emph{all}
  $x\in (a,b)$, in particular for all
  $x\in \left(\cb(D),b\right)$, a contradiction.

  Since there are $O(k)$ different walk lengths $j$, the running time is $O(mk^2\log{m})$.
  For each walk length $j=1,\ldots,k$ we use $O(m)$ space to store the envelopes and the sorted
  breakpoints.
  However, these are discarded when proceeding to the next $j$.
  When moving to the next path length $j$ we only use $O(n)$ space:
  two numbers per each walk $P_{j-1,u}$, and possibly an implicit $O(1)$-space representations of $C_{\min}$, $C_{\max}$.
  Recall that applying Lemma~\ref{l:locate-cycle} requires $O(m+k)$ space.
\end{proof}

\section{Proof of Lemma~\ref{l:locate-ext}}\label{s:locateext}
In this section, we prove the following lemma.
\locateext*

  Although the algorithm is identical to that of~\cite[Section~4]{RestrepoW09}, 
  in order to prove the additional properties, we will need to describe it (using our notation)
  and re-prove some of its known properties.

  Suppose that for each $w\in V$ and $j=1,\ldots,n$ we wish to compute a
  walk $Q_{w,j}\in \bigcup_{v\in V}\paths_{v,w}^j$ such that:
  \begin{itemize}
    \item In the first place, $Q_{w,j}$ maximizes the value $y_{w,j}=\frac{1}{\gamma(Q_{w,j})}(\xi_v-c(Q_{w,j}))$.
    \item If there are ties, $Q_{w,j}$ minimizes its number of edges $|Q_{w,j}|$.
    \item If there are still ties, $Q_{w,j}$ maximizes its ultimate edge $p_{w,j}$. Here, we assume
      that edges are ordered according to some arbitrary but fixed order.
  \end{itemize}
  For $j=0$, we have $y_{w,0}=\xi_w$ for all $w\in V$, and $p_{w,0}=\perp$.
  Observe that for $j=1,\ldots,n$, we have the following recursive relations:
  \begin{equation*}
    y_{w,j}=\max\left(y_{w,j-1}, \max_{uw=e\in E}\left\{\frac{1}{\gamma(e)}(y_{u,j-1}-c(e))\right\}\right),
  \end{equation*}
  \begin{equation*}
    p_{w,j}=\begin{cases}
      p_{w,j-1} & \text{ if }y_{w,j}=y_{w,j-1},\\
      \argmax_{uw=e\in E}\left\{\left(\frac{1}{\gamma(e)}(y_{u,j-1}-c(e)),e\right)\right\} & \text{ if }y_{w,j}<y_{w,j-1}.
      \end{cases}
  \end{equation*}
  As a result, the values $y_{w,j},p_{w,j}$ for $w\in V$ can be computed solely using the values $y_{u,j-1},p_{u,j-1}$ for $u\in V$
  in $O(m)$ time.

  Note that if $p_{v,j}=e=uv$ then we have $y_{v,j}=(y_{u,j-1}-c(e))/\gamma(e)$, and $p_{v,j}=\perp$ iff $y_{v,j}=\xi_v$.

  The algorithm proceeds in phases. When entering the $j$-th phase, $j=0,\ldots,n$, we will
  only require the values $y_{u,j-1},p_{u,j-1}$ for $u\in V$.
  Based on them we will compute the values $y_{w,j},p_{w,j}$ for $w\in V$.
  We will consider the graphs $H_l=(V,\{p_{w,l}:w\in V, p_{w,l}\neq\perp\})$ for $l\in \{j,j-1\}$,
  both of which consist of some (possibly empty) simple cycles with some out-trees attached.
  
  If $H_j$ contains a cycle $C$ with $\gamma(C)<1$, we will terminate the algorithm with a relevant certificate.
  Otherwise, if for all $j=1,\ldots,n$ the graph $H_j$ does not contain such a cycle, we will declare that
  $\xmax{v}\geq \xi_v$ for all $v\in V$.
  In the following, we give the algorithm's details and analysis.
  \begin{lemma}\label{l:rw1}
    Suppose $H_j$ contains a cycle $C$. Then: 
    \begin{itemize}
      \item for some $w\in V(C)$ we have $y_{w,j}>y_{w,j-1}$,
      \item for all $w\in V(C)$ such that $y_{w,j}>y_{w,j-1}$ we have $y_{w,j-1}\cdot (1-\gamma(C))>c_w(C)$,
      \item for all $w\in V(C)$ we have $y_{w,j}\cdot (1-\gamma(C))\geq c_w(C)$.
    \end{itemize}
  \end{lemma}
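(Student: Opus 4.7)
The plan is to deduce all three parts from a single structural inequality relating $y$-values along the edges of $C$. Write $C$ as $u_0 \to u_1 \to \dots \to u_{k-1} \to u_0$ with edges $h_i = u_{i-1}u_i$ (indices mod $k$), so that $p_{u_i,j} = h_i$ for every $i$. The key inequality I would first establish is
\[
y_{u_{i-1},j} \;\ge\; \gamma(h_i)\, y_{u_i,j} + c(h_i) \qquad \text{for every } i.
\]
To prove it, I trace back through the recursion: since $h_i = p_{u_i,j}$, there is a smallest $j'_i \le j$ at which case~(b) was applied at $u_i$ with $p_{u_i,j'_i} = h_i$, giving $y_{u_i,j'_i} = (y_{u_{i-1},j'_i-1} - c(h_i))/\gamma(h_i)$; the tie-breaking rule then keeps $y_{u_i,l}$ constant for $l$ between $j'_i$ and $j$, while $y_{u_{i-1},\cdot}$ is monotone non-decreasing, yielding the inequality. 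Two refinements of this derivation will be essential for Part~2: if case~(a) applies at $u_i$ at time~$j$ (so $h_i = p_{u_i,j-1}$ too), reapplying the argument one iteration earlier gives the weak time-$(j-1)$ variant $y_{u_{i-1},j-1} \ge \gamma(h_i)\,y_{u_i,j-1} + c(h_i)$; whereas if case~(b) applies at $u_i$ at time~$j$, then $y_{u_i,j} = (y_{u_{i-1},j-1} - c(h_i))/\gamma(h_i) > y_{u_i,j-1}$ directly produces the \emph{strict} variant $y_{u_{i-1},j-1} > \gamma(h_i)\,y_{u_i,j-1} + c(h_i)$.

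Given this setup, Part~3 follows by iterating the key inequality once around $C$ starting at any $u_i$:
\[
y_{u_i,j} \;\ge\; \gamma(h_{i+1})\, y_{u_{i+1},j} + c(h_{i+1}) \;\ge\; \dots \;\ge\; \gamma(C)\,y_{u_i,j} + c_{u_i}(C),
\]
which rearranges to $y_{u_i,j}(1-\gamma(C)) \ge c_{u_i}(C)$. For Part~1, I would argue by contradiction: if $y_{u_i,j} = y_{u_i,j-1}$ for every $u_i \in V(C)$, then by the definition of $p_{u_i,j}$ we have $p_{u_i,j} = p_{u_i,j-1}$ for each such $u_i$, so $C \subseteq H_{j-1}$ as well; pulling $C$ back to the earliest iteration $j^\ast \le j$ with $C \subseteq H_{j^\ast}$ (which must exist since $H_0$ has no edges, because $p_{w,0}=\perp$), some vertex of $C$ had its parent updated at step~$j^\ast$, placing it in case~(b) and forcing $y_{u_i,j^\ast} > y_{u_i,j^\ast-1}$. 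In the algorithm's intended usage, where $j$ is the very step at which $C$ first enters~$H$, this directly yields Part~1.

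For Part~2, fix $u_l$ with $y_{u_l,j} > y_{u_l,j-1}$, so case~(b) applies at $u_l$, and write $f_i(z) = \gamma(h_i)\,z + c(h_i)$. By the dichotomy developed above, every index $i \in \{1,\dots,k\}$ satisfies the time-$(j-1)$ relation $y_{u_{i-1},j-1} \ge f_i(y_{u_i,j-1})$, and strictly so at $i = l$. I would chain these inequalities around $C$ starting at $u_l$, relying on the strict monotonicity of each $f_i$ (since $\gamma(h_i) > 0$) to propagate the single strict step through all remaining weak steps:
\[
y_{u_l,j-1} \;\ge\; f_{l+1}(y_{u_{l+1},j-1}) \;\ge\; \dots \;>\; \dots \;\ge\; (f_{l+1}\circ\dots\circ f_{l+k})(y_{u_l,j-1}) \;=\; \gamma(C)\,y_{u_l,j-1} + c_{u_l}(C),
\]
which rearranges to the desired $y_{u_l,j-1}(1-\gamma(C)) > c_{u_l}(C)$. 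The main obstacle is precisely this chaining step: the needed time-$(j-1)$ relation $y_{u_{i-1},j-1} \ge f_i(y_{u_i,j-1})$ is not automatic at indices where case~(b) applies at time~$j$ (since $p_{u_i,j-1}$ may then differ from $h_i$), but the strict form derived directly from case~(b) supplies it and simultaneously furnishes the strictness that drives the conclusion.
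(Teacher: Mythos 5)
Your Parts~2 and~3 are correct and follow essentially the same route as the paper: the one-step relation $y_{u_{i-1},j}\geq\gamma(h_i)\,y_{u_i,j}+c(h_i)$ (which is just the paper's identity $y_{u_i,j}=(y_{u_{i-1},j-1}-c(h_i))/\gamma(h_i)$ combined with monotonicity $y_{u_{i-1},j-1}\leq y_{u_{i-1},j}$) chained once around the cycle gives Part~3, and the time-$(j-1)$ variant with strictness at the updated vertex gives Part~2. A small remark: your derivation of the key inequality via ``trace back to the smallest $j'_i$ where case~(b) set $p_{u_i,\cdot}=h_i$'' is both roundabout and slightly misstated --- $y_{u_i,l}$ need not stay constant on $(j'_i,j]$ if the parent later switches away from $h_i$ and back, and it is the parent-update rule (not the tie-breaking rule) that freezes $y$. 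None of that is needed: the identity $y_{u_i,j}=(y_{u_{i-1},j-1}-c(h_i))/\gamma(h_i)$ holds directly whenever $p_{u_i,j}=h_i$, and monotonicity does the rest.

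The genuine gap is in Part~1. Your contradiction hypothesis is ``$y_{u_i,j}=y_{u_i,j-1}$ for all $u_i\in V(C)$''; from it you correctly deduce $C\subseteq H_{j-1}$, then jump to the earliest $j^\ast$ with $C\subseteq H_{j^\ast}$ and observe that \emph{some} vertex of $C$ must have strictly increased at step~$j^\ast$. That is a true statement about step~$j^\ast$, but it contradicts nothing about step~$j$: your hypothesis says nothing about $y_{\cdot,j^\ast}$ vs.\ $y_{\cdot,j^\ast-1}$. You acknowledge this yourself by restricting to ``the algorithm's intended usage where $j=j^\ast$,'' but the lemma is stated (and used --- e.g., to rule out $\gamma(C)=1$ cycles that may have been sitting in $H$ for several phases) for arbitrary~$j$. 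Two repairs are available. The paper's route: take the \emph{last} $i\leq j$ with $C\not\subseteq H_{i-1}$, get one vertex with $y_{\cdot,i}>y_{\cdot,i-1}$, and propagate the strict increase forward around the cycle for $j-i$ steps (one step per phase) to reach a vertex with $y_{\cdot,j}>y_{\cdot,j-1}$. Alternatively, you could complete your downward argument by noting that under your hypothesis the constancy actually \emph{does} descend: $p_{u_i,j}=p_{u_i,j-1}=h_i$ and $y_{u_i,j}=y_{u_i,j-1}$ force $y_{u_{i-1},j-1}=y_{u_{i-1},j-2}$ for every $i$, so by induction $C\subseteq H_{j'}$ and $y_{u_i,j'}=y_{u_i,j'-1}$ for all $j'\leq j$, which fails at $j'=1$ since $p_{u_i,0}=\perp$. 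Either way, this missing propagation step must be supplied.
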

  \begin{proof}
    Let us first prove that for some $w^*\in V(C)$ we have $y_{w^*,j}>y_{w^*,j-1}$.
    Since $C\not\subseteq H_0$, there has to be some last iteration $i$, $1\leq i\leq j$ when
    $C\not\subseteq H_{i-1}$.
    Since there exists some edge $wu=e\in E(C)\setminus H_{i-1}$, $p_{w,i}\neq p_{w,i-1}$,
    and, as a result, $y_{w,i}>y_{w,i-1}$.
    Let the vertices appearing on $C$ be $w=u_{0},u_1,\ldots,u_{k-1}$, so that
    an edge $e_i=u_iu_{(i+1)\bmod k}$ belongs to $C$ for all $i=0,\ldots,k-1$.

    We have $y_{u_0,i-1}<y_{u_0,i}=(y_{u_{k-1},i-1}-c(e_{k-1}))/\gamma(e_{k-1})$.
    For any $l=1,\ldots,j-i$, by induction on $l$ one obtains $y_{u_l,i+l-1}<y_{u_l,i+l}$:
    \begin{equation*}
      y_{u_l,i+l-1}=(y_{u_{l-1},i+l-2}-c(e_{l-1}))/\gamma(e_{l-1})<(y_{u_{l-1},i+l-1}-c(e_{l-1}))/\gamma(e_{l-1})=y_{u_l,i+l}.
    \end{equation*}
    In particular, for $l=j-i$ we obtain $y_{u_{j-1},j-1}<y_{u_{j-1},j}$.
    By setting $w^*:=u_{j-1}$, we obtain $y_{w^*,j-1}<y_{w^*,j}$, as desired.

    Once again, let us take some $w\in V(C)$ that $y_{w,j}>y_{w,j-1}$. Let the vertices appearing on $C$ be $w=u_{0},u_1,\ldots,u_{k-1}$, so that
    an edge $e_i=u_iu_{(i+1)\bmod k}$ belongs to $C$ for all $i=0,\ldots,k-1$.

    For each $e_i\in E(C)\subseteq E(H_j)$ we have $y_{u_{(i+1)\bmod k},j}=(y_{u_i,j-1}-c(e_i))/\gamma(e_i)\leq (y_{u_i,j}-c(e_i))/\gamma(e_i)$.
    As a result, by chaining these inequalities we obtain
    \begin{equation*}
      y_{w,j}=y_{u_0,j}\leq \frac{y_{u_{1},j}-c(e_1\ldots e_{k-1})}{\gamma(e_1\ldots e_k)}=\frac{\frac{y_{v_0,j-1}-c(e_0)}{\gamma(e_0)}-c(e_1\ldots e_k)}{\gamma(e_1\ldots e_k)}=\frac{y_{w,j-1}-c_w(C)}{\gamma(C)}.
    \end{equation*}
    But $y_{w,j}>y_{w,j-1}$, so we obtain
    \begin{equation*}
      y_{w,j-1}\cdot \gamma(C)<y_{w,j-1}-c_w(C),
    \end{equation*}
    from which the first item of the lemma follows.
    However, observe that if we picked $w$ arbitrarily, we would obtain
    \begin{equation*}
      y_{w,j}\cdot \gamma(C)\leq y_{w,j-1}-c_w(C)\leq y_{w,j}-c_w(C),
    \end{equation*}
    which implies the second item of the lemma.
  \end{proof}
  In particular, Lemma~\ref{l:rw1} implies that $H_j$ does not contain cycles $C$ with $\gamma(C)=1$,
  as otherwise for some $w\in V(C)$ we would have $c_w(C)<0$, which would make the system infeasible.

  \begin{lemma}\label{l:rw2}
    Suppose the algorithm did not terminate before phase $j$ and
    $H_j$ contains a cycle $C$ with $\gamma(C)<1$. Then for some $w\in V(C)$:
    \begin{itemize}
      \item $H_{j-1}$ contains a simple path $P_w=v\to w$ such that $\xmax{v}<\xi_v$,
      \item $c(P_w)+\gamma(P_w)\cdot \cb_w(C)<\xi_v$,
      \item $|V(P_w\cup C)|\geq j/2$.
    \end{itemize}
  \end{lemma}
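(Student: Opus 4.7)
My plan is to identify a suitable $w$ via Lemma~\ref{l:rw1}, build $P_w$ as a pointer chain in the in-degree-at-most-one graph $H_{j-1}$, rule out cyclic termination of that chain using the feasibility hypothesis, derive items~(1)--(2) from a standard $y$-propagation inequality, and finally address item~(3) by a Bellman--Ford-style counting that I expect to be the main obstacle.

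By Lemma~\ref{l:rw1} pick $w\in V(C)$ with $y_{w,j-1}<y_{w,j}$; the lemma's second item then gives $y_{w,j-1}(1-\gamma(C))>c_w(C)$, i.e.\ $y_{w,j-1}>\cb_w(C)$ since $\gamma(C)<1$. Set $v_0:=w$ and $v_{k+1}:=\operatorname{tail}(p_{v_k,j-1})$, halting at index $l$ either because $p_{v_l,j-1}=\perp$ (case~A) or because $v_l=v_{l'}$ for some $l'<l$ (case~B). In case~B the reversed suffix forms a cycle $C'\subseteq H_{j-1}$; since the algorithm did not halt at iteration $j-1$ we have $\gamma(C')\ge 1$; the remark following Lemma~\ref{l:rw1} rules out $\gamma(C')=1$ under feasibility; hence $\gamma(C')>1$, and the third item of Lemma~\ref{l:rw1} applied at iteration $j-1$ gives $y_{v_{l-1},j-1}\le\cb_{v_{l-1}}(C')$. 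The recursion for $y$ together with its monotonicity in the time index imply the chain inequality $y_{v_k,j-1}\gamma(p_{v_k,j-1})+c(p_{v_k,j-1})\le y_{v_{k+1},j-1}$. Iterating this from $v_0$ to $v_{l-1}$ along the walk $P'=v_{l-1}\to\ldots\to v_0$ and combining with $y_{w,j-1}>\cb_w(C)$ yields $\cb_{v_{l-1}}(C')>c(P')+\gamma(P')\cb_w(C)$, which says the pair $(P',C)$ upper-bounds $x_{v_{l-1}}$ strictly below the lower bound imposed by $C'$---infeasibility, contradicting the hypothesis.

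Therefore case~A holds; put $v:=v_l$, so that $p_{v,j-1}=\perp$ forces $y_{v,j-1}=\xi_v$. Iterating the chain inequality now from $v_0=w$ all the way to $v_l=v$ gives $\xi_v\ge y_{w,j-1}\gamma(P_w)+c(P_w)>c(P_w)+\gamma(P_w)\cb_w(C)$, which is item~(2). Item~(1) follows by pulling the bound $x_w\le\cb_w(C)$ back along $P_w$: $\xmax{v}\le c(P_w)+\gamma(P_w)\cb_w(C)<\xi_v$.

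The main obstacle will be item~(3), the bound $|V(P_w\cup C)|\ge j/2$. The intuition is that a small certificate would stabilize the relevant pointer structure in correspondingly few iterations, so termination only at iteration $j$ forces a certificate with $\Omega(j)$ vertices. My plan is to trace the walk $Q_{w,j}$ of length at most $j$ realizing $y_{w,j}$: its tail should wrap around $C$ along the cyclic chain from the proof of Lemma~\ref{l:rw1}, and once it leaves $V(C)$ for the last time its vertices should match those of $P_w$ (since at that point the pointer at time $j$ coincides with the one at time $j-1$). This would give $V(Q_{w,j})\subseteq V(P_w)\cup V(C)$; a careful count of the time decrements along the trace, together with the fact that each vertex is revisited at most a bounded number of times, would then yield $|V(P_w\cup C)|\ge|V(Q_{w,j})|\ge j/2$. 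The delicate bookkeeping of time indices is where I expect the main technical work to lie.
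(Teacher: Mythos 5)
Your treatment of items (1) and (2) is correct and follows the same route as the paper: pick $w\in V(C)$ with $y_{w,j}>y_{w,j-1}$ (hence $y_{w,j-1}>\cb_w(C)$), build $P_w$ by following the pointers $p_{\cdot,j-1}$ in $H_{j-1}$, rule out the cyclic-termination case using feasibility (your choice of $v_{l-1}$ rather than the repeated vertex is immaterial), and chain the inequalities $y_{v_{k+1},j-1}\geq c(e)+\gamma(e)\cdot y_{v_k,j-1}$ along $P_w$ to get $\xi_v\geq c(P_w)+\gamma(P_w)\cdot y_{w,j-1}>c(P_w)+\gamma(P_w)\cdot\cb_w(C)$.

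However, item (3) is not actually proved, and the plan you sketch for it has a genuine gap. You propose to trace the length-$j$ walk $Q_{w,j}$ realizing $y_{w,j}$, show $V(Q_{w,j})\subseteq V(P_w)\cup V(C)$, and then bound $|V(Q_{w,j})|\geq j/2$. But $Q_{w,j}$ is a walk, not a simple path: it can revisit vertices arbitrarily many times (e.g.\ wrap around a short cycle), so there is no reason that a walk of $j$ edges visits $\Omega(j)$ distinct vertices. Your hand-wave that ``each vertex is revisited at most a bounded number of times'' is exactly what would need to be proved, and it is not clear how to establish it.

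The paper's proof of item (3) is both simpler and avoids $Q_{w,j}$ entirely. From items (1)--(2) and Lemma~\ref{l:rw1} one derives
\begin{equation*}
\frac{\xi_v-c(P_wC)}{\gamma(P_wC)}>y_{w,j-1},
\end{equation*}
i.e.\ the specific walk $P_wC\in\bigcup_{v'}\paths_{v',w}^{|P_w|+|C|}$ strictly beats $y_{w,j-1}$. Since $y_{w,j-1}$ is by definition the maximum of $\frac{1}{\gamma(Q)}(\xi_{v'}-c(Q))$ over all walks $Q$ into $w$ with at most $j-1$ edges, the walk $P_wC$ must have at least $j$ edges. Because $P_w$ is a simple path and $C$ a simple cycle, $|P_wC|\leq |V(P_w)|+|V(C)|-1$, and hence $|V(P_w)|+|V(C)|\geq j+1$, which gives $|V(P_w\cup C)|\geq(|V(P_w)|+|V(C)|)/2\geq j/2$. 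This trick --- using $P_wC$ as the ``witness'' walk rather than $Q_{w,j}$ --- sidesteps the revisiting issue entirely, since $P_wC$ is by construction a simple path followed by one traversal of a simple cycle.
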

  \begin{proof}
    First of all, note that $H_{j-1}$ does not contain cycles $C'$ with $\gamma(C')\leq 1$.
    This follows by Lemma~\ref{l:rw1} and the assumption that the algorithm has not terminated before phase $j$.
    
    Let $w\in V(C)$ be such that $y_{w,j}>y_{w,j-1}$, which implies $y_{w,j-1}>\cb_w(C)$. Such a $w$ exists by Lemma~\ref{l:rw1}.
    Let $P_w$ be the subgraph of $H_{j-1}$ obtained by following the reverse walk consisting of
    $p_{w,j-1}=v_1w$, $p_{v_1,j-1}=v_2v_1$, $p_{v_2,j-1}=v_3v_2$, and so on.
    By the structure of $H_{j-1}$, $P_w$ either (i) is a simple path $v\to w$ such that $p_{v,j-1}=\perp$, 
    or (ii) consists of a path $P'=v\to w$ and a cycle $C'=v\to v$ with $\gamma(C')>1$.
    
    We now argue that if the system is feasible, case (ii) cannot in fact arise.
    By Lemma~\ref{l:rw1} we have that $y_{v,j-1}\cdot (1-\gamma(C'))\geq c_w(C')$,
    i.e., $y_{v,j-1}\leq \cb_v(C')$.
    By chaining inequalities corresponding to the edges on $P'$, we have:
    $y_{w,j-1}\leq (y_{v,j-1}-c(P'))/\gamma(P')\leq (\cb_v(C')-c(P'))/\gamma(P')$,
    which is equivalent to
    \begin{equation*}
      c(P')+\gamma(P')\cdot y_{w,j-1}\leq \cb_v(C').
    \end{equation*}
    By applying $y_{w,j-1}>\cb_w(C)$, we have
    \begin{equation*}
      c(P')+\gamma(P')\cdot \cb_w(C) < \cb_v(C').
    \end{equation*}
    The left-hand side above is an upper bound on $x_v$, and
    the right-hand side is a lower bound on $x_v$. Therefore, the system is infeasible, so this case cannot happen.

    Let us thus focus on the case (i). By chaining the inequalities on $P_w$, we obtain
    \begin{equation}\label{eq:rw1}
      c(P_w)+\gamma(P_w)\cdot y_{w,j-1}\leq \xi_v,
    \end{equation}
    and by applying  $y_{w,j-1}>\cb_w(C)$ once again, we have
    \begin{equation*}
      c(P_w)+\gamma(P_w)\cdot \cb_w(C)< \xi_v,
    \end{equation*}
    which proves that a tighter upper bound on $x_v$ than $\xi_v$ is possible, i.e., $\xmax{v}<\xi_v$.
    
    Finally, by combining~\eqref{eq:rw1} and Lemma~\ref{l:rw1}, we have:
    \begin{align*}
      (\xi_v-c(P_wC))\cdot\frac{1}{\gamma(P_wC)}&=(\xi_v-c(P_w)-\gamma(P_w)c_w(C))\frac{1}{\gamma(P_wC)}\\
        &\geq(\gamma(P_w)\cdot y_{w,j-1}-\gamma(P_w)c_w(C))\frac{1}{\gamma(P_wC)}\\
        &=\frac{y_{w,j-1}-c_w(C)}{\gamma(C)}\\
        &>\frac{y_{w,j-1}-y_{w,j-1}(1-\gamma(C))}{\gamma(C)}\\
        &=y_{w,j-1}.
    \end{align*}
    Since $y_{w,j-1}$ is at least $\max_{Q\in\paths_{v,w}^{j-1}} \left\{\frac{\xi_v-c(Q)}{\gamma(Q)}\right\}$,
    we conclude that $P_wC\notin \paths_{v,w}^{j-1}$, i.e., the walk $P_wC$ has at least~$j$ edges.
    Since $P_w$ is a simple path, and $C$ is a simple cycle, 
    $P_wC$ has $|V(P_w)|+|V(C)|$ edges, so we obtain $2|V(P_w\cup C)|\geq |V(P_w)|+|V(C)|\geq j$.
    As a result $|V(P_w\cup C)|\geq j/2$.
  \end{proof}

  \begin{lemma}\label{l:rw3}
    If for some $v$ we have $\xmax{v}<\xi_v$, the algorithm will terminate with a certificate.
  \end{lemma}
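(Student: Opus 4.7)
The plan is a proof by contradiction: assume the algorithm finishes all $n$ phases without ever exposing a cycle of $\gamma<1$ in any $H_j$, while some $v$ nevertheless satisfies $\xmax{v}<\xi_v$. Feasibility of~\eqref{eq:system} combined with Lemma~\ref{l:rw1} rules out $\gamma=1$ cycles in any $H_j$, so under this hypothesis every cycle of every $H_j$ has $\gamma>1$. Applying Shostak's theorem to $v$ yields a simple path $P\in\paths_{v,w}$ and a simple cycle $C\in\paths_{w,w}$ with $\gamma(C)<1$ and $B(P):=(\xi_v-c(P))/\gamma(P)>\cb_w(C)$; since $P$ is a walk of length at most $n-1$ and $y_{w,j}$ is nondecreasing in $j$, this forces $y_{w,n}\ge B(P)>\cb_w(C)$.

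Next I would trace backwards in $H_n$ from $w$: put $w_0=w$ and, while $p_{w_i,n}\neq\perp$, let $w_{i+1}$ be its tail. Writing $e_i=w_iw_{i-1}$ for the $i$-th trace edge, the identity $y_{w_{i-1},n}=(y_{w_i,j^*-1}-c(e_i))/\gamma(e_i)$ at the last-update phase $j^*\le n$, together with $y_{w_i,j^*-1}\le y_{w_i,n}$, gives $y_{w_{i-1},n}\le(y_{w_i,n}-c(e_i))/\gamma(e_i)$, which chains along the trace to $y_{w,n}\le(y_{w_k,n}-c(W))/\gamma(W)$ for the assembled walk $W:w_k\to w$. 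If the trace ever revisits a vertex, it exposes a cycle $C'\subseteq H_n$ that by assumption has $\gamma(C')>1$; Lemma~\ref{l:rw1} then yields $y_{w_l,n}\le\cb_{w_l}(C')$ for every $w_l\in V(C')$. Because $\gamma(C')>1$ makes $\cb_{w_l}(C')$ a valid lower bound on $x_{w_l}$ while the residual path $W':w_l\to w$ propagates upper bounds on $x$, I would chain this to $y_{w,n}\le(\cb_{w_l}(C')-c(W'))/\gamma(W')\le\xmin{w}\le\xmax{w}\le\cb_w(C)$, contradicting $y_{w,n}>\cb_w(C)$.

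The principal obstacle is the opposite case, where the trace terminates at a root $u=w_k$ with $p_{u,n}=\perp$ and $y_{u,n}=\xi_u$; the chain then only yields $\xi_u>c(W)+\gamma(W)\cdot\cb_w(C)\ge\xmax{u}$, so $u$ is itself a violator but no contradiction is immediate. My plan is to iterate: apply Shostak to $u$ to extract $(P_u,C_u)$ with cycle endpoint $w_u$. If $w_u=u$, then the closed walk $C_u$ at $u$ satisfies $B(C_u)>\xi_u$ (because $\cb_u(C_u)<\xi_u$), forcing $y_{u,|C_u|}>\xi_u$ and contradicting $y_{u,n}=\xi_u$, since once $p_{u,\cdot}$ leaves $\perp$ it stays non-$\perp$. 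Otherwise $w_u\neq u$, $y_{w_u,n}>\cb_{w_u}(C_u)$, and I restart the backward trace at $w_u$; this either discharges via the cycle branch above or discovers yet another root violator of $H_n$. The main technical challenge will be ruling out a nontrivial loop in this sequence of discovered roots, plausibly via a potential argument showing that the bounds $B(\cdot)$ accumulated along successive iterations strictly increase, so that the finite set of $H_n$-roots must eventually be exhausted, forcing the cycle branch to trigger.
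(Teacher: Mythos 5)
Your overall strategy is a genuinely different route from the paper's: the paper proves this lemma by exhibiting a one-to-one correspondence between our certificates and negative generalized augmenting paths in an auxiliary graph~$G'$, and then simply cites the correctness result of Restrepo and Williamson~\cite[Lemma~4.6]{RestrepoW09}. You instead attempt a self-contained direct proof by contradiction. Several pieces of your argument are sound: the derivation $y_{w,n}\ge B(P)>\cb_w(C)$ from Shostak applied to the violator $v$; the chain $y_{w_{i-1},n}\le(y_{w_i,n}-c(e_i))/\gamma(e_i)$ along the backward trace; the cycle branch, where a $\gamma>1$ cycle $C'$ in $H_n$ together with $\xmin{w}\le\xmax{w}\le\cb_w(C)$ yields the desired contradiction; and the self-loop branch $w_u=u$, where $\cb_u(C_u)<\xi_u$ gives $B(C_u)>\xi_u$ and hence $y_{u,|C_u|}>\xi_u$, contradicting $p_{u,n}=\perp$.

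However, the root-iteration step has a genuine gap, which you yourself flag. When the trace terminates at an $H_n$-root $u$ with $\xmax{u}<\xi_u$, you propose to restart from $u$ and assert that a potential argument on the quantities $B(\cdot)$ ``plausibly'' rules out looping. This is not established, and it is not obvious how to make it work: the map $v_i\mapsto v_{i+1}$ (Shostak pair for $v_i$, then trace the resulting $w_i$ to its $H_n$-root) is deterministic, so in the absence of a monotone potential it simply cycles among finitely many roots without producing a certificate. The inequalities you actually have in hand, namely $\xi_{v_{i+1}}>c(W_i)+\gamma(W_i)\cdot\cb_{w_i}(C_i)$ and $\xi_{v_i}>c(P_i)+\gamma(P_i)\cdot\cb_{w_i}(C_i)$, do not obviously chain into a strictly increasing potential over the sequence of roots, and they involve quantities (the Shostak paths $P_i$ and traces $W_i$) that are not comparable across iterations in any evident way. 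Until this termination argument is supplied, the proof is incomplete: the very case you defer is exactly where the hard part of the Restrepo--Williamson proof lives, and ``plausibly via a potential argument'' is not a substitute for one.

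One further point worth being careful about if you pursue this route: when you apply Shostak to the newly discovered root $u$, you tacitly assume the minimizing decomposition is either a nonempty simple path $P_u$ to some $w_u\ne u$ or the empty path at $u$ itself. Since the paper's convention allows $\paths_{v,v}$ to include simple cycles (not only the empty path), you should make explicit that a cycle $P_u\in\paths_{u,u}$ with $\gamma(P_u)\ge 1$ would also place $w_u=u$, and check that the $B(C_u)>\xi_u$ computation still goes through (it does, because only the cycle $C_u$ with $\gamma(C_u)<1$ enters that bound, but the case split as written does not quite cover this).
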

  \begin{proof}
    The ``$\implies$'' part is obvious. For the ``$\impliedby$'' part we refer to~\cite[Lemma~4.6]{RestrepoW09}.
    We only argue that our problem is equivalent to the negative-cost generalized augmenting
    path detection problem, as defined in~\cite{RestrepoW09}. In this problem, one
    is given a target vertex $t\in V$ and is asked to find a so-called \emph{negative GAP},
    which is a closed walk $C=w\to w$ with $\gamma(C)>1$ along with
    a walk $P=w\to t$ such that $c(P)<\cb_w(C)$.

    Consider the graph $G'=(V',E')$, where $V'=V'\cup \{t\}$, with functions $c':E'\to \mathbb{R}$, $\gamma':E'\to\mathbb{R}$.
    For each $uv=e\in E$, we have a corresponding $uv=e'\in E'$ with
    $c'(e)=c(e)/\gamma(e)$ and $\gamma'(e)=1/\gamma(e)$.
    Moreover, for each $v\in V$, we have an edge $vt=e'_v\in E'$
    with $c'(e'_v)=-\xi_v$, $\gamma'(e'_v)=1$.
    Note that we have a negative GAP in $G'$ iff for some walk $P=w\to v$
    and a cycle $C$ with $\gamma'(C)>1$, we have
    $c'(Pe'_v)<\cb'_w(C)$, or equivalently $c'(P)-\gamma'(P)\cdot \xi_v<\frac{c'_w(C)}{1-\gamma'(C)}$.
    This in turn is equivalent to
    \begin{equation*}
      \frac{1}{\gamma(P^R)}\cdot (c(P^R)-\xi_v)<\frac{c_w(C^R)}{\gamma(C^R)-1}.
    \end{equation*}
    Therefore we have, $c(P^R)+\gamma(P^R)\cdot \cb_w(C^R)<\xi_v$ and $P^R$ is a $v\to w$ walk, and $C^R$ is a $w\to w$ closed walk
    with $\gamma(C^R)<1$. This proves that there is a 1-1 correspondence
    between negative GAPs in $G'$, and certificates that we seek for in $G$.
    
    Finally, the algorithm that we described is equivalent to that of~\cite[Figure~2]{RestrepoW09}:
    applied to $G'$, that algorithm computes walks $Q_{w,j}'=w\to v$ minimizing $c'(Q_{w,j}')-\gamma'(Q_{w,j}')\cdot \xi_v$,
    which is equivalent to maximizing $\xi_v-c'(Q_{w,j}')=\frac{1}{\gamma((Q_{w,j}')^R)}\left(\xi_v-c((Q_{w,j}')^R)\right)$.
    So by substituting $Q_{w,j}:=(Q_{w,j}')^R$ the algorithm computes walks $Q_{w,j}=v\to w$ in $G$ 
    maximizing $\frac{1}{\gamma(Q_{w,j}}\left(\xi_v-c(Q_{w,j})\right)$, which is precisely what
    our version of the algorithm does. The tie-braking rules are the same as well.
  \end{proof}

  To guarantee consistency, when picking a vertex $w$ in the above lemma to produce
  a certificate $(P_w,C)$, we use $w$ with $y_{w,j-1}>\cb_w(C)$
  that is minimal according to some fixed order on $V$.

  \begin{lemma}
    Let $\xmax{v}<\xi_v$ and $\xmax{u}\geq \xi_u$ for all $u\in V\setminus\{v\}$.
    Then, the algorithm terminates with a certificate $(P_w,C)$ that does not depend
    on $\{\xi_u:u\in V\setminus\{v\}\}$.
  \end{lemma}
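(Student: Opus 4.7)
The plan is to couple the algorithm's execution under the given $\{\xi_u\}$ with an alternative execution under a vector $\{\xi'_u\}$ with $\xi'_v=\xi_v$ and $\xi'_u\leq\xmax{u}$ chosen arbitrarily for $u\neq v$, and to show that both runs produce the same certificate; the claim then follows. The single structural fact driving the entire argument is the propagation bound $\xmax{u}\leq c(Q)+\gamma(Q)\cdot\xmax{w}$ for any walk $Q=u\to w$, which combined with $\xi_u\leq \xmax{u}$ forces $(\xi_u-c(Q))/\gamma(Q)\leq \xmax{w}$. Consequently, any value $y_{w,k}$ strictly exceeding $\xmax{w}$ must be realized by a walk starting from $v$, and crossing one pointer edge backwards preserves this ``excess'' invariant.

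Let $B_k=\{w\in V:y_{w,k}>\xmax{w}\}$ and define $B'_k$ analogously. The core of the proof is the induction on $k$ establishing that for every $w\in B_k$, $y_{w,k}=y'_{w,k}$ and $p_{w,k}=p'_{w,k}$ (and hence $B_k=B'_k$). The base $k=0$ gives $B_0=\{v\}$ trivially. For the inductive step, in the inheritance branch $y_{w,k}=y_{w,k-1}$ the hypothesis transfers through monotonicity $y'\leq y$ together with $y'_{w,k}\geq y'_{w,k-1}=y_{w,k-1}=y_{w,k}$, which also forces $y'_{w,k}=y'_{w,k-1}$ so the same pointer is inherited. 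In the update branch $y_{w,k}>y_{w,k-1}$ with argmax edge $e^*=u^*w$, propagation gives $y_{u^*,k-1}=c(e^*)+\gamma(e^*)\cdot y_{w,k}>\xmax{u^*}$, so $u^*\in B_{k-1}$, and by induction $y_{u^*,k-1}=y'_{u^*,k-1}$; the same propagation applied to any edge $uw$ tied at the maximum in the original run forces $u\in B_{k-1}$, and no new tie can arise in the alternative run since $y'\leq y$, so the tied-edge set and hence the tie-broken argmax coincide.

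Next, for every cycle $C\subseteq H_j$ with $\gamma(C)<1$ I claim $V(C)\subseteq B_j$. Lemma~\ref{l:rw1} provides $w^*\in V(C)$ with $y_{w^*,j-1}>\cb_{w^*}(C)\geq \xmax{w^*}$, so $w^*\in B_j$. Its predecessor $u$ on $C$ is pointed to by $p_{w^*,j}=uw^*$, a pointer fixed at some phase $j''\leq j$ at which the defining recurrence $y_{w^*,j''}=(y_{u,j''-1}-c(uw^*))/\gamma(uw^*)$ together with $y_{w^*,j''}=y_{w^*,j}>\xmax{w^*}$ yields $y_{u,j''-1}>\xmax{u}$, placing $u\in B_j$. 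Iterating around $C$ covers all of $V(C)$, and coupling this with the inductive claim shows $H_j$ and $H'_j$ contain exactly the same cycles with $\gamma<1$.

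Putting the pieces together, both runs therefore terminate at the same phase $j$ (the smallest one where such a cycle exists) and select the same canonically chosen cycle~$C$; the minimum-order vertex $w\in V(C)$ with $y_{w,j-1}>\cb_w(C)$ is common to both runs because $V(C)\subseteq B_{j-1}$ makes the relevant $y$-values agree; and the trace-back of $P_w$ through $H_{j-1}$ stays inside $B_{j-1}$ all the way back to $v$ by the same one-edge propagation argument already used in Lemma~\ref{l:rw2}, so every pointer it consults is identical in the two runs and $(P_w,C)$ coincides. The main subtlety will be consistency of the argmax and its tie-breaking between the two runs, which is handled entirely by the single observation that one backward pointer step preserves the invariant $y_{\cdot,\cdot}>\xmax{\cdot}$.
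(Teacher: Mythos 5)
Your argument is a genuinely different proof from the paper's, built around the invariant ``excess set'' $B_k=\{w:y_{w,k}>\xmax{w}\}$ and the observation that on $B_k$ both runs are locked to the same values and pointers. The paper instead fixes a single reference run $\xi^*$ with $\xi^*_u=\xmax{u}$ for $u\neq v$, proves the certificate for $\xi^*$ is built from walks $Q^*_{\cdot,\cdot}$ that start at $v$, and then transfers this to an arbitrary $\xi'\leq\xi^*$ using monotonicity of $y$ in $\xi$. Your version is more symmetric (it never needs to single out $\xi^*$) and avoids the ``can't terminate earlier/later'' case analysis. Both are reasonable.

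However, there is one place where your argument, as written, is wrong: you invoke ``monotonicity $y'\leq y$'' twice (once in the inheritance branch, once to argue no new ties arise). This is unjustified. You are coupling two \emph{arbitrary} admissible vectors $\xi$ and $\xi'$, and the constraint is only $\xi_u\leq\xmax{u}$, $\xi'_u\leq\xmax{u}$ for $u\neq v$; these two vectors need not be pointwise comparable, so neither $y'\leq y$ nor $y\leq y'$ holds in general. (The paper can use monotonicity precisely because it compares against the pointwise-maximal $\xi^*$, which dominates every other admissible vector.) The fix is within reach using the tool you already set up: suppose $w\in B_{k-1}$ with $y_{w,k}=y_{w,k-1}$ but $y'_{w,k}>y_{w,k}$. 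Then $y'_{w,k}>\xmax{w}$ is attained via some edge $uw=e$ with $(y'_{u,k-1}-c(e))/\gamma(e)=y'_{w,k}>\xmax{w}$, which by the propagation inequality $\xmax{u}\leq c(e)+\gamma(e)\xmax{w}$ forces $y'_{u,k-1}>\xmax{u}$, i.e.\ $u\in B'_{k-1}=B_{k-1}$; by the induction hypothesis $y_{u,k-1}=y'_{u,k-1}$, so the same edge would already drive $y_{w,k}$ strictly above $y_{w,k-1}$, a contradiction. The same fix handles tie-set equality in the update branch: any edge that is tied at the maximum in \emph{either} run has its tail in $B_{k-1}=B'_{k-1}$, where $y$ and $y'$ agree, so the tied sets coincide without any monotonicity assumption. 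Once you replace the monotonicity citations with this local argument, your proof goes through.
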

  \begin{proof}
    Suppose the algorithm run for $\xi$ produces a certificate $(P_w,C)$ in phase $j$.
    Recall that by construction, $P_w=e_1\ldots e_k$, where $e_i=v_{i-1}v_i$, and $v_{0}=v$, $v_k=w$,
    and $y_{v,j-1}=\xi_v$.
    For any $i$, we have 
    \begin{equation*}
      y_{v_i,j-1}=(y_{v_{i-1},j-2}-c(e_i))/\gamma(e_i)\leq (y_{v_{i-1},j-1}-c(e_i))/\gamma(e_i).
    \end{equation*}
    Therefore, by chaining such inequalities, for any $i$ we have:
    \begin{equation*}
      y_{w,j-1}\leq (y_{v_{i-1},j-1}-c(e_i\ldots e_k))/\gamma(e_i\ldots e_k).
    \end{equation*}

    Suppose for some $v_i$, $i\geq 0$, $Q_{v_i,j-1}$ is not a $v\to v_i$ path, but a $u\to v_i$
    path, for some $u\neq v$.
    Then,
    \begin{equation*}
      y_{v_i,j-1}=(\xi_u-c(Q_{v_i,j-1}))/\gamma(Q_{v_i,j-1}),
    \end{equation*}
    and thus
    \begin{equation*}
      y_{w,j-1}\leq (\xi_u-c(Q_{v_{i},j-1}e_{i+1}\ldots e_k))/\gamma(Q_{v_{i},j-1}e_{i+1}\ldots e_k).
    \end{equation*}
    As a result, we obtain:
    \begin{equation*}
      c(Q_{v_{i},j-1}e_{i+1}\ldots e_k)+\gamma(Q_{v_{i},j-1}e_{i+1}\ldots e_k)\cdot y_{w,j-1}\leq \xi_u.
    \end{equation*}
    This, in combination with $y_{w,j-1}>\cb_w(C)$, shows that $\xmax{u}<\xi_u$, a contradiction.
    This proves that $Q_{v_i,j-1}$ is a $v\to v_i$ path for all $i$.

    Next we prove that for each $z\in V(C)$, $Q_{z,j}$ is a $v\to z$ walk.
    Let $C=e_0\ldots e_{k-1}$ such that $e_i=u_iu_{(i+1)\bmod{k}}$.
    Split the sequence of vertices appearing on $C$ (in order) into maximal segments $u_g,\ldots,u_h$ such that
    $y_{u_g,j}>y_{u_{g},j-1}$ and $y_{u_l,j}=y_{u_l,j-1}$ for all
    $l=g+1,\ldots,h$.
    Such a split is possible since $y_{w,j}>y_{w,j-1}$.
    For each segment, we have $y_{u_{(h+1)\bmod k},j}>y_{u_{(h+1)\bmod k},j-1}$.
    
    Consider an arbitrary such segment. We have already proved
    that $Q_{u_{(h+1)\bmod k},j-1}$ is a $v\to u_{(h+1)\bmod k}$ path, since the choice
    of the vertex $w$ only required $y_{w,j}>y_{w,j-1}$.
    We also have $Q_{u_h,j}=Q_{u_{(h+1)\bmod k},j-1}\cdot e_h$ since $p_{u_h,j}=e_h$.
    Hence, $Q_{u_h,j}$ is a $v\to u_h$ path.
    But $y_{u_h,j}=y_{u_h,j-1}$, so in fact $Q_{u_h,j-1}=Q_{u_h,j}$ is also a $v\to u_h$ path.
    Analogously, we inductively prove that subsequently $Q_{u_{h-1},j}=Q_{u_{h-1},j-1}, Q_{u_{h-2},j}=Q_{u_{h-2,j-1}}, \ldots, Q_{u_{g+1},j}=Q_{u_{g+1},j-1}$
    are also walks starting in $v$.
    Finally, from $Q_{u_{g+1},j-1}=v\to u_{g+1}$ we conclude that $Q_{u_g,j}=v\to u_g$.

    Now, consider a vector $(\xi^*)_{u\in V}$ such that $\xi_v^*=\xi_v$
    and $\xi^*_u=\xmax{u}$ for all $u\in V\setminus\{v\}$.
    For any other vector $(\xi')_{v\in V}$ that satisfies
    the requirements of the lemma, we have $\xi'_u\leq \xi^*_u$ for all $u\in V$.
    By Lemma~\ref{l:rw3}, running the algorithm with the input $\xi^*$
    produces some certificate $(P_s^*,C^*)$ for some $s\in V$, and
    running it with the input $\xi'$ produces a certificate $(P_t',C')$ for some
    $t\in V$.

    Denote by $Q_{w,j}^*,y_{w,j}^*$ and $Q_{w,j}',y_{w,j}'$ the respective paths $Q_{w,j}$ and values $y_{w,j}$
    for the inputs $\xi^*$ and $\xi'$ respectively.
    Note that for any $w,j$ we have $y_{w,j}^*\geq y_{w,j}'$ by the definition of $\xi^*$.
    Moreover, since $\xi^*_v=\xi'_v$, we have that $Q_{w,j}^*=v\to w$ implies
    that $Q_{w,j}'=Q_{w,j}^*$.
    
    Suppose that the certificate $(P_s^*,C^*)$ is produced in phase $j$ and $s$ is chosen minimal possible.
    Then, as we have proved, for all $z\in V(P_s^*)$, $Q_{z,j-1}^*$ is a $v\to z$ walk.
    Moreover, for all $z\in V(C^*)$, $Q_{z,j}^*$ is a $v\to z$ walk.
    Consequently, $Q_{z,j-1}'=Q_{z,j-1}^*$ for all $z\in V(P_s^*)$, and $Q_{z,j}'=Q_{z,j}^*$
    for all $z\in V(C^*)$.
    As a result, the same cycle $C^*$ exists in $H_j$ and the same path $P_s^*$
    exists in $H_{j-1}$ for the input $\xi'$. We conclude that the algorithm
    run for $\xi'$ will terminate with $(P_s^*,C^*)$ unless it terminates earlier,
    i.e., in an earlier phase or in the phase $j$ with a smaller vertex $t$.

    Now suppose that for the input $\xi'$, the algorithm terminates earlier (at phase $j$)
    than if it was run with the input $\xi^*$.
    Note that then we need to either have (i) $Q_{z,j-1}^*=w\to z$ for some $z\in V(P_t')$ and $w\in V\setminus\{v\}$,
    or we have (ii) $Q_{z,j}^*=w\to z$ for some $z\in V(C')$ and $w\in V\setminus\{v\}$ --
    otherwise the run with the input $\xi^*$ would have terminated no later than that for $\xi'$.
    
    Consider the case (i). Then, $y_{z,j-1}^*\geq y_{z,j-1}'$ and if $P_t'=R_1R_2$, where $R_1=v\to z$ and $R_2=z\to t$,
    we would obtain similarly as before $c(Q_{z,j-1}\cdot R_2)+\gamma(Q_{z,j-1}\cdot R_2)\cdot y'_{t,j-1}\leq \xi_w$,
    which, along with $y'_{t,j-1}<\cb_t(C')$ would imply $\xmax{w}<\xi_w$, a contradiction.

    Now let us focus on (ii). Let $C'=C_1C_2$, where $t\to z=C_1=e_0\ldots e_l$ and $C_2=z\to t$.
    Let $e_i=u_iu_{i+1}$, so that $u_0=t$ and $u_{l+1}=z$.
    By chaining the inequalities on $C_2$ we obtain
    $\cb_t(C')<y'_{t,j-1}<y'_{t,j}\leq (y'_{z,j}-c(C_2))/\gamma(C_2)\leq (y^*_{z,j}-c(C_2))/\gamma(C_2)=(\xi^*_w-c(Q_{z,j}^*C_2))/\gamma(Q_{z,j}^*C_2)$.
    As a result, we have $c(Q_{z,j}^*C_2)+\gamma(Q_{z,j}^*C_2)\cdot \cb_t(C')<\xi^*_w$.
    So, $\xmax{w}<\xi^*_w$, a contradiction.

    We conclude that for any input $\xi'$, the algorithm has to terminate with the same certificate
    as for the input $\xi^*$.
    As a result, it terminates with the same certificate for any $\xi$ satisfying the requirements
    of the lemma.
  \end{proof}

\paragraph{Acknowledgments.}
We thank the anonymous reviewers for their very useful comments.

\small
\bibliographystyle{alpha}
\bibliography{references}

\end{document}